\pdfoutput=1

\documentclass[11pt]{article}

\usepackage{amsmath,amssymb,amsfonts,amsthm}
\usepackage{graphicx}
\usepackage{url,hyperref}
\usepackage[ruled,vlined,linesnumbered,noend]{algorithm2e}
\usepackage{tcolorbox}
\usepackage{longtable}
\usepackage{tikz}
\usepackage{cleveref}
\usepackage{thm-restate}
\usepackage[margin=1in]{geometry}

\usetikzlibrary{positioning}

\title{Dynamic Dominating Set in Uniformly Sparse Graphs}

\author{Anton Bukov\thanks{Tel Aviv University, Israel. \href{mailto:bukov.anton@gmail.com}{bukov.anton@gmail.com}} \and
Shay Solomon\thanks{Tel Aviv University, Israel. \href{mailto:shayso@tauex.tau.ac.il}{shayso@tauex.tau.ac.il}}}

\date{}

\newcommand{\wts}{\omega}
\newcommand{\dom}{\operatorname{dom}}
\newcommand{\Dom}{\operatorname{Dom}}
\newcommand{\OPT}{OPT}

\newcommand{\sets}{\mathcal{S}}

\newcommand{\univ}{\mathcal{U}}
\newcommand{\eps}{\epsilon}

\newcommand{\lev}{\mathsf{lev}}
\newcommand{\zlev}{\mathsf{zlev}}
\newcommand{\bs}{\mathsf{base}}
\newcommand{\del}{\mathsf{Delete}}
\newcommand{\ins}{\mathsf{Insert}}

\newcommand{\reset}{\mathsf{Rebuild}}
\newcommand{\water}{\mathsf{WaterFilling}}
\newcommand{\forget}{\text{forget}}
\newcommand{\passive}{\text{passive}}
\newcommand{\level}{\text{level}}
\newcommand{\old}{\mathrm{old}}
\newcommand{\sz}[1]{\left|#1\right|}

\newcommand{\ceil}[1]{\left\lceil #1 \right\rceil}
\newcommand{\floor}[1]{\left\lfloor #1 \right\rfloor}

\newcommand{\wat}[1]{\lambda (1 + \epsilon)^{- #1}}
\newcommand{\viewx}[2]{\hat{x}_{#2}(#1)}
\newcommand{\view}[2]{\widehat{\lev}_{#1}(#2)}
\newcommand{\Forg}[1]{F(#1)}
\newcommand{\Ign}[2][]{I_{#1}(#2)}

\newcommand{\bt}{\mathsf{bot}}

\DeclareMathOperator{\poly}{poly}

\newtheorem{theorem}{Theorem}[section]
\newtheorem{lemma}{Lemma}[section]
\newtheorem{definition}{Definition}[section]
\newtheorem{corollary}{Corollary}[section]
\newtheorem{claim}{Claim}[section]
\newtheorem{observation}{Observation}[section]
\newtheorem{invariant}{Invariant}[section]
\newtheorem{invariantintro}[invariant]{Invariant}
\newtheorem{property}{Property}[section]
\newtheorem{step}{Step}[section]
\newtheorem{assumption}{Assumption}[section]
\newtheorem{question}{Question}[section]
\theoremstyle{definition}
\newtheorem{remark}[theorem]{Remark}
\newtheorem{note}[theorem]{Note}

\newenvironment{claimproof}[1][\proofname]{%
  \begin{proof}[#1]%
}{%
  \end{proof}%
}

\crefname{claim}{Claim}{Claims}
\Crefname{claim}{Claim}{Claims}
\crefname{invariant}{Invariant}{Invariants}
\Crefname{invariant}{Invariant}{Invariants}
\crefname{invariantintro}{Invariant}{Invariants}
\Crefname{invariantintro}{Invariant}{Invariants}
\crefname{property}{Property}{Properties}
\Crefname{property}{Property}{Properties}
\crefname{step}{Step}{Steps}
\Crefname{step}{Step}{Steps}
\crefname{assumption}{Assumption}{Assumptions}
\Crefname{assumption}{Assumption}{Assumptions}
\crefname{question}{Question}{Questions}
\Crefname{question}{Question}{Questions}
\crefname{observation}{Observation}{Observations}
\Crefname{observation}{Observation}{Observations}

\begin{document}

\maketitle

\begin{abstract}
  In the dynamic \emph{minimum dominating set (MDS)} problem, the goal is to efficiently maintain an approximate MDS in an $n$-vertex graph with vertex costs in $[1/C,1]$ undergoing edge insertions and deletions.
  In STACS'19 \cite{HIPS19} it was shown that an $O(\log n)$-approximate MDS can be maintained in {\em unweighted graphs} with $O(\Delta \cdot \log n)$ update time, where $\Delta$ is an upper bound on the maximum degree throughout the update sequence, and in STOC'23 \cite{solomon2023dynamic} this was extended to weighted graphs and improves the approximation guarantee to $(1+\eps)\ln \Delta$.

  Is it possible to achieve $\mathrm{poly}(\log n)$ update time without any dependence on $\Delta$, for any nontrivial graph family? This basic question has remained open even in {\bf forests} and even for {\bf unweighted instances}.
  The {\em arboricity} $\alpha=\alpha(G)$ of a graph $G$ is the minimum number of edge-disjoint forests whose union is $G$, and is a standard measure of sparsity. While $\alpha$ is bounded by $\Delta$ in any graph, various real-world graph families exhibit a significant gap between $\alpha$ and $\Delta$.

  In this work, we show that one can maintain an $O(\alpha)$-approximate MDS with update time $O(\alpha \cdot \log (Cn))$, for dynamic graphs whose {\em arboricity} is bounded by $\alpha$ throughout the update sequence. This replaces the dependence on $\Delta$ in prior update bounds with $\alpha$, while also improving the approximation guarantee for bounded-arboricity graphs. In particular, for any graph family of constant arboricity, such as planar graphs, bounded treewidth graphs, and more generally graphs excluding a fixed minor, our algorithm gives an $O(1)$-approximation with $O(\log (Cn))$ update time. To achieve this result, our algorithm departs from prior {\em greedy-based} approaches, relying instead on the {\em primal-dual framework} and new structural insights specific to bounded arboricity graphs.
\end{abstract}

\clearpage
\tableofcontents
\clearpage

\section{Introduction}

This work studies the classic problem of {\em minimum dominating set (MDS)}, where we are given an $n$-vertex $m$-edge graph $G = (V,E)$ in which each vertex $v$ has a cost $c_v \in [\frac{1}{C}, 1]$. A subset of vertices $D \subseteq V$ is called a {\em dominating set} if, for every vertex $v \in V$, either $v \in D$ or $v$ has a neighbor in $D$. The goal is to compute a dominating set of minimum total cost.

This problem is closely related to the {\em minimum set cover (MSC)} problem: given a set system $(\univ, \sets)$, where $\univ$ is a universe of $n$ elements and $\sets$ is a family of $m$ sets, each with cost in $[\frac{1}{C}, 1]$, the goal is to compute a minimum-cost subfamily covering $\univ$. The two problems admit approximation-preserving reductions and have been studied in many cases together.

A classical {\em greedy} algorithm achieves logarithmic approximations for both problems. More precisely, letting $\Delta = \Delta(G)$ denote the maximum degree of $G$ and $D$ the maximum set size, the greedy algorithm gives an $H(\Delta+1)$-approximate MDS and an $H(D)$-approximate MSC, where $\forall k, H(k) := \sum_{i=1}^k \frac{1}{i}$. A {\em primal-dual (PD)} algorithm yields a $(\Delta+1)$-approximate MDS and an $f$-approximate MSC, where $f$ is the {\em frequency} of the set system.

The greedy approach is essentially optimal for both problems, as one cannot achieve better approximations (to within a factor of $1-\epsilon$) unless P = NP \cite{williamson2011design,dinur2014analytical,CC08}.
The PD approach behaves differently: an $(f-\epsilon)$-approximation cannot be achieved in polynomial time under the unique games conjecture \cite{khot2008vertex}, while the $(\Delta+1)$-approximation obtained for MDS by the PD algorithm is exponentially worse than the greedy guarantee.
However, in {\bf uniformly sparse graphs} the situation changes,
and the PD approach can be used to obtain significantly improved approximations for MDS, which is the focus of this work.

The MDS problem has been studied extensively in various settings and computational models. For brevity, the literature survey that follows is restricted to previous work on the MDS problem either for uniformly sparse graphs or in the dynamic setting.

\medskip
\noindent
{\bf Uniformly sparse graphs.~}
A standard notion that captures sparsity in a uniform manner is {\em arboricity}.
A graph $G=(V,E)$ has {\em arboricity} $\alpha=\alpha(G)$ if $\alpha = \max_{S \subseteq V} \left\lceil \frac{m_S}{n_S-1} \right\rceil$, where $m_S$ and $n_S$ denote the number of edges and vertices in the subgraph induced by $S$, respectively.
By the Nash-Williams Theorem~\cite{nash1964}, $\alpha$ is the minimum number of edge-disjoint forests into which the edges of $G$ can be partitioned.
Up to constants, arboricity equals the maximum average degree over all subgraphs and other basic measures, such as {\em degeneracy} and {\em thickness}. While $\alpha(G) \le \Delta(G)$ holds for every graph, the gap between the two parameters can be large, e.g., for the $n$-star, $\alpha=1$ but $\Delta = n-1$. Significant gaps between $\Delta$ and $\alpha$ are common in real-world networks, including social networks, the world wide web graph, and transaction graphs, as well as in stochastic settings such as the preferential attachment model.
The family of bounded-arboricity graphs is broad: even for $\alpha = O(1)$, it includes all graphs excluding a fixed minor, and hence all bounded-treewidth and bounded-genus (including planar) graphs.

In the classic sequential setting, an
$O(\alpha)$-approximation can be computed in near-linear time \cite{BU17,MSW21,sun2021,DGI22}; in particular, Sun \cite{sun2021}
gave a fast PD $(\alpha+1)$-approximation algorithm for weighted graphs of arboricity $\alpha$. This bound is almost tight: one cannot achieve $(\alpha-\eps)$-approximation in polynomial time under the unique games conjecture \cite{BU17}.

In the $\mathsf{CONGEST}$ model of distributed computing,
improving over \cite{LW10,MSW21},
Dory et al.\ \cite{DGI22} presented a deterministic $(2\alpha+1)(1+\epsilon)$-approximation algorithm running in $O(\epsilon^{-1} \log \Delta)$ rounds and a randomized
$\alpha(1 + o(1))$-approximation algorithm running in $O(\alpha \log \Delta)$ rounds.

\medskip
\noindent
{\bf The dynamic setting.~}
We study the MDS problem in the {\em standard dynamic setting}, where the graph undergoes a sequence of edge insertions and deletions on a fixed vertex set, starting from an empty graph. The goal is to maintain an approximate MDS with low update time; we focus on optimizing the {\em amortized} rather than {\em worst-case} update time; for brevity, we will mostly ignore the distinction between these two update time measures.

There are only a few results on the MDS problem in the dynamic setting, and they are obtained by adapting dynamic MSC algorithms. In contrast, the dynamic MSC problem has been studied extensively over the past decade. Most of the known algorithms are PD-based \cite{bhattacharya2015design,gupta2017online,abboud2019dynamic,assadi2021fully,bhattacharya2019new,bhattacharya2021dynamic,bukov2025nearly,solomon2024lossless}, and they provide $(1+\eps)f$-approximate MSC with increasingly better update times,\footnote{For brevity, we will assume in this section that $\epsilon$ is a small {\em constant}.} culminating in deterministic and randomized update times of $O(f \log f)$ and $O(f \cdot \log^*f)$, respectively \cite{bukov2025nearly}. More relevant to this work are the greedy-based algorithms \cite{gupta2017online,solomon2023dynamic,solomon2024lossless},
which provide $(1+\eps)\ln n$-approximate MSC with update time $O(f \log n)$.

By adapting the greedy-based dynamic MSC algorithm of \cite{gupta2017online}, Hjuler et al.\ \cite{HIPS19} showed that an
$O(\log n)$-approximate MDS can be maintained with update time $O(\Delta \log n)$ in unweighted graphs, where $\Delta$ is an upper bound on the maximum degree throughout the update sequence. Solomon and Uzrad \cite{solomon2023dynamic} (see also \cite{solomon2024lossless}) designed refined greedy-based dynamic algorithms for MSC and MDS in weighted instances; for weighted MDS, they achieved $(1+\eps)\ln \Delta$-approximation with update time $O(\Delta \log n)$.
This yields an update time of $\tilde{O}(1)$
in bounded-degree graphs, where we use the notation $\tilde{O}$ to suppress $\mathbf{poly}(\log n)$ terms.
However, the question of whether one can replace the dependence on $\Delta$ by $\alpha$ and obtain $\tilde{O}(\alpha)$ update time has remained elusive, {\bf even for forests} and {\bf unweighted instances}.

\begin{tcolorbox} [width=\linewidth, sharp corners=all, colback=white!95!black]
\begin{question} \label{q1}
Can one maintain an $O(\alpha)$-approximate MDS with $\tilde{O}(\alpha)$ update time?
\end{question}
\end{tcolorbox}

\subsection{Our Contribution}
In this work we prove the following theorem,
which resolves \Cref{q1} in the affirmative.

\begin{theorem}\label{thm:main}
  There is a deterministic algorithm that maintains an $O(\alpha)$-approximate MDS
  with amortized update time $O(\alpha \log (Cn))$ in any dynamic graph of arboricity bounded by $\alpha$.
\end{theorem}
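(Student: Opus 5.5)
The plan is to dynamize the primal-dual (Sun-style) $O(\alpha)$-approximation, using a fractional/level-based scheme in the spirit of dynamic set cover algorithms \cite{bhattacharya2015design,bukov2025nearly}, but adapted to bounded arboricity.

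\textbf{Static certificate.} View MDS as set cover: element $u$ is covered by the sets $N[v]$ for $v \in N[u]$. Every element $u$ carries a dual variable $y_u \ge 0$, and we keep the packing constraint $\sum_{u \in N[v]} y_u \le c_v$ approximately feasible. A vertex $v$ goes into $D$ when its constraint is $(1-\epsilon)$-tight. The standard primal-dual bound gives $c(D) \le \sum_{v\in D}\sum_{u\in N[v]} y_u$, but this charges each $y_u$ up to $|N[u]\cap D|$ times, which can be as large as $\Delta$. To replace $\Delta$ by $\alpha$, fix an out-orientation with out-degree $O(\alpha)$; it can be maintained dynamically in $O(\log n)$ amortized time with out-degree $O(\alpha)$. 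Each edge $uv$ with $v\in D$ is charged through its orientation.
\begin{itemize}
\item If $u \to v$, then $u$ pays for $v$. Each $u$ has only $O(\alpha)$ out-neighbours, so these charges total $O(\alpha)\sum_u y_u$.
\item If $v \to u$, the charges have to be redistributed.
\end{itemize}
Following Sun, raise duals only on elements through their ``own'' constraint and the constraints of their out-neighbours, that is, on $O(\alpha)$ sets per element. The duals coming from in-neighbours are then bounded by a constraint at $v$ itself. I would formalize an invariant of this form: every $v\in D$ is tight with respect to $y$ restricted to $N[v]$, and $y$ scaled down by $O(\alpha)$ remains feasible. Weak duality then gives an $O(\alpha)$ approximation.

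\textbf{Dynamic scheme.} Discretize the duals into levels $y_u = \lambda(1+\epsilon)^{-\ell(u)}$, where $\ell(u) \in [0, O(\log(Cn))]$ because costs lie in $[1/C,1]$. The invariant is that every $u$ has some dominator in $D$ among $N[u]$, and that the load $\sum_{w\in N[v]} y_w$ at each $v$ stays within a constant factor of $c_v$ when $v\in D$ and below $c_v$ otherwise. The crucial point is that loads must be maintained only over out-neighbourhoods plus the vertex itself, which have size $O(\alpha)$. When an element changes level, only $O(\alpha)$ counters need updating; in-neighbour contributions are aggregated lazily, bucketed by level, so the dominator chooses when to react. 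Edge updates adjust the orientation (amortized $O(\log n)$ flips, each costing $O(\alpha)$ counter work) and may locally break tightness or feasibility. Repair works by moving elements up or down one level in the style of \cite{bhattacharya2015design}.

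\textbf{Amortization.} Use a potential in which each element holds tokens proportional to the number of remaining levels, $O(\log(Cn))$. A level move costs $O(\alpha)$ work and is paid for by the potential drop, exactly as in the dynamic $f$-approximate set cover analysis with $f$ replaced by $O(\alpha)$. This yields $O(\alpha\log(Cn))$ amortized update time.

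\textbf{Main obstacle.} I expect the hardest step to be reconciling the dynamic orientation with the charging argument. When edges flip, a vertex's in-neighbour set changes, and the static Sun argument assumes a fixed orientation together with a specific processing order (e.g.\ a degeneracy order). I would first try to make the approximation proof order-free, bounding $c(D)$ only via local tightness plus the $O(\alpha)$ out-degree. If that fails, I would fall back on periodic rebuilds of subsets whose orientation changed too much, charged to the orientation's flip budget.
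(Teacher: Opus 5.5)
There is a genuine gap in the static certificate, and it is the core of the problem. You never resolve the case ``$v\to u$, the charges have to be redistributed,'' and the invariant you propose does not give $O(\alpha)$. If $v\in D$ is tight on all of $N[v]$, the out-part $\sum_{u\in N^{out}(v)}y_u$ of its load is not small relative to $c_v$, so it cannot be absorbed. If instead you measure tightness on $N^{in}[v]$ only, the $O(\alpha)$-infeasibility of $y$ costs a second factor $\alpha$, and you get $O(\alpha^2)$.

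Worse, your raising rule does not even give $O(\alpha)$-feasibility. Take a single edge oriented $v\to u$ with $c_v=1/C$ and $c_u=1$. Since $N^{out}[u]=\{u\}$, $y_u$ is stopped only by $u$'s own constraint, so it grows to $\Theta(1)$. The constraint at $v$ is then violated by a factor $\Theta(C)$, $u$ becomes tight and enters $D$, and $c(D)\ge 1$ while $\OPT=1/C$.

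The missing idea, from Dory et al.\ and used in Lemma~\ref{lm:1} and throughout the paper, is a per-vertex cap relative to the cheapest closed neighbour: $y_u\le(1+\epsilon)\lambda\tau_u$ with $\lambda=\Theta(1/\alpha)$. The paper enforces it by keeping $\lev(u)\ge\bt(u)$ with $\lambda=1/((1+\epsilon)^2(3\alpha+1))$. Then, for every $v$, the at most $\alpha$ out-neighbours contribute less than $c_v/3$, because $\tau_u\le c_v$ for $u\in N[v]$. So a tight vertex carries a constant fraction of $c_v$ on $N^{in}[v]$, each $y_u$ is charged at most $\alpha+1$ times, and $y$ is \emph{fully} feasible.

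The price is that a capped vertex may have no tight neighbour, so a second covering rule is indispensable. The paper adds a cheapest neighbour for every vertex sitting at its cap (the sets $H$ and $H'$, Lemma~\ref{lm:H-cost}), paid for by $y_u\ge\lambda\tau_u$. With this in place, the orientation is used only in the analysis and can be an arbitrary static one. The order-dependence you flag as the main obstacle is therefore not the real issue.

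The dynamic part is too thin to verify, and several of its claims do not hold as stated. A change of $y_u$ affects the constraints of all $\deg(u)+1$ neighbours. Updating only $O(\alpha)$ counters and leaving the rest lazy is sound only under two conditions.
\begin{itemize}
\item The untracked contributions must be bounded a priori, which again needs the cap. Letting a vertex ignore its out-neighbours is then morally what the paper's forgetting mechanism does.
\item The laziness must be one-sided. In the paper, viewed values never underestimate true values, stale values after decreases are charged as deviation, and increases are propagated inside rebuilds. Avoiding $\Theta(\Delta)$ work there is exactly what requires forgetting, the set $S$, and the arboricity count of Lemma~\ref{lm:high-deg}.
\end{itemize}

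Maintaining the orientation explicitly also makes your instance change on every flip. A flip moves a dual out of one dominator's tracked load, possibly making it slack, and into another's, possibly pushing it over budget. You account only for the $O(\alpha)$ counter work of a flip, not for the repair it triggers. If each of the $O(\log n)$ amortized flips per update can inject $O(\alpha\log(Cn))$ potential, you end up with $O(\alpha\log n\log(Cn))$. The paper sidesteps this by never maintaining an orientation: Brodal--Fagerberg appears only inside the incoming potential, where a flip costs $O(1/\epsilon)$.

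Finally, local one-level moves in the style of Bhattacharya, Henzinger and Italiano give an $O(f^2)$-approximation rather than $O(f)$. Even with frequency $O(\alpha)$, they would not yield $O(\alpha)$. The paper instead uses the global-rebuild framework built around $\water$, and that is precisely why its rebuild must be sparsified.
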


\noindent
{\bf Remarks.}
\begin{itemize}
    \item Our result replaces the dependence on $\Delta$ in the update time $O(\Delta \log n)$ of the dynamic MDS algorithms \cite{gupta2017online,solomon2023dynamic,solomon2024lossless} by a dependence on $\alpha$, but it incurs a logarithmic dependence on the aspect ratio $C$. We remind that $\alpha \le \Delta$ in any graph.
\item Our approach yields an approximation factor that approaches $6\alpha +2$. While we can reduce the leading constant 6 with an extra effort, our approach cannot reduce it all the way to 1 (as done in the static case \cite{sun2021,DGI22}); this remains an intriguing open problem.
\end{itemize}

\subsection{Perspective}
In contrast to MDS, for several fundamental problems, including maximal matching (MM) and maximal independent set (MIS), and approximate maximum matching, minimum vertex cover and maximum independent set, there exist dynamic algorithms
for graphs with arboricity bounded by $\alpha$
with update time
$\poly(\alpha,\log n)$.
At a high level, these results are typically obtained by reducing bounded-arboricity graphs to bounded-degree graphs, using either {\em low outdegree orientations} or {\em bounded-degree sparsifiers}, as we next discuss.

\smallskip
\noindent
{\bf MM and MIS.~}
For MM and MIS, the known reductions \cite{neiman2015simple,he2014orienting,kopelowitz2014orienting,chekuri2024adaptive,onak2020fully} are based on dynamically maintaining a {\em low outdegree orientation}, i.e., an orientation of the edges of the graph in which the maximum outdegree is close to the graph's arboricity $\alpha$. Dynamic algorithms for low outdegree orientations have been well studied
\cite{brodal1999dynamic,he2014orienting,kopelowitz2014orienting,berglin2020simple,chekuri2024adaptive}, and it is long known that an outdegree of $O(\alpha)$ can be maintained with $O(\log n)$ update time \cite{brodal1999dynamic}.
Once each vertex has outdegree $O(\alpha)$, one can simulate the corresponding ``bounded-degree algorithms'' (i.e., the dynamic algorithms that perform efficiently on bounded-degree graphs) in bounded-arboricity graphs, by cleverly restricting attention to out-neighborhoods rather than full neighborhoods,
in order to reduce the $\Delta$-dependencies in the update time bounds to $\alpha$-dependencies.

\smallskip
\noindent
{\bf Approximate problems.~}
More relevant to MDS than (exact) MM and MIS are the problems of {\em approximate} maximum matching, minimum vertex cover, and maximum independent set. In these problems, the key tool (used sometimes implicitly) for the reductions is that of {\em bounded-degree sparsifiers} \cite{PS16,Sol18}, combined with the basic {\em periodic rebuilding} approach \cite{gupta2013fully}. The bounded-degree sparsifiers \cite{PS16,Sol18} rely on simple local rules and can be maintained efficiently in dynamic graphs of bounded arboricity. In the matching sparsifier of~\cite{Sol18}, each vertex marks $\Delta = O(\alpha/\varepsilon)$ arbitrary neighbors, and only edges marked by both endpoints are included in the sparsifier; the resulting subgraph has a maximum degree at most $\Delta$ and it preserves the maximum matching size to within a factor of $1+\epsilon$. Similarly, for $(1+\epsilon)$-approximate vertex cover sparsification, all vertices of degree at least $\Delta = O(\alpha/\varepsilon)$ are included in the solution, and the problem is reduced to the induced subgraph on the vertices of degree $< \Delta$; independent set sparsification is symmetric to vertex cover sparsification.

\smallskip
\noindent
{\bf MDS.~}
The MDS problem appears to be significantly more challenging to dynamize.

For the problems above, the dynamic bounded-degree algorithms are simple. To maintain an MM dynamically, it suffices to scan the neighborhoods of the at most two updated vertices after each edge update, yielding a trivial $O(\Delta)$ update time algorithm.
Using a dynamic low out-degree orientation as a black-box, extending this algorithm to bounded-arboricity graphs is straightforward.
For dynamic MIS, the bounded-degree algorithm is also simple and local, though the extension to bounded-arboricity graphs requires a nontrivial effort.
For the approximate problems mentioned above, the dynamic bounded-degree algorithms are obtained by combining the periodic rebuilding approach with simple structural reductions, and extending them to bounded-arboricity graphs can be done in a straightforward manner using the bounded-degree sparsifiers of \cite{PS16,Sol18}.

For MDS, however, the dynamic bounded-degree algorithms are much more complex than the algorithms for the other problems above.
We focus on the dynamic PD algorithms \cite{bhattacharya2015design,bhattacharya2019new,bhattacharya2021dynamic,solomon2024lossless,bukov2025nearly}, since even in static graphs, all known $O(\alpha)$-approximation algorithms \cite{BU17,MSW21,sun2021,DGI22} are based on linear programming (and especially PD) techniques. (We note that the algorithm of \cite{MSW21} can also be cast as a PD algorithm, although it is not presented as such.)
These dynamic PD algorithms provide $(1+\eps)f$-approximate MSC with update time $O(f \log n)$, and are directly translated to provide $(1+\eps)(\Delta+1)$-approximate MDS with update time $O(\Delta \log n)$. Importantly, our goal is to replace the $\Delta$-dependence in {\em both} the approximation factor and the update time by an $\alpha$-dependence.

These algorithms are organized around maintaining, or periodically restoring, the underlying {\em complementary slackness conditions}. That is, some of the algorithms \cite{bhattacharya2015design} maintain these conditions locally at all times, while others \cite{bhattacharya2019new,bhattacharya2021dynamic,solomon2024lossless,bukov2025nearly} allow the constraints to be violated as long as the overall approximation is in check, and they fix the violated constraints periodically, as part of {\em global rebuild} procedures. Both the local fixing rules and the global rebuild procedures rely heavily on scanning entire ``neighborhoods'' of vertices or elements (in the set cover problem),
in order to maintain or restore the complementary slackness conditions.
Alas, such scans incur at least a linear dependence on $\Delta$ in the update time.

To translate the dependence on $\Delta$ in the update time to an $\alpha$-dependence,
scanning entire neighborhoods is no longer feasible. Instead,
to achieve an update time proportional to $\alpha$ rather than $\Delta$, one must work with only an $O(\alpha)$-sized subset of each neighborhood. This creates an immediate basic obstacle: in the MDS problem, even a single neighbor can be crucial (e.g., a vertex that uniquely dominates a large portion of the graph), and there is no known simple local rule, analogous to those used for matching, vertex cover or independent set, that allows one to safely ignore most neighbors.

This basic obstacle creates several highly nontrivial algorithmic and analytical challenges.
These challenges become even more significant, since our goal is also to reduce the approximation from $O(\Delta)$ to $O(\alpha)$ while maintaining fast update time.
We discuss these challenges and our approach in more detail in \cref{sec:previous} and \cref{sec:technicalov}.

\subsection{Organization}

In \cref{sec:previous} we review the primal-dual framework for MDS, focusing on prior dynamic algorithms and static algorithms for bounded-arboricity graphs.

In \cref{sec:technicalov} we give a technical overview, highlighting the main challenges in replacing the $\Delta$-dependence by an $\alpha$-dependence and outlining our key ideas, including lazy maintenance of packing values, sparsification of the rebuild procedure, and the forgetting mechanism.

In \cref{sec:algo} we provide the complete description of our dynamic algorithm with all the technical details, and in \cref{sec:analysis} we provide the amortized runtime analysis. The glossary of notation is given in \cref{sec:glossary}. Finally, in \cref{sec:conclusion} we provide the conclusion and future directions.

\section{Preliminaries and Previous approaches} \label{sec:previous}
\subsection{The Primal-Dual framework for MDS}
We denote by $N[v]$ the {\em closed neighborhood} of $v$. For a set $S \subseteq V$, we write $N[S] = \bigcup_{v \in S} N[v]$ for the set of vertices dominated by $S$. Let $L = \left\lceil \log_{1+\epsilon} (Cn) \right\rceil + 1$. Let $\tau_v$ be the cost of a cheapest neighbor of $v$, i.e.\ $\tau_v = \min_{u \in N[v]} c_u$. Let $\epsilon \in (0, 1/3)$.

Our algorithm is a direct adaptation of the primal-dual framework of \cite{bhattacharya2015design,bhattacharya2019new,bhattacharya2021dynamic} from MSC to MDS, and we present the definitions only for MDS for brevity.

For the MDS problem, each vertex $v$ is assigned a {\em packing value} $x_v \ge 0$, and the weight of a vertex $v$ is $\wts(v) = \sum_{u \in N[v]} x_u$. A packing is feasible if $\wts(v) \le c_v$ for every vertex $v$. Let $\OPT$ denote the minimum cost of any dominating set. For any $S \subseteq V$, let $c(S) = \sum_{v \in S} c_v$.

The following lemma is the MDS analogue of the corresponding lemma for MSC, used in all previous dynamic primal-dual algorithms \cite{bhattacharya2015design,bhattacharya2019new,bhattacharya2021dynamic,bukov2025nearly}.

\begin{lemma}\label{lm:packing}
  Consider any feasible fractional packing $\{x_v\}_{v \in V}$ for MDS. Then $\sum_{v \in V} x_v \le \OPT$. In addition, if there is a dominating set $D \subseteq V$ and $\rho \ge 1$ such that
    $\wts(v) \ge c_v / \rho$ for all $v \in D$,
  then $c(D) \le \rho(\Delta+1)\OPT$.
\end{lemma}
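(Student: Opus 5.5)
The plan is standard weak LP duality plus a charging argument. Write the MDS linear program: minimize $\sum_u c_u y_u$ subject to $\sum_{u \in N[v]} y_u \ge 1$ for every $v \in V$ and $y \ge 0$. Its dual is: maximize $\sum_v x_v$ subject to $\sum_{v \in N[u]} x_v \le c_u$ for every $u$ and $x \ge 0$. Since $N[\cdot]$ is symmetric ($u \in N[v]$ iff $v \in N[u]$), the dual constraint for $u$ is exactly $\wts(u) \le c_u$. So a feasible fractional packing is a feasible dual solution.

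For the first claim I would argue directly, without citing LP duality. Let $D^*$ be an optimal dominating set. Since every $v$ has at least one neighbor in $D^*$ within $N[v]$,
\[
\sum_{v} x_v \le \sum_v x_v \,\bigl|N[v] \cap D^*\bigr| = \sum_{u \in D^*} \sum_{v \in N[u]} x_v = \sum_{u \in D^*} \wts(u) \le \sum_{u \in D^*} c_u = \OPT.
\]
The equality swaps the order of summation using the symmetry of closed neighborhoods.

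For the second claim, I charge the cost of $D$ to the packing values. For each $v \in D$, the hypothesis gives $c_v \le \rho\,\wts(v) = \rho \sum_{u \in N[v]} x_u$. Summing over $D$ and swapping sums,
\[
c(D) \le \rho \sum_{u} x_u \,\bigl|N[u] \cap D\bigr| \le \rho(\Delta+1) \sum_u x_u ,
\]
since $|N[u]| \le \Delta + 1$. Applying the first claim then gives $c(D) \le \rho(\Delta+1)\OPT$.

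There is no real obstacle here. The only points needing care are the symmetry of closed neighborhoods, used in the summation swap, and $x \ge 0$, which justifies the two inequalities that multiply by $|N[\cdot] \cap \cdot|$. Interestingly, the fact that $D$ is dominating is not even used in the bound; it matters only for the solution to be valid. This is also exactly where the $\Delta$ factor enters. The rest of the paper will presumably replace the crude bound $|N[u] \cap D| \le \Delta+1$ by an arboricity-based charging, for example via an orientation, to obtain an $O(\alpha)$ factor instead.
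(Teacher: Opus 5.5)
Your proof is correct and follows the paper's argument essentially verbatim. The first claim charges each $x_v$ to a vertex of $D^*$ in $N[v]$ and swaps sums, and the second sums $c_v \le \rho\,\wts(v)$ over $D$, using that each $x_u$ lies in at most $\Delta+1$ closed neighborhoods before applying the first claim.
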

\begin{proof}
  Let $D^{*}$ be an optimal dominating set. Then
  \[
    \sum_{v \in V} x_v \le \sum_{u \in D^{*}} \sum_{v \in N[u]} x_v = \sum_{u \in D^{*}} \wts(u) \le \sum_{u \in D^{*}} c_u = \OPT.
  \]

  For the second claim,
  \[
    c(D) \le \rho \sum_{v \in D} \wts(v)
    \le \rho(\Delta+1)\sum_{u \in V} x_u
    \le \rho(\Delta+1)\OPT,
  \]
  since each $x_u$ appears in at most $\Delta+1$ closed neighborhoods.
\end{proof}

A fractional packing and a dominating set satisfying the conditions of \cref{lm:packing} can be obtained by the classic {\em water-filling} primal-dual algorithm. For efficiency, this algorithm is {\em discretized} by introducing a {\em hierarchical partition} of vertices into levels. Specifically, each vertex $v$ is assigned a \emph{dominating level} $\dom(v) \in [L]$, and the \emph{dominated level} of $v$ is defined as $\Dom(v) = \max_{u \in N[v]} \dom(u)$; note that $\Dom(v) \ge \dom(v)$. The packing value of $v$ is then given by $x_v = (1+\epsilon)^{-\Dom(v)}$. The water-filling algorithm determines the levels of vertices as follows. It starts by assigning $\dom(v) = L$ for all vertices $v$, and it gradually decreases the dominating levels, stopping for a vertex $v$ once $\wts(v) > c_v/(1+\epsilon)$ or $\dom(v)=\bs(v)$, where $\bs(v) = \floor{\log_{1+\epsilon}\frac{1}{c_v}}$ is the \emph{base level} of $v$. Define by $\bt(v) = \floor{\log_{1+\epsilon}\frac{1}{\tau_v}}$ the \emph{bottom level} of $v$; note that $\bt(v) = \max_{u \in N[v]} \bs(u)$.

Let $T \subseteq V$ be the collection of vertices $v$ satisfying $\wts(v) > c_v/(1+\epsilon)$. We call vertices satisfying this condition \emph{tight}; other vertices are called \emph{slack}. We note that every vertex $v$ with $\Dom(v)=\bt(v)$ belongs to $T$, since $x_v \ge \tau_v$. Thus, $T$ is a dominating set that satisfies the conditions of \cref{lm:packing}, and so $T$ is a $(1+\epsilon)(\Delta+1)$-approximate MDS.

\subsection{Previous Dynamic Primal-Dual Algorithms}
Previous dynamic primal-dual algorithms for the MDS problem are obtained by dynamizing the $\water$ subroutine described above. In particular, the dynamic algorithms of \cite{bhattacharya2019new,bhattacharya2021dynamic,bukov2025nearly} all use, as a basic rebuilding primitive, the following MDS adaptation of the water-filling algorithm from \cite{bhattacharya2019new}.

\begin{lemma}[From \cite{bhattacharya2019new}, adapted for MDS]\label{lm:water-filling}
  There exists a subroutine $\water$ with the following properties.\footnote{In \cite{bhattacharya2019new}, this subroutine is called $\textrm{FixLevel}$.} The subroutine $\water(k, D', V')$ takes as input two collections of vertices $D'$ and $V'$ such that $\dom(v)=k$ for every $v \in D'$ and $\Dom(u)=k$ for every $u \in V'$. Moreover, each vertex $v \in D'$ satisfies $\wts(v) \le c_v$. The subroutine updates the dominating levels of vertices in $D'$ and the dominated levels of vertices in $V'$ so that: (1) $\wts(v) \le c_v$ for every $v \in D'$, and (2) if $\dom(v) > \bs(v)$, then $\wts(v) > c_v/(1+\epsilon)$ for every $v \in D'$. The subroutine runs in time $O\!\left(k + |D'| + \sum_{u \in V'} |N[u]|\right)$.
\end{lemma}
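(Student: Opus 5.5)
The plan is to construct $\water$ explicitly as a discretized, downward water-filling sweep restricted to $D'$ and $V'$, and then to check the two invariants and the running time. Before starting, every vertex $v\in D'$ is at level $k$. I will treat $D'$ as the set of \emph{active} vertices and keep, for each active $v$, a counter holding $\wts(v)$. Initially this counter is computed by scanning $N[v]$. To avoid paying $|N[v]|$ for each $v\in D'$, I will make the counters incremental. Only packing values of vertices in $V'$ change, since every vertex whose $\Dom$ can change lies in $N[D']$ and has $\Dom=k$, and those are exactly the vertices of $V'$ (up to the paper's convention). Each $u\in V'$ will therefore push its current $x_u$ to the counters of the vertices in $N[u]$. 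The contributions of vertices outside $V'$ are taken as fixed values already present in the current $\wts$ (stored as maintained weights in the data structure), so no scan of them is needed.

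The sweep runs for $j=k,k-1,\dots$ down to the minimum base level. At step $j$, the active set $A_j$ consists of the vertices of $D'$ that are still moving. I do the following.
\begin{enumerate}
\item Remove from $A_j$ every $v$ with $\wts(v)>c_v/(1+\epsilon)$ (it stops at level $j$) and every $v$ with $j=\bs(v)$.
\item Lower $\dom(v)$ to $j-1$ for the remaining vertices.
\item Every $u\in V'$ whose $\Dom$ now drops to $j-1$ (all dominators in $N[u]$ having gone below $j$) updates $x_u$ from $(1+\epsilon)^{-j}$ to $(1+\epsilon)^{-(j-1)}$ and pushes the increment to $N[u]$.
\end{enumerate}
A vertex $u\in V'$ whose $\Dom$ is pinned by some stopped neighbor leaves the process permanently, since its $\Dom$ is then frozen.

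For feasibility (1), I argue by induction over the steps. When a vertex moves from level $j$ to $j-1$, it was slack, i.e.\ $\wts(v)\le c_v/(1+\epsilon)$. Any increase in $\wts(v)$ during that step comes from neighbors $u$ with $\Dom(u)$ dropping by one, so each such $x_u$ grows by a factor of exactly $1+\epsilon$. Hence $\wts(v)$ grows by at most a factor of $1+\epsilon$ and stays at most $c_v$. Stopped vertices may continue to receive increases, because their neighbors in $V'$ can still drop when \emph{other} neighbors move. This is the delicate point. A stopped vertex $v$ at level $j$ has $\dom(v)=j$, so every $u\in N[v]$ has $\Dom(u)\ge j$, and those $x_u$ are frozen. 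Thus stopped vertices never receive further increases, and (1) holds. Invariant (2) holds by the stopping rule: a vertex with $\dom(v)>\bs(v)$ stopped only because $\wts(v)>c_v/(1+\epsilon)$, and its weight afterwards never decreases.

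For the running time, the $O(k)$ term pays for iterating over the levels, or over buckets of levels. The $O(|D'|)$ term pays for moving and stopping vertices, each vertex being charged $O(1)$ per level it is processed. To keep this charge at $O(1)$ per vertex, I will use bucket queues so that empty steps cost only the $O(k)$ loop. The main obstacle is charging the pushes. A vertex $u\in V'$ may increase its $x_u$ many times, and each time it costs $|N[u]|$. It also needs $O(|N[u]|)$ work to detect when all its dominators have left level $j$; for this I keep a count of the neighbors that are still at or above $j$. My first attempt is to use the fact that in the water-filling only the lowest-level structure matters, so I would batch the updates. Each $u$ records the current level of its highest neighbor and is touched only when that neighbor moves. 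Then each $u$ performs $O(1)$ amortized scans of $N[u]$, with repeated scans paid against the $O(k)$ and $O(|D'|)$ terms as in the \textrm{FixLevel} analysis of \cite{bhattacharya2019new}. This gives the total $O(k+|D'|+\sum_{u\in V'}|N[u]|)$.
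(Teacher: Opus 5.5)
Your correctness argument for (1) and (2) is sound. The moving vertices of $D'$ descend in lockstep, so in one step every affected $x_u$ grows by exactly a factor $1+\epsilon$. A vertex $v$ stopped at level $j$ forces $\Dom(u)\ge j$ for all $u\in N[v]$, so its weight is frozen from then on.

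The gap is in the running time, which is the nontrivial content of the lemma. As written, your procedure decrements $\dom(v)$ explicitly for every moving $v\in D'$ at every level. Your own accounting says ``$O(1)$ per level it is processed'', which is not $O(|D'|)$. A bucket queue only skips empty levels; it does not remove this per-level charge. In addition, every $u\in V'$ rescans $N[u]$ each time $x_u$ grows. The total work is then of order $\sum_{v\in D'}(k-\dom(v))+\sum_{u\in V'}(k-\Dom(u))\,|N[u]|$, with the final levels, and this can be far above the budget.

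For a concrete instance, take a star whose center $u\in V'$ sits at its base level outside $D'$, and whose $d$ leaves lie in $D'$ and have cost $1$. Each leaf's weight has only two terms, so the leaves stay slack and descend $\Theta(k)$ levels together (for $k\gg 1/\epsilon$). At each of these levels $u$ rescans its $d+1$ neighbors, giving $\Theta(kd)$ work against a budget of $O(k+d)$. Your proposed batching does not help. The highest neighbor of $u$ moves at every step, so touching $u$ only when that neighbor moves saves nothing. Repeated scans of $N[u]$ also cannot be charged to the single additive $O(k)$ or to $O(|D'|)$.

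The missing idea is to never move unstopped vertices explicitly. At sweep level $j$, every unstopped $v\in D'$ is at level $j$ and every unpinned $u\in V'$ has $\Dom(u)=j$, so these levels can stay implicit.

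\begin{itemize}
\item For each $v\in D'$, keep $W_v$, the fixed contribution of its pinned neighbors and of its neighbors outside $V'$, and $a_v$, its number of unpinned neighbors. Build the lists $N[v]\cap V'$ once by scanning $N[u]$ for $u\in V'$, and initialize $W_v$ as the maintained weight minus $a_v(1+\epsilon)^{-k}$.
\item The weight of $v$ at any level $j'\le j$ is then $W_v+a_v(1+\epsilon)^{-j'}$ (times $\lambda$ in the paper's normalization). So its stopping level, $\max\{\bs(v),j^*\}$ with $j^*$ the first level at which this exceeds $c_v/(1+\epsilon)$, can be computed directly and used as a bucket key.
\item Sweep the buckets from $k$ downward, paying $O(k)$ overall.
\item When $v$ is extracted at level $j$, it stops there. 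Each still-unpinned $u\in N[v]\cap V'$ becomes pinned with $\Dom(u)=j$. It then scans $N[u]$ once, moving its contribution from $a_w$ into $W_w$ and re-keying each unstopped $w\in N[u]\cap D'$. The key can only move down: the predicted weight at level $j$ is unchanged and strictly smaller at every level below $j$.
\item A $u$ whose $\Dom$ is also held up by a vertex outside $D'$ at some level $\ell\le k$ is scheduled to be pinned at level $\ell$. This level is found in the same initial scan of $N[u]$.
\end{itemize}

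Each $u\in V'$ is pinned once and each $v\in D'$ stops once, so the total is $O(k+|D'|+\sum_{u\in V'}|N[u]|)$. Your feasibility and tightness arguments carry over unchanged, since this is the same lockstep process.
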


The algorithm of \cite{bhattacharya2019new} is based on a global rebuild strategy: rather than maintaining a solution that always satisfies the conditions of \cref{lm:packing}, it keeps the current solution as long as its cost remains sufficiently close to $(\Delta+1)\sum_{v \in V} x_v$. When the algorithm determines that the solution might deviate significantly from this target, it chooses an appropriate level $k$, lets $D_{\le k}$ denote the set of vertices $v$ with $\dom(v)\le k$ and $V_{\le k}$ denote the set of vertices $v$ with $\Dom(v)\le k$, raises every vertex in $D_{\le k}$ to dominating level $k$, raises every vertex in $V_{\le k}$ to level $k$, and then invokes the subroutine from \cref{lm:water-filling}. The algorithms of \cite{bhattacharya2021dynamic,bukov2025nearly} follow essentially the same global rebuild paradigm, still using \cref{lm:water-filling} as the core rebuilding step, while additionally applying local fixing rules during updates to improve the update-time guarantees.

This rebuild paradigm is exactly what causes the linear dependence on $\Delta$ in the update time of previous dynamic MDS algorithms. Indeed, when \cref{lm:water-filling} is applied as described above, its running time is proportional to $k + |D_{\le k}| + \sum_{u \in V_{\le k}}(\deg(u)+1)$, since the subroutine must inspect the entire neighborhood of each processed vertex. Thus, rebuilding a prefix of the level hierarchy incurs a cost proportional to the sum of the degrees of the involved vertices. In bounded-degree graphs, this yields efficient update times, but in general graphs it necessarily introduces a linear dependence on $\Delta$.

The same issue arises in the local fixing rules. More generally, all previous dynamic MDS algorithms rely on scanning the full neighborhoods of vertices, both in rebuild subroutines and in local updates. Thus, their update-time bounds inherently depend linearly on $\Delta$.

\subsection{Static Primal-Dual Algorithms in Bounded Arboricity Graphs}
Static $O(\alpha)$-approximation algorithms for MDS on graphs of arboricity at most $\alpha$ were obtained in a sequence of works \cite{BU17,MSW21,sun2021,DGI22}. Our algorithm builds on the distributed algorithm and analysis of Dory et al.\ \cite{DGI22}, which exploit low-outdegree orientations of bounded-arboricity graphs.

The following lemma and claim are direct consequences of the work of Dory et al.\ \cite{DGI22}.

\begin{lemma}\label{lm:1}
  There exists a static algorithm that computes a feasible packing $\{x_v\}_{v \in V}$ such that, with $H$ defined as the set of vertices not dominated by $T$ and $\lambda = \frac{1}{(1+\epsilon)^2(2\alpha+1)}$, the following hold:
  (1) $c(T) \le (1+\epsilon)(2\alpha+1)\sum_{v \in V \setminus H} x_v$; (2) $x_v \ge \lambda\tau_v$ for every $v \in H$; (3)
$x_v \le (1+\epsilon)\lambda\tau_v$ for every $v \in V$.
\end{lemma}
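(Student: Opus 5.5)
The plan is to run a capped water-filling, in the spirit of Dory et al.\ \cite{DGI22}, and then to prove (1) by a charging argument over a low-outdegree orientation. Fix an orientation of $G$ in which every vertex has outdegree at most $\alpha$; by Nash-Williams such an orientation exists. The algorithm starts from $x_v=0$ (or a tiny value) and raises the packing values of all \emph{active} vertices simultaneously, multiplying them by $(1+\epsilon)$ in each round. A vertex $v$ stops being active in one of two cases: some $u\in N[v]$ has become tight, i.e.\ $\wts(u)>c_u/(1+\epsilon)$; or $x_v$ has reached the cap $\lambda\tau_v$. Vertices that are already tight are frozen.

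The algorithm is arranged so that a step is taken only if, before the step, every $u\in N[v]$ satisfies $\wts(u)\le c_u/(1+\epsilon)$. Each step multiplies all affected terms by at most $1+\epsilon$, so $\wts(u)\le c_u$ always holds, and the packing is feasible. Let $T$ be the set of tight vertices at termination. The three properties then follow as below.

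\textbf{Property (3).} The cap, together with the multiplicative discretization, gives $x_v\le(1+\epsilon)\lambda\tau_v$ for every $v$.

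\textbf{Property (2).} Take $v\in H$. Then no vertex of $N[v]$ is tight, so $v$ cannot have stopped for the first reason. Hence it stopped at the cap, and $x_v\ge\lambda\tau_v$.

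\textbf{Property (1).} This is the core computation. Take $v\in T$, so $c_v<(1+\epsilon)\wts(v)$. Split $\wts(v)$ into three parts: the term $x_v$ itself, the terms $x_u$ over in-neighbors $u$ of $v$ (edges $u\to v$), and the terms $x_u$ over out-neighbors $u$ of $v$.
\begin{itemize}
\item For an out-neighbor $u$ we have $v\in N[u]$, so $\tau_u\le c_v$. Property (3) then gives $x_u\le(1+\epsilon)\lambda c_v$.
\item There are at most $\alpha$ out-neighbors, so the out-neighbor part is at most $\alpha(1+\epsilon)\lambda c_v$.
\end{itemize}
Substituting $\lambda$, this gives
\[
c_v\Bigl(\tfrac{1}{1+\epsilon}-\tfrac{\alpha}{(1+\epsilon)(2\alpha+1)}\Bigr)=\tfrac{(\alpha+1)c_v}{(1+\epsilon)(2\alpha+1)}\le x_v+\sum_{u\to v}x_u .
\]
Now sum over $v\in T$. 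Each $x_u$ appears at most once as a self term, and at most $\alpha$ times as an in-neighbor term, since $u$ has outdegree at most $\alpha$. Moreover, it appears only if $u\in N[T]=V\setminus H$. Therefore
\[
\tfrac{\alpha+1}{(1+\epsilon)(2\alpha+1)}\,c(T)\le(\alpha+1)\sum_{u\in V\setminus H}x_u,
\]
which is exactly (1).

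The main obstacle is the interplay between the cap and feasibility in the discretized process. The worry is that a vertex reaching its cap, or a simultaneous multiplicative step, could overshoot a neighbor's cost. I would handle this as described: only step vertices all of whose closed neighbors are still below $c_u/(1+\epsilon)$, and clip values at the cap. The slack of one factor $(1+\epsilon)$ in the definition of tightness, plus one more in $\lambda$, absorbs the discretization. That extra factor is precisely why $\lambda$ carries $(1+\epsilon)^2$.
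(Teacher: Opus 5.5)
Your proof is correct and takes essentially the same route as the paper. In both, each $x_v$ grows geometrically (scaled by $\lambda$) until some closed neighbor becomes tight or a cap of order $\lambda\tau_v$ is reached, and (1) follows from the same orientation-based charging: out-neighbors contribute at most $(1+\epsilon)\alpha\lambda c_v$, and each $x_u$ with $u\in V\setminus H$ is counted at most $\alpha+1$ times. The only difference is presentation. The paper runs water-filling on dominating levels with $x_v=\lambda(1+\epsilon)^{-\Dom(v)}$, so the cap arises implicitly from $\Dom(v)\ge\bt(v)$, while you impose the cap $\lambda\tau_v$ explicitly; your argument for (2) also silently uses that tightness is monotone because packing values only increase.
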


The proof of this lemma uses the following observation.

\begin{observation}\label{obs:orientation-intro}
  The edges of any graph $G$ of arboricity at most $\alpha$ can be oriented so that every vertex has out-degree at most $\alpha$.
\end{observation}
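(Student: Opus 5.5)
The statement is the classical Nash-Williams consequence, and the plan is to prove it with the standard degeneracy peeling argument. By the definition of arboricity given above, every induced subgraph $G[S]$ with $n_S \ge 2$ satisfies $m_S \le \alpha(n_S-1) < \alpha n_S$. The average degree in $G[S]$ is therefore $2m_S/n_S < 2\alpha$. Averaging only gives a vertex of degree at most $2\alpha-1$, which is not enough for out-degree $\alpha$. So instead of peeling, I will use the forest decomposition directly.

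\textbf{Step 1 (decompose into forests).} By the Nash-Williams Theorem \cite{nash1964}, quoted in the introduction, $E$ can be partitioned into $\alpha$ edge-disjoint forests $F_1,\dots,F_\alpha$.

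\textbf{Step 2 (orient each forest).} In each tree of each forest $F_i$, pick an arbitrary root and orient every edge from child to parent. Every non-root vertex then has exactly one outgoing edge in $F_i$, and the root has none. Thus each vertex has out-degree at most $1$ within each $F_i$.

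\textbf{Step 3 (combine).} The forests are edge-disjoint and together cover $E$. So this assigns every edge of $G$ exactly one orientation. The total out-degree of any vertex is at most $\sum_{i=1}^{\alpha} 1 = \alpha$.

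The only nontrivial ingredient is the Nash-Williams decomposition. The paper invokes it explicitly when it states that $\alpha$ equals the minimum number of edge-disjoint forests covering $G$, so I take it as given. If one wanted to avoid it, an alternative is a flow/Hall argument. Orient edges by a bipartite assignment of each edge to one of its endpoints, giving each vertex capacity $\alpha$. Hall's condition requires, for every edge set $E'$, that the endpoint set $V(E')$ satisfies $|E'| \le \alpha |V(E')|$. This follows from $m_S \le \alpha(n_S-1)$ applied to $S = V(E')$. This Hall-based route would be the step requiring care, namely checking the deficiency version of Hall's theorem. With Nash-Williams available, however, the argument is immediate and no real obstacle remains.
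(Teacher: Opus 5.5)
Your proof is correct and is the paper's argument: split the edges into at most $\alpha$ edge-disjoint forests via Nash-Williams, orient each tree toward a chosen root so every vertex has out-degree at most one per forest, and sum over the forests. The Hall-type alternative you sketch is also valid (for any edge set $E'$ with endpoint set $S$, we have $|E'| \le m_S \le \alpha(n_S-1) < \alpha n_S$), but it is not needed here.
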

\begin{proof}
  Since $G$ has arboricity at most $\alpha$, its edges can be partitioned into at most $\alpha$ edge-disjoint forests. The edges of a forest can be oriented so that each vertex has at most one outgoing edge. This can be done by fixing a root vertex in each tree of the forest and orienting every edge toward this root. Thus, applying such an orientation to each forest in the partition yields the required orientation.
\end{proof}
\begin{proof}[Proof of \Cref{lm:1}]
  We run the water-filling algorithm with the following modification. We \emph{scale down the packing values by a factor of $\lambda$}, redefining the packing value of $v$ to be $x_v = \lambda (1+\epsilon)^{-\Dom(v)}$. Therefore, the resulting packing satisfies $x_v \le (1+\epsilon)\lambda\tau_v$. Every vertex with $\dom(v) > \bs(v)$ in the resulting packing still has $\wts(v) > c_v / (1+\epsilon)$, and so $v$ is tight. Thus, every vertex $v$ with $\Dom(v) > \bt(v)$ is dominated by $T$. However, we cannot claim that vertices with $\Dom(v) = \bt(v)$ are dominated by $T$, since we scaled down the packing values. But for these vertices we have $x_v \ge \lambda\tau_v$.

  Consider the orientation from \Cref{obs:orientation-intro}. Then
  \begin{equation}\label{eq:5}
    \wts(v)=\sum_{u\in N^{in}[v]}x_u+\sum_{u\in N^{out}(v)}x_u
    \le\sum_{u\in N^{in}[v]}x_u+(1+\epsilon)\alpha\lambda c_v.
  \end{equation}
  Summing over $v\in T$ and swapping the order of summation gives
  \begin{equation}\label{eq:in-sum-intro}
    \sum_{v\in T}\sum_{u\in N^{in}[v]}x_u
    =\sum_{u\in V \setminus H}x_u\cdot|\{v\in T : v\in N^{out}[u]\}|
    \le(\alpha+1)\sum_{v\in V \setminus H}x_v.
  \end{equation}
  For a tight vertex $v\in T$, we have $c_v < (1+\epsilon)\wts(v)$.
  Summing over $T$ and using \cref{eq:5,eq:in-sum-intro} yields
  \[
    c(T)\le\sum_{v\in T}(1+\epsilon)\wts(v)
    \le (1+\epsilon)\sum_{v \in T}\Bigl(\sum_{u \in N^{in}[v]}x_u + (1+\epsilon)\alpha\lambda c_v\Bigr) \le (1+\epsilon)(\alpha+1)\sum_{v\in V \setminus H}x_v + \frac{\alpha}{2\alpha + 1}\sum_{v \in T} c_v,
  \]
  hence
    $c(T)\le (1+\epsilon)(2\alpha+1)\sum_{v\in V \setminus H}x_v$.
\end{proof}

Let $H'$ be the set of cheapest neighbors of the vertices in $H$ from \cref{lm:1} (for each vertex, we take a single neighbor). Then we make the following claim.

\begin{claim}\label{cl:bound-intro}
  The set $T \cup H'$ from \cref{lm:1} forms a $(1+3\epsilon)(2\alpha+1)$-approximate MDS.
\end{claim}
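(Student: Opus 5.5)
The plan has three parts: show that $T \cup H'$ dominates every vertex, bound $c(T)$ and $c(H')$ separately against the packing, and then use feasibility of the packing via \cref{lm:packing} to get $\sum_{v} x_v \le \OPT$.

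Domination is immediate. By definition, every vertex outside $H$ is dominated by $T$. Every $v \in H$ is dominated by its chosen cheapest neighbor in $H'$, since $N[v]$ contains that neighbor.

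For the cost, part (1) of \cref{lm:1} gives directly
\[ c(T) \le (1+\epsilon)(2\alpha+1)\sum_{v\in V\setminus H} x_v. \]
For $H'$, charge each vertex of $H'$ to a vertex of $H$ that selected it. Then $c(H') \le \sum_{v\in H} \tau_v$, because the neighbor chosen by $v$ costs exactly $\tau_v$ and a vertex chosen by several $v$'s is counted only once on the left. Part (2) of \cref{lm:1} gives $\tau_v \le x_v/\lambda$ for $v\in H$, so
\[ c(H') \le \frac{1}{\lambda}\sum_{v\in H} x_v = (1+\epsilon)^2(2\alpha+1)\sum_{v\in H}x_v. \]

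Adding the two bounds, and using that both coefficients are at most $(1+\epsilon)^2(2\alpha+1)$, we get
\[ c(T\cup H') \le (1+\epsilon)^2(2\alpha+1)\sum_{v\in V}x_v. \]
The packing is feasible, so the first part of \cref{lm:packing} gives $\sum_v x_v \le \OPT$. Finally, $(1+\epsilon)^2 = 1+2\epsilon+\epsilon^2 \le 1+3\epsilon$ since $\epsilon<1/3$, which yields the claimed $(1+3\epsilon)(2\alpha+1)$ bound.

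There is no real obstacle in this argument. The one point needing care is that the sums over $V\setminus H$ and over $H$ are disjoint, so the two charges together never exceed the single sum $\sum_v x_v$. This disjointness is exactly why \cref{lm:1}(1) is stated with $V\setminus H$ rather than $V$. One should also confirm that feasibility of the scaled-down packing, as asserted in \cref{lm:1}, is what licenses the use of \cref{lm:packing}.
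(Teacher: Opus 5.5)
Your proposal is correct and matches the paper's proof. Like the paper, you bound $c(T)$ via \cref{lm:1}(1) over $V\setminus H$, bound $c(H')\le\sum_{v\in H}\tau_v\le(1+\epsilon)^2(2\alpha+1)\sum_{v\in H}x_v$ via \cref{lm:1}(2), add the two bounds, and finish with $\sum_v x_v\le\OPT$. Your explicit domination check and your use of $\le$ rather than the paper's $=$ in $c(H')\le\sum_{v\in H}\tau_v$ are small gains in rigor.
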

\begin{claimproof}
  Since $x_v \ge \lambda\tau_v = \frac{\tau_v}{(1+\epsilon)^2(2\alpha+1)}$ for $v \in H$, we have
  \[
    c(H') = \sum_{v \in H} \tau_v \le (1+\epsilon)^2(2\alpha+1)\sum_{v \in H} x_v.
  \]
  Therefore, $c(T \cup H') \le (1+3\epsilon)(2\alpha+1)\sum_{v \in V} x_v \le (1+3\epsilon)(2\alpha+1)\OPT$.
\end{claimproof}

The algorithm from \cite{bhattacharya2019new} waits until the cost of the solution deviates significantly from $(\Delta+1) \sum_{v \in V} x_v$. However, if we aim for an $O(\alpha)$ approximation ratio, we need it to be close to $\alpha \sum_{v \in V} x_v$. Thus, we might have to rebuild the solution much more frequently, and this requires us to reduce the runtime cost of the rebuild subroutine significantly, from $O(\Delta)$ per vertex to $O(\alpha)$.

Another issue in applying the algorithm from \cite{bhattacharya2019new} is keeping track of the vertices that are {\em undominated} by $T$. However, since our goal is an $O(\alpha)$ approximation, we can simply define $H$ to consist of {\em all} vertices $v$ such that $x_v \ge \lambda\tau_v$ (which is a superset of the original $H$ from \cref{lm:1}). This may cause $H'$ and $T$ to intersect, but the approximation ratio would increase by at most a factor of $2$.

\section{Technical Overview} \label{sec:technicalov}
\subsection{Overview of the Dynamic Algorithm}
At a high level, our algorithm combines the static distributed algorithm for arboricity-$\alpha$ graphs from \cite{DGI22} with the dynamic primal-dual algorithms of \cite{bhattacharya2019new}.
The main obstacle is that combining these approaches directly leads to an update time that depends at least linearly on $\Delta$. The static distributed bounded-arboricity analysis yields the desired $O(\alpha)$ approximation, but the known dynamic primal-dual algorithms rely crucially on scanning full neighborhoods, which costs $\Omega(\Delta)$ time. Our task is therefore to preserve the primal-dual structure while replacing all essential $\Delta$-dependent operations by $\alpha$-dependent ones. To achieve $O(\alpha)$ approximation as in the distributed algorithm, additional constraints must be enforced, beyond the usual constraints of the primal-dual dynamic algorithms.
Maintaining these additional constraints is essential for achieving an $O(\alpha)$-approximation, but it makes it significantly more difficult to obtain fast update time. In the following subsections, we summarize the main obstacles and our approach to overcoming them.
We begin by explaining how to avoid scanning entire neighborhoods during updates, and then describe how to sparsify the rebuild subroutine to obtain an $\alpha$-dependent update time.

We use the framework of~\cite{bhattacharya2019new}, which distinguishes between \emph{active} and \emph{passive} vertices. Each vertex is assigned a \emph{level} $\lev(v) \ge \Dom(v)$, and the packing value is $x_v = \lambda(1+\epsilon)^{-\lev(v)}$. We use $\lambda = \frac{1}{(1+\epsilon)^2(3\alpha+1)}$, in contrast to the definition in \cref{lm:1}. This modified choice of $\lambda$ enables faster update time at the cost of increasing the approximation ratio by a constant factor, via the mechanism of ``forgetting'' vertices, explained later. For an active vertex, we maintain its dominated level $\Dom(v)$ exactly in $\lev(v)$. For a passive vertex $v$, we have $\lev(v) > \Dom(v)$.

Passive vertices introduce an additional logarithmic dependency in the update time.
This is because each passive vertex may need to be processed multiple times before becoming active. The key point is that whenever a passive vertex is processed by the rebuild subroutine, the gap $\lev(v)-\zlev(v)$ decreases (we discuss the lazy level $\zlev(v)$ later; intuitively it acts like a lazy version of $\Dom(v)$). Consequently, unless an incident edge update occurs first, any passive vertex can participate in at most $O(\log_{1+\epsilon}(Cn))$ rebuilds before it becomes active.

To simplify this overview, for the rest of this section, we assume that there are no passive vertices (unless stated otherwise). This allows us to convey the main ideas more effectively. However, dealing with passive vertices is a highly nontrivial challenge and we defer this to the full description provided in \cref{sec:algo}. Thus, throughout this overview we assume $\lev(v) = \Dom(v)$.

Following the static algorithm from \cref{lm:1}, we maintain the following invariant.
\begin{invariantintro}[\cref{inv:tight} in \cref{sec:invariants}]\label{inv:tight-overview}
  Every vertex $v$ with $\dom(v) > \bs(v)$ is tight.
\end{invariantintro}
The notion of tightness is defined later in \cref{def:tightness-intro}. Intuitively, a vertex $v$ is tight if its weight is close to $c_v$. Let $T$ be the set of all tight vertices, and $H$ be all vertices with $\lev(v) = \bt(v)$; the definition of $H'$ remains the same as after \cref{lm:1}. The dominating set our algorithm maintains is $T \cup H'$ (we defer the proof that this is indeed a dominating set to \cref{sec:invariants}). We show by \cref{lm:approximation-guarantee-overview} this is an $O(\alpha)$-approximate MDS.

Due to scaling down the packing values by a factor of $\lambda$ and \Cref{inv:tight-overview} we have the following bound on the packing values $x_v \le (1+\epsilon)\lambda \tau_v$. This is different from the previous primal-dual dynamic algorithms, where the bound was $x_v \le (1+\epsilon)\tau_v$ (or simply $x_v \le 1$). This bound on the packing values plays a crucial role in achieving the required bounds for both the approximation factor and the update time; we explain it in the following sections.

\subsubsection{Handling Deletions without Full Neighborhood Scans}
In the standard framework, when an edge disappears, the dominated level $\Dom(v)$ of a vertex $v$ may decrease, and restoring the exact value of $\Dom(v)$ may again require scanning the entire neighborhood $N[v]$.

We handle that issue by keeping a \emph{lazy level} $\zlev(v)\le \Dom(v)$ and not tracking $\Dom(v)$ exactly for passive vertices, in the spirit of \cite{bukov2025nearly}. Thus, whenever an edge is deleted, instead of scanning the neighborhood to compute $\Dom(v)$, we reset $\zlev(v)$ to 0, which enables us to handle deletion in constant time. Later, when $v$ is processed in a rebuild subroutine, either its lazy level is increased (and so gets closer to $\Dom(v)$), or $v$ becomes active. Thus, intuitively, we defer scanning the neighborhood for $v$ to global rebuilds.

We note that in our case, unlike \cite{bukov2025nearly}, $\zlev(v)$ cannot be any level $\le \Dom(v)$, since not every vertex with $\dom(v) > 0$ is taken to the solution. This issue is handled by \cref{inv:zlev} in \cref{sec:invariants}, which guarantees that at least one neighbor is tight and thus is taken to the solution.

\subsubsection{Handling Insertions without Full Neighborhood Scans}
In previous dynamic primal-dual algorithms, when $\Dom(v)$ increases due to an edge insertion, one has to explicitly update the weights of all vertices in $N[v]$.
This is already problematic after a single edge insertion. Indeed, if inserting an edge causes $\Dom(v)$ to increase, then the packing value $x_v$ decreases, and updating the weights $\wts(u)$ of all vertices $u \in N[v]$ may require scanning the entire neighborhood of $v$, leading to $\Omega(\Delta)$ update time.

To avoid this, we maintain weights approximately using a lazy update scheme. One source of error is the \emph{deviation}, denoted by $\delta$,
which captures the discrepancy between the true weights and the maintained lazy weights due to outdated information about neighbors' levels. Another is \emph{dead weight} $\phi(v)$, which we add after deletions to temporarily compensate for lost weight; let $\phi=\sum_{v\in V}\phi(v)$.

Throughout the algorithm we bound the total deviation by \cref{inv:error-bound-overview} provided later.
As long as this bound holds, the approximation degrades by only a factor of $1+O(\epsilon)$. This is formally captured by the following \cref{lm:approximation-guarantee-overview}, the proof of which we defer to \cref{sec:approximation}.

First we explain the concept of viewed levels and how they allow us to avoid scanning the neighborhoods on edge insertions, and then show how this relates to $\delta$ and the resulting approximation guarantee.

In the algorithm, we do not maintain the true weight $\wts(v)=\sum_{u\in N[v]} x_u$ explicitly. Instead, for every vertex $v$ and every neighbor $u\in N[v] \setminus F(v)$, we maintain a \emph{viewed level} $\view{v}{u}$, which represents the level at which $v$ currently views $u$. The set $F(v)$ is the set of forgotten neighbors, which we define later in \cref{sec:rebuild-intro}. The corresponding \emph{viewed value} is then defined by $\viewx{v}{u}=\wat{\view{v}{u}}$. Thus each vertex $v$ stores a \emph{viewed weight} $\hat{\wts}(v)=\sum_{u\in N[v] \setminus F(v)} \viewx{v}{u}$, where always $\viewx{v}{u}\ge x_u$. This inequality is enforced by maintaining the following invariant.
\begin{invariantintro}[\cref{inv:view} in \cref{sec:invariants}]
  For every vertex $v$ and every $u \in N[v] \setminus I(v)$, we have $\dom(u) \le \view{u}{v} \le \lev(v)$.
\end{invariantintro}
We do not maintain this bound for a set of ignoring vertices $I(v)$,
since for them it holds trivially; we explain this later when we
introduce forgotten vertices in \cref{sec:rebuild-intro}. In
particular, this implies $\wts(v) \le \hat{\wts}(v) + |F(v)| \cdot
(1+\epsilon)\lambda c_v$, and hence if $\hat{\wts}(v) \le c_v - |F(v)| \cdot (1+\epsilon)\lambda c_v$ for every vertex $v$, then the packing is feasible. The last part is enforced by the following invariant.
\begin{invariantintro}[\cref{inv:weight-bound} in \cref{sec:invariants}]\label{inv:weight-bound-overview}
  For every vertex $v$ we have $\hat{\wts}(v) \le c_v - |F(v)| \cdot (1+\epsilon)\lambda c_v$.
\end{invariantintro}

Next, let us explain how we maintain the bound $\view{u}{v} \le \lev(v)$. The key point is that during insertions, we may only increase $\lev(v)$, while deletions leave $\lev(v)$ unchanged; $\lev(v)$ may decrease only during rebuilds, but whenever we decrease $\lev(v)$ during rebuild, this is when we update all $\view{u}{v}$ to the actual value $\lev(v)$. Therefore, upon insertions and deletions we can safely let all neighbors $u \in N[v] \setminus I(v)$ keep their old $\view{u}{v}$.

To show how it is related to $\delta$, let us define \emph{deviation for a vertex} as $\delta(v) = \max_{u \in N[v] \setminus I(v)} \viewx{u}{v} - x_v$. Thus, whenever the true value $x_v$ drops due to an insertion, $\delta(v)$ increases. Then $\delta$ is defined as $\sum_{v \in V} \delta(v)$. We note that the actual $\delta(v)$ we use in the algorithm is defined a bit differently to simplify the amortized analysis, but these definitions differ up to a constant factor. Whenever $\lev(v)$ increases due to an insertion, the deviation $\delta(v)$ increases. Conversely, rebuild subroutines reduce deviation by restoring consistency between viewed levels and true levels.

As deviation and dead weight accumulate, the approximation guarantee deteriorates. To control this, we maintain a global invariant that bounds their total contribution. Recall that $T$ is the set of tight vertices.

\begin{invariantintro}[\cref{inv:error-bound} in \cref{sec:invariants}]
\label{inv:error-bound-overview}
$
  (\alpha+1)\delta + \phi \le \epsilon\Bigl(\xi c(T) + (\alpha+1)\sum_{v\in V} x_v\Bigr)
$, where $\xi = \frac{\alpha+1}{(1+\epsilon)(3\alpha+1)}$.
\end{invariantintro}

In \cref{lm:approximation-guarantee-overview} we show that when the invariants are maintained the approximation factor stays close to $6\alpha+2$. Conversely, whenever \cref{inv:error-bound-overview} is violated, we claim that we accumulated enough credits to afford a rebuild subroutine, which reduces $\delta$ and $\phi$ (the full analysis is provided in \cref{sec:analysis}).

\begin{lemma}\label{lm:approximation-guarantee-overview}
  If all the invariants hold, then $c(T) + c(H') \le (1 + 5\epsilon)\cdot 2(3\alpha + 1)\cdot\OPT$.
\end{lemma}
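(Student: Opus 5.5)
We follow the static argument of \cref{lm:1} and \cref{cl:bound-intro}, adapted to $\lambda = \frac{1}{(1+\epsilon)^2(3\alpha+1)}$ and to the error terms $\delta$ and $\phi$. The overall goal is to show
\[
c(T)\le (1+O(\epsilon))(3\alpha+1)\sum_{v\in V}x_v
\qquad\text{and}\qquad
c(H')\le (1+\epsilon)^2(3\alpha+1)\sum_{v\in V}x_v .
\]
Adding these and using $\sum_v x_v\le \OPT$ gives the claim. Feasibility of the packing follows from \cref{inv:view} and \cref{inv:weight-bound-overview}, as explained above, so \cref{lm:packing} applies.

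The bound for $H'$ is immediate. Every $v\in H$ has $\lev(v)=\bt(v)$, hence $x_v=\lambda(1+\epsilon)^{-\bt(v)}\ge\lambda\tau_v$. Therefore
\[
c(H')\le\sum_{v\in H}\tau_v\le(1+\epsilon)^2(3\alpha+1)\sum_v x_v .
\]

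For $T$, we use the tightness definition (\cref{def:tightness-intro}). I expect it to say roughly that $c_v<(1+\epsilon)\bigl(\hat\wts(v)+\phi(v)+|F(v)|(1+\epsilon)\lambda c_v\bigr)$, i.e.\ the viewed weight plus dead weight plus a forgotten allowance is close to $c_v$. We then split $\hat\wts(v)$ using the orientation of \Cref{obs:orientation-intro}, writing
\[
\hat\wts(v)=\sum_{u\in N^{in}[v]\setminus F(v)}\viewx{v}{u}+\sum_{u\in N^{out}(v)\setminus F(v)}\viewx{v}{u}.
\]
For each term we use $\viewx{v}{u}\le x_u+\delta(u)$. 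The out-neighbour terms, together with the forgotten ones, number at most $\alpha$ or so in total. Each is at most $(1+\epsilon)\lambda\tau_v\le(1+\epsilon)\lambda c_v$, up to the deviation. I expect the design to make out-neighbours plus forgotten vertices cost at most $2\alpha\cdot(1+\epsilon)\lambda c_v$, which is exactly why $3\alpha+1$ replaces $2\alpha+1$.

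Summing over $v\in T$ and swapping the order of summation gives $\sum_{v\in T}\sum_{u\in N^{in}[v]}(x_u+\delta(u))\le(\alpha+1)(\sum_u x_u+\delta)$. This yields
\[
c(T)\le(1+\epsilon)\Bigl[(\alpha+1)\Bigl(\sum_u x_u+\delta\Bigr)+\phi\Bigr]+\frac{2\alpha}{3\alpha+1}\,c(T)+(\text{small}),
\]
so that
\[
\frac{\alpha+1}{3\alpha+1}\,c(T)\le(1+\epsilon)\Bigl[(\alpha+1)\sum_u x_u+(\alpha+1)\delta+\phi\Bigr].
\]
Now apply \cref{inv:error-bound-overview}: $(\alpha+1)\delta+\phi\le\epsilon\bigl(\xi c(T)+(\alpha+1)\sum x_u\bigr)$. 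Since $\xi(1+\epsilon)=\frac{\alpha+1}{3\alpha+1}$, the $c(T)$ term on the right is only an $\epsilon$-fraction of the left side and can be absorbed. This gives $c(T)\le\frac{(1+\epsilon)^2}{1-\epsilon}(3\alpha+1)\sum_u x_u$, which is within the $(1+5\epsilon)$ budget for $\epsilon<1/3$.

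The main obstacle is the middle step: getting the right form of tightness and the precise accounting for forgotten neighbours $F(v)$ and ignoring sets $I(v)$. Concretely, we must show that out-neighbours, forgotten vertices, and the slack in $\viewx{v}{u}$ versus $x_u$ together contribute at most $2\alpha(1+\epsilon)\lambda c_v$ plus terms charged to $\delta$ and $\phi$. Only this yields the coefficient $\frac{\alpha+1}{3\alpha+1}$ that matches $\xi$. I would first try to prove $|F(v)|\le\alpha$ (and that forgotten vertices are in-neighbours or have bounded value), then charge deviations to $\delta(u)$ only through in-edges, so each $\delta(u)$ is counted at most $\alpha+1$ times.
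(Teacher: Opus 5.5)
Your outer skeleton matches the paper: the $H'$ bound, the in-neighbour swap giving the factor $\alpha+1$, and the absorption via \cref{inv:error-bound} using $(1+\epsilon)\xi=\frac{\alpha+1}{3\alpha+1}$. But the middle step, which you yourself flag as the obstacle, is a genuine gap, and your plan for closing it would not work. First, the tightness notion is not the one you guess. By \cref{def:tightness-intro}, $v$ is tight iff $\hat{\wts}(v)+\phi(v)>\frac{\gamma c_v}{1+\epsilon}$ with $\gamma=\frac{2\alpha+1}{3\alpha+1}$, and there is no forgotten-neighbour allowance. Forgotten neighbours simply do not appear in $\hat{\wts}(v)$, so they need no accounting in the bound on $c(T)$; they matter only for feasibility, i.e.\ for $\sum_v x_v\le\OPT$. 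The passage from $2\alpha+1$ to $3\alpha+1$ comes from the relaxed threshold $\gamma$ together with the rescaled $\lambda$, not from ``out-neighbours plus forgotten vertices costing $2\alpha(1+\epsilon)\lambda c_v$''. Your substitute plan also fails on its own terms. \cref{inv:forgotten-bound} only gives $|F(v)|\le(1+\epsilon)\alpha$, not $|F(v)|\le\alpha$, and forgotten vertices need not be in-neighbours. So you would not land on the coefficient $\frac{\alpha+1}{3\alpha+1}$ needed to cancel against $\xi$.

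Second, you bound the out-neighbour terms via $\viewx{v}{u}\le x_u+\delta(u)$ ``up to the deviation''. These deviation terms cannot be charged. A fixed $u$ is an out-neighbour of each of its in-neighbours, and in-degrees are unbounded, so those copies of $\delta(u)$ are not controlled by $(\alpha+1)\delta$. The missing ingredient is \cref{cor:view-neighbor} combined with \cref{inv:tight}. For $u\in N[v]\setminus F(v)$ we have $\view{v}{u}\ge\dom(v)\ge\bs(v)$, hence $\viewx{v}{u}\le\lambda(1+\epsilon)^{-\bs(v)}\le(1+\epsilon)\lambda c_v$ outright, with no deviation term. Applying \cref{lm:view-delta} only on in-edges then gives $\hat{\wts}(v)\le\sum_{u\in N^{in}[v]\setminus F(v)}(x_u+\delta(u))+(1+\epsilon)\alpha\lambda c_v$. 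Summing the tightness condition over $T$ yields
\[
\gamma\, c(T)\le(1+\epsilon)\Bigl[(\alpha+1)\sum_{v\in V}x_v+(\alpha+1)\delta+\phi\Bigr]+\frac{\alpha}{3\alpha+1}\,c(T).
\]
Since $\gamma-\frac{\alpha}{3\alpha+1}=\frac{\alpha+1}{3\alpha+1}$, this is exactly the inequality you were aiming for, and the rest of your argument then goes through.
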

We defer the proof of this lemma to \cref{sec:approximation}.
The proof is similar to \cref{lm:1}, but also accounts for deviation and dead weight, and forgotten neighbors, which we define later.

\subsubsection{Improving the Runtime of Rebuild Via Sparsification}\label{sec:rebuild-intro}
The main bottleneck in previous dynamic primal-dual algorithms is the rebuild step. The algorithm chooses a level $k$ and rebuilds all vertices at level $k$ or below, using the $\water$ subroutine from \cref{lm:water-filling} as the main ingredient. In bounded-degree graphs this is acceptable, because scanning all relevant neighborhoods costs only $O(\Delta)$ per processed vertex. In our setting, however, we cannot afford to inspect every neighbor of every processed vertex.

\paragraph*{Main idea: $\water$ sparsification}
Instead of applying the $\water$ subroutine on the entire subgraph induced by $V_{\le k}$, which may have size $\Theta(|V_{\le k}|\Delta)$, we construct a sparsified subgraph of size roughly $O(|V_{\le k}|\alpha)$ and run $\water$ only on this subgraph. We show that this sparsification preserves the approximation guarantee up to a constant factor.

To explain how we handle this, we will provide a simplified rebuild subroutine that works under the assumption that all vertices are active (recall that we assumed this in previous sections as well), since passive vertices introduce additional challenges. An additional simplifying assumption is that $F(v) \subseteq V_{\le k}$. In the actual algorithm, this might not be the case. We handle that by a more intricate potential analysis that allows us to charge previous rebuilds part of the costs. Then we provide the amortized update time analysis of this simplified subroutine. The full subroutine and its analysis are provided in \cref{sec:rebuild,sec:rebuild-analysis}.

\paragraph*{Reducing the degree in Rebuild}
Consider a rebuild at level $k$. A natural obstacle is that we can afford to scan or process only $O(\alpha)$ (ignoring the dependency on $\epsilon$) relevant neighbors (in $D_{\le k} \setminus V_{\le k}$) per vertex in $V_{\le k}$, but the number of relevant neighbors per vertex in $V_{\le k}$ could be $\Delta$ rather than $O(\alpha)$.
The key structural point is that there cannot be too many vertices in $D_{\le k} \setminus V_{\le k}$ that have many neighbors inside $V_{\le k}$.

\begin{lemma}\label{lm:high-deg}
  Let $B$ denote the set of vertices $u \in D_{\le k} \setminus V_{\le k}$ such that $|N[v] \cap V_{\le k}| > (1+\epsilon)\alpha$. Then $|B|\le \frac{1}{\epsilon}|V_{\le k}|$.
\end{lemma}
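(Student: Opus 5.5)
We will bound $|B|$ by counting the edges between $B$ and $V_{\le k}$ in two ways. (We read the condition in the statement as $|N[u] \cap V_{\le k}| > (1+\epsilon)\alpha$ for $u \in B$; the ``$v$'' there is a typo.)

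Let $E'$ be the set of edges with one endpoint in $B$ and the other in $V_{\le k}$. Since $B \subseteq D_{\le k} \setminus V_{\le k}$, the sets $B$ and $V_{\le k}$ are disjoint. Moreover, $u \notin V_{\le k}$, so the closed neighborhood $N[u] \cap V_{\le k}$ consists only of genuine neighbors of $u$. The hypothesis therefore gives
\[
|E'| = \sum_{u \in B} |N[u] \cap V_{\le k}| > (1+\epsilon)\alpha |B|.
\]

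For the upper bound, consider the subgraph induced by $S = B \cup V_{\le k}$. It has $|B| + |V_{\le k}|$ vertices. Since the graph has arboricity at most $\alpha$, this subgraph has at most $\alpha(|S|-1) < \alpha(|B| + |V_{\le k}|)$ edges. Equivalently, by \Cref{obs:orientation-intro}, orient the edges so that every vertex has out-degree at most $\alpha$. Every edge of $E'$ is then charged to its tail, which lies in $S$, so $|E'| \le \alpha(|B| + |V_{\le k}|)$.

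Combining the two bounds,
\[
(1+\epsilon)\alpha|B| < \alpha|B| + \alpha|V_{\le k}|,
\]
which gives $\epsilon|B| < |V_{\le k}|$, that is, $|B| \le \frac{1}{\epsilon}|V_{\le k}|$. If $B$ is empty the claim is trivial, and if $\alpha = 0$ there are no edges, so $B$ is empty.

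There is no real obstacle here. The only point needing care is that $B$ and $V_{\le k}$ are disjoint. This ensures that each edge of $E'$ is counted exactly once from the $B$ side, and that the closed-neighborhood convention adds nothing, because $u \notin V_{\le k}$.
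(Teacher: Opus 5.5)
Your proof is correct and is essentially the paper's argument: you double-count the edges between $B$ and $V_{\le k}$, getting more than $(1+\epsilon)\alpha|B|$ from the degree condition and at most $\alpha(|B|+|V_{\le k}|)$ from the arboricity of the subgraph induced on $B \cup V_{\le k}$. Your remarks that $B$ and $V_{\le k}$ are disjoint, and that the closed-neighborhood convention therefore adds nothing, are accurate details the paper leaves implicit.
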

\begin{proof}
  Every vertex of $B$ contributes $> (1+\epsilon)\alpha$ edges to the bipartite graph induced by $B$ and $V_{\le k}$, thus this graph contains $> (1+\epsilon)\alpha|B|$ edges. On the other hand, since the graph has arboricity $\le \alpha$, every induced subgraph on $|B|+|V_{\le k}|$ vertices has at most $\alpha(|B|+|V_{\le k}|)$ edges. Therefore, $(1+\epsilon)\alpha|B|<\alpha(|B|+|V_{\le k}|)$, which implies $\epsilon |B|<|V_{\le k}|$, as claimed.
\end{proof}

Thus, all vertices in $D_{\le k} \setminus V_{\le k}$ that have more than $(1+\epsilon)\alpha$ neighbors in $V_{\le k}$ form only a small exceptional set of size $O(|V_{\le k}|/\epsilon)$. Further, we will show that it suffices to focus only on edges between $V_{\le k}$ and this exceptional set. Hence, the total number of such edges would be $O\!\left(\frac{\alpha}{\epsilon}|V_{\le k}|\right)$; equivalently, the average number of such relevant neighbors per vertex of $V_{\le k}$ is $O(\alpha/\epsilon)$. In that case, the rebuilding step, and in particular the call to the $\water$ subroutine from \cref{lm:water-filling} (which deliberately ignores neighbors of vertices in $V_{\le k}$ that lie outside of $V_{\le k} \cup B$), would run in time $O\!\left(k+\frac{\alpha}{\epsilon}\cdot |V_{\le k}|\right)$, instead of $O(k+\Delta\cdot |V_{\le k}|)$. We will later reason that these neighbors can be ignored, and this will have only a small effect on the approximation factor; but it does increase the leading constant in the approx factor.

\paragraph*{Forgetting vertices}
Now we explain the mechanism that allows us to exclude vertices with only a few neighbors in $V_{\le k}$ from the rebuilding process by ``forgetting'' them. We refer to this process as \emph{forgetting} neighbors. The key point is that such vertices contribute only a small amount to the weights, and therefore their effect can be safely approximated.

A key property (by \cref{lm:1}) is that the packing values satisfy $x_v \le (1+\epsilon)\lambda\tau_v$.
Thus the total contribution of any $(1+\epsilon)\alpha$ neighbors to the weight of $v$ is at most $(1+\epsilon)^2\alpha\lambda\tau_v \le \frac{\alpha}{3\alpha+1}c_v$.

Consider a vertex $v$ with $\le (1+\epsilon)\alpha$ neighbors in $V_{\le k}$, and let $\wts'(v)$ be the weight of $v$ without these neighbors. If $\wts'(v) \le \frac{2\alpha+1}{3\alpha+1}c_v$, then even after adding back the contribution of these neighbors, the total weight of $v$ remains at most $c_v$. Therefore, from the perspective of the water-filling process, these neighbors can be ignored without violating feasibility. To make use of this, we relax the notion of tightness. With this relaxed definition, the approximation factor increases only by a constant factor from the original factor of $(1+\epsilon)(2\alpha+1)$ of \cite{DGI22}.

\begin{definition}\label{def:tightness-intro}
  A vertex is called \emph{tight} if $\hat{\wts}(v) + \phi(v) > \frac{2\alpha+1}{(1+\epsilon)(3\alpha+1)}c_v$; otherwise, it is called \emph{slack}.
\end{definition}

The gap between $\frac{2\alpha+1}{(1+\epsilon)(3\alpha+1)}c_v$ and $c_v$ is what allows a slack vertex to forget a bounded number of neighbors without violating feasibility.

To implement this idea, each vertex $v$ maintains a set $F(v)$ of \emph{forgotten} neighbors. If $u \in F(v)$, then $u$ stops contributing to $\hat{\wts}(v)$. Forgotten neighbors do not create outdated copies, and therefore they do not contribute to the deviation. We bound the number of forgotten neighbors with the following invariant.

\begin{invariantintro}[\cref{inv:forgotten-bound} in \cref{sec:invariants}]\label{inv:forgotten-bound-intro}
  For every vertex $v$, we have $|F(v)| \le (1+\epsilon)\alpha$.
\end{invariantintro}

Symmetrically, from the perspective of $u$, if $u \in F(v)$, then $v$ is called an \emph{ignoring} neighbor of $u$. We denote by $I(u)$ the set of neighbors that currently ignore $u$.

Thus, with our relaxed definition of tightness, a slack vertex can safely forget up to $(1+\epsilon)\alpha$ neighbors without violating \cref{inv:weight-bound-overview}. This is the key step that allows us to eliminate most low-impact edges from the subgraph on which the rebuild subroutine operates.

\paragraph*{A simplified rebuild subroutine}
The goal of the rebuild subroutine is to restore \cref{inv:error-bound-overview} by cleaning up all deviation and dead weight up to level $k$, and then to restore \cref{inv:tight-overview} by running the $\water$ subroutine. In our simplified subroutine, we omit the details of how this cleanup is performed, and simply assume that it can be done in constant time. In the actual algorithm, it is done via forgetting some vertices outside $V_{\le k}$, and thus requires a more careful analysis. The full details of the cleanup and the complete rebuild subroutine are provided in \cref{sec:rebuild,sec:rebuild-analysis}.

The pseudocode of our simplified rebuild subroutine is provided in \cref{alg:rebuild-simplified}. At a high level, the rebuild subroutine operates as follows. Let $T'$ be the vertices in $D_{\le k}$ that were tight at the beginning of the rebuild subroutine, and let $V' = V_{\le k}$. First, the subroutine cleans up deviation and dead weight up to level $k$. Then, for every vertex $v \in V_{\le k}$, it attempts to forget $v$ for all $u \in N[v] \setminus I(v)$.
This is implemented using auxiliary sets of ``need-to-be-forgotten'' vertices $F'(v)$, which track which vertices request to be forgotten. More precisely, for each vertex $u$, the set $F'(u)$ will store all vertices $v$ that request $u$ to forget them.
As a result, some vertices $u$ might accumulate more than $(1+\epsilon)\alpha - |F(v)|$ requests in $F'(u)$. Let $S$ be the set of such vertices. The vertices in $S$ correspond to the high-degree vertices from \cref{lm:high-deg}, namely those that cannot safely forget all their incident requests.
For these vertices, and the vertices from $T'$, we reset their forgotten neighbor sets by setting $F(v) = \emptyset$. After that, we invoke the $\water$ subroutine on the vertices in $V'$, considering only neighbors from $N[v] \setminus I(v)$ for each $v \in V'$. Due to the forgetting step, all such relevant neighbors belong to $S \cup T'$, and thus the subroutine operates only on this sparsified structure. The complete description of our actual rebuild subroutine is provided in \cref{sec:rebuild-description}.

Thus, the rebuild subroutine reduces the instance to one in which every relevant edge is incident to a vertex in $S \cup T'$, and the total number of such edges is $O\left(\frac{\alpha}{\epsilon}|V_{\le k}|\right)$.

\begin{algorithm}\caption{A simplified version of our rebuild subroutine}\label{alg:rebuild-simplified}
  Let $T' = T_{\le k}$ and $V' = V_{\le k}$\;
  Clean up dead weight and deviation up to level $k$\;
  \lForEach{$v \in V'$ and $u \in N[v] \setminus I(v)$}{Add $v$ to $F'(u)$}
  Let $S$ be the set of vertices with $|F(v)| + |F'(v)| > (1+\epsilon)\alpha$\;
  \lForEach{$v \in S \cup T'$}{
    $F(v) \gets \emptyset$
  }
  \lForEach{$v \notin S$ such that $F'(v) \ne \emptyset$}{
    $F(v) \gets F(v) \cup F'(v)$
  }
  Set $\dom(v) \gets k$ for all $v \in S \cup T'$ and $\Dom(v) \gets k$ for all $v \in V'$\;
  Call $\water(k, S \cup T', V')$ considering only $N[v] \setminus I(v)$ for each $v \in V_{\le k}$\;
\end{algorithm}

Let $f_v = |F(v)|$ at the beginning of the rebuild subroutine, and let $f_v' = |F'(v)|$ at the end of it. Next, we prove the following claim about the runtime of the rebuild subroutine.

\begin{claim}\label{cl:rebuild-runtime-simplified}
  The runtime of the rebuild subroutine, ignoring the cost of cleaning up deviation up to level $k$, is $O\!\left(k + \frac{\alpha}{\epsilon}|V_{\le k}| + \alpha |T_{\le k}| + \sum_{v \notin S \cup T_{\le k}} f_v'\right)$.
\end{claim}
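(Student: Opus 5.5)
We bound the cost of \cref{alg:rebuild-simplified} line by line, charging each line to one of the four terms $k$, $\frac{\alpha}{\epsilon}|V_{\le k}|$, $\alpha|T_{\le k}|$, and $\sum_{v\notin S\cup T_{\le k}} f'_v$. We rely on two facts. First, by \cref{inv:forgotten-bound-intro}, $|F(v)|\le(1+\epsilon)\alpha$ before the rebuild. Second, the number of edges scanned in the request step is controlled by a counting argument in the spirit of \cref{lm:high-deg}.

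\textbf{The request step.} Line~3 scans, for each $v\in V'$, the pairs $(v,u)$ with $u\in N[v]\setminus I(v)$, and adds $v$ to $F'(u)$. Its cost equals the total number of requests $\sum_u f'_u$ (plus $O(|V'|)$). We split this sum according to the type of $u$.
\begin{itemize}
\item Vertices $u\notin S\cup T_{\le k}$ contribute exactly the last term of the claim.
\item For $u\in T_{\le k}\setminus S$, we have $f'_u\le(1+\epsilon)\alpha-f_u$ by the definition of $S$, which gives $O(\alpha|T_{\le k}|)$.
\item For $u\in S$, write $f'_u\le|N[u]\cap V_{\le k}|$, so the contribution is at most the number of edges between $S$ and $V_{\le k}$.
\end{itemize}
To bound these $S$-edges, we apply the arboricity bound to the subgraph induced by $S\cup V_{\le k}$. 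It has at most $\alpha(|S|+|V_{\le k}|)$ edges, so it suffices to show $|S|=O(|V_{\le k}|/\epsilon)$.

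\textbf{Bounding $|S|$.} This is the main obstacle. A vertex $u\in S$ has $f_u+f'_u>(1+\epsilon)\alpha$, but $f_u$ may be large while $f'_u$ is tiny, so \cref{lm:high-deg} does not directly apply. I would handle this with the simplifying assumption stated before the algorithm, that $F(u)\subseteq V_{\le k}$. Under it, the requests from $F'(u)$ and the members of $F(u)$ are disjoint sets of neighbours of $u$ in $V_{\le k}$, since forgotten vertices are in $I(\cdot)$ and hence do not request again. Therefore $|N[u]\cap V_{\le k}|\ge f_u+f'_u>(1+\epsilon)\alpha$, and the argument of \cref{lm:high-deg} applies verbatim. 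It gives $|S|\le|V_{\le k}|/\epsilon$, and hence the number of $S$–$V_{\le k}$ edges is at most $\alpha(1+1/\epsilon)|V_{\le k}|=O(\frac{\alpha}{\epsilon}|V_{\le k}|)$.

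\textbf{The remaining lines.}
\begin{itemize}
\item Computing $S$ (line~4) only touches vertices $u$ with $F'(u)\ne\emptyset$, so it is dominated by the request count above.
\item Resetting $F(v)$ for $v\in S\cup T'$ (line~5) costs $O(f_v)=O(\alpha)$ per vertex. This is $O(\alpha|T_{\le k}|)$, plus $O(\alpha|S|)=O(\frac{\alpha}{\epsilon}|V_{\le k}|)$.
\item Merging $F'(v)$ into $F(v)$ for $v\notin S$ (line~6) costs $O(f'_v)$ per vertex, which is already charged above.
\item Setting the levels (line~7) costs $O(|S|+|T'|+|V'|)$.
\end{itemize}

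\textbf{The $\water$ call.} Finally, we invoke \cref{lm:water-filling} with $D'=S\cup T'$ and the neighbourhoods restricted to $N[v]\setminus I(v)$. Its cost is $O(k+|S\cup T'|+\sum_{v\in V'}|N[v]\setminus I(v)|)$. After the forgetting step, every remaining relevant neighbour of $v\in V'$ lies in $S\cup T'$: a non-$S$, non-$T'$ neighbour $u$ that requested $v$ has now forgotten it, so $u\in I(v)$. Hence the last sum counts edges between $V'$ and $S\cup T'$. Those going to $S$ are bounded above. Those going to $T'\setminus S$ number at most $\sum_{u\in T'\setminus S}f'_u=O(\alpha|T_{\le k}|)$. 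Summing all contributions yields the claimed bound.
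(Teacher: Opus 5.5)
Your proposal is correct and takes essentially the same route as the paper. Like the paper, you split the cost into $k$, $|V_{\le k}|$, $\sum_{v\notin S\cup T_{\le k}} f'_v$ and $\sum_{v\in S\cup T_{\le k}}(f_v+f'_v)$, then bound $|S|$ via \cref{lm:high-deg}, $f_v$ via \cref{inv:forgotten-bound-intro}, $\sum_{v\in S} f'_v$ via arboricity, and $f'_v$ on $T_{\le k}\setminus S$ by $(1+\epsilon)\alpha$; your observation that $F(u)$ and $F'(u)$ are disjoint subsets of $N[u]\cap V_{\le k}$ under the simplifying assumption is exactly the step the paper leaves implicit when it applies \cref{lm:high-deg} to $S$. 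Two slips do not affect the asymptotics: since $S$ may intersect $V_{\le k}$, the bound is $|S|\le\frac{1+\epsilon}{\epsilon}|V_{\le k}|$ rather than $|V_{\le k}|/\epsilon$; and the $\water$ incidences from $V'$ to $u\in T'\setminus S$ are bounded by $f_u+f'_u\le(1+\epsilon)\alpha$ (the reset un-forgets $F^{\old}(u)$), not by $f'_u$ alone.
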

\begin{proof}
At the moment of the call to $\water$, every vertex in $V'$ is incident only on vertices from $S \cup T'$, so $\sum_{v \in V'} |N[v] \setminus I(v)| \le \sum_{v \in S \cup T'} f_v + f_v'$. Note that we can compute the set $S$ in time $O\!\left(\sum_{v \in V} f_v'\right)$, since we only need to consider vertices with nonempty $F'(v)$.
Observe that the runtime of everything up to the call to $\water$ can be expressed in terms of $f_v$ and $f_v'$ as
$
O\!\left(k + |V_{\le k}| + \sum_{v \notin S \cup T_{\le k}} f_v' + \sum_{v \in S \cup T_{\le k}} f_v + f_v'\right)
$.

According to \cref{lm:high-deg}, the size of $S$ can be bounded by $|S| \le \frac{1+\epsilon}{\epsilon}|V_{\le k}|$. By \cref{inv:forgotten-bound-intro}, $f_v \le (1+\epsilon)\alpha$. Thus, $\sum_{v \in S \cup T_{\le k}} f_v = O(\alpha |S| + \alpha |T_{\le k}|)$.

Since the graph has arboricity at most $\alpha$, we have $\sum_{v \in S} f_v' \le \alpha(|S| + |V_{\le k}|) = O\!\left(\frac{\alpha}{\epsilon}|V_{\le k}|\right)$. Finally, note that for $v \in T_{\le k} \setminus S$, we have $f_v' \le (1+\epsilon)\alpha$, and so $\sum_{v \in T_{\le k} \setminus S} f_v' = O(\alpha |T_{\le k}|)$. Thus, we obtain the claimed runtime bound.
\end{proof}

We note that in the actual algorithm, $F'(v)$ can contain vertices outside $V_{\le k}$. Because of this, we also bound the runtime in terms of passive vertices and outdated copies, which we do not explain here; these terms are handled in the full analysis in \cref{sec:rebuild-analysis}.

We proceed to bound the amortized update time of the rebuild subroutine, and in particular the term $O(\sum_{v \notin S \cup T_{\le k}} f_v')$.

\paragraph*{Amortized analysis of the simplified rebuild}
We analyze the amortized update time using a potential function defined formally in \cref{sec:analysis}. Here we provide an analysis of our simplified rebuild subroutine. The actual analysis follows the same idea, but has to account for more details, since the actual rebuild subroutine is more complicated. In particular, our analysis here ignores the presence of passive vertices.

The goal is to prove the following lemma.
\begin{lemma}\label{lm:intro-potential-decrease}
  The decrease in potential due to rebuild is $\ge \sum_{v \notin S \cup T_{\le k}} f_v' + \frac{\alpha}{\epsilon}|V_{\le k}| + \alpha |T_{\le k}|$.
\end{lemma}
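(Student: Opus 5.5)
We will use a potential function $\Phi$ made of three parts, and show that each part pays for one term on the right-hand side. The potential is not fixed by the overview, so we pick it here, consistently with what rebuild has to pay for. Write
\[
\Phi \;=\; \Phi_{\mathrm{err}} + \Phi_{\mathrm{lev}} + \Phi_{\mathrm{forg}} .
\]
The error part is $\Phi_{\mathrm{err}} = \Theta\!\left(\tfrac{1}{\epsilon\lambda}\right)$ times the quantity $(\alpha+1)\delta+\phi$, normalized by the $\tau$'s. The level part is $\Phi_{\mathrm{lev}} = \Theta(\alpha/\epsilon)$ per unit of level height, i.e.\ each vertex $v$ contributes $\Theta(\alpha/\epsilon)\cdot(L-\lev(v))$, and tight vertices $v$ contribute an analogous term in $L-\dom(v)$. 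The forgetting part charges each vertex $u$ a credit proportional to the unused capacity $(1+\epsilon)\alpha-|F(u)|$ in its forgotten set.

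\textbf{Step 1: the term $\sum_{v\notin S\cup T_{\le k}} f_v'$.} A vertex $u\notin S\cup T_{\le k}$ ends the rebuild with $F(u)\gets F(u)\cup F'(u)$. Since $u\notin S$, the new size stays at most $(1+\epsilon)\alpha$, so \cref{inv:forgotten-bound-intro} is respected. The size of $F(u)$ grows by exactly $f_u'$, so the unused capacity, and hence $\Phi_{\mathrm{forg}}$, drops by $f_u'$. We have to check that this credit is replenished only at moments already paid for elsewhere: $F(u)$ is emptied only for $u\in S\cup T_{\le k}$, and that refill is charged against the other two terms below.

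\textbf{Step 2: the terms $\frac{\alpha}{\epsilon}|V_{\le k}|$ and $\alpha|T_{\le k}|$.} The trigger for a rebuild at level $k$ is a violation of \cref{inv:error-bound-overview} restricted to levels $\le k$. The choice of $k$ should follow \cite{bhattacharya2019new}: take $k$ so that the deviation and dead weight at levels $\le k$ dominate, up to a constant factor, $\epsilon\bigl(\xi c(T_{\le k})+(\alpha+1)\sum_{v\in V_{\le k}}x_v\bigr)$. The cleanup then reduces $(\alpha+1)\delta+\phi$ by at least this amount.

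To convert this into counts, use $x_v\ge\lambda(1+\epsilon)^{-k}$ for $v\in V_{\le k}$ and $c_v\ge(1+\epsilon)^{-k}$ for tight $v$ with $\dom(v)\le k$. With the scaling factor $\frac{1}{\epsilon\lambda}(1+\epsilon)^{k}$ inside $\Phi_{\mathrm{err}}$, its drop is at least of order $\frac{\alpha}{\epsilon}|V_{\le k}|+\alpha|T_{\le k}|$, after accounting for $\xi\approx 1/3$ and $\lambda^{-1}=\Theta(\alpha)$.

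We must also show that running $\water$ does not increase $\Phi$ by more than a constant fraction of this drop. Raising levels to $k$ lowers $x_v$, so the weights only go down. The subsequent lowering performed by $\water$ is charged to $\Phi_{\mathrm{lev}}$, as in \cite{bhattacharya2019new}. Resetting $F(v)=\emptyset$ for $v\in S\cup T_{\le k}$ raises $\Phi_{\mathrm{forg}}$ by at most $(1+\epsilon)\alpha(|S|+|T_{\le k}|)$. By \cref{lm:high-deg} this is at most $O\!\left(\frac{\alpha}{\epsilon}|V_{\le k}|+\alpha|T_{\le k}|\right)$. We absorb it by inflating the constant in $\Phi_{\mathrm{err}}$.

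\textbf{Main obstacle.} The hard part is choosing constants so that the $\Phi_{\mathrm{forg}}$ refill after resetting $S\cup T_{\le k}$ is strictly dominated by the cleanup drop of $\Phi_{\mathrm{err}}$. The refill for $S$ is what forces the $\frac{1}{\epsilon}$ factor through \cref{lm:high-deg}. The refill for $T_{\le k}$ requires the $\xi c(T)$ term in \cref{inv:error-bound-overview}, which is exactly why that term is present. I would first fix the multiplier of $\Phi_{\mathrm{forg}}$ to be $1$, then choose the multiplier of $\Phi_{\mathrm{err}}$ as $c_0/(\epsilon\lambda)$ with $c_0$ large enough to cover both refills plus the claimed net decrease.
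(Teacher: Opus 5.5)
Your Step~1 and your bound on the refill of the forget potential for $S\cup T_{\le k}$ (via $f_v\le(1+\epsilon)\alpha\le 2\alpha$ and \cref{lm:high-deg}) are exactly the paper's argument. The paper then simply adds the cleanup drop from \cref{thm:down-intro}, which is proved in full as \cref{lm:down}. The gap is in Step~2, where you try to derive that drop yourself.

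The factor $(1+\epsilon)^{k}$ depends on the level of the particular rebuild, so it cannot sit inside a potential function. A genuine potential must weight the error level by level: for example $\beta_{\dom(v)}\phi(v)$, and $\beta_i$ per outdated copy at level $i$, with $\beta_i\propto(1+\epsilon)^{i}$. Weights tied to $\tau_v$ do not match the level-indexed trigger condition. The drop is then $\sum_{i\le k}\beta_i\bigl((\alpha+1)\delta_i+\phi_i\bigr)$, which is at most, not at least, $\beta_k\bigl((\alpha+1)\delta_{\le k}+\phi_{\le k}\bigr)$.

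Using only the violation at level $k$ and the uniform bound $x_v\ge\lambda(1+\epsilon)^{-k}$, you cannot rule out a bad configuration. The error could sit at low levels, where its weight is small, while the mass of $V_{\le k}$ sits near level $k$; your estimate then becomes exponentially small in $k$. The missing idea is the \emph{minimality} of $k$: every $i<k$ satisfies $(\alpha+1)\delta_{\le i}+\phi_{\le i}\le\epsilon\bigl(\xi c(T_{\le i})+(\alpha+1)\sum_{v\in V_{\le i}}x_v\bigr)$. Summation by parts then gives $\sum_{i\le k}\beta_i\bigl((\alpha+1)\delta_i+\phi_i\bigr)\ge\epsilon\sum_{i\le k}\beta_i\bigl(\xi c(T_i)+(\alpha+1)\sum_{v\in V_i}x_v\bigr)$. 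Now each $v\in V_i$ is paid at the weight of its own level, and $\beta_i x_v$ does not depend on $i$. Each $v\in T_i$ contributes $\epsilon\beta_i\xi c_v$, which you bound below using $c_v\ge(1+\epsilon)^{-\bs(v)-1}$ and $\dom(v)\ge\bs(v)$.

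There are two further problems. First, even granting your scaling, the multiplier $\frac{1}{\epsilon\lambda}$ gives only $\epsilon(\alpha+1)\lambda\cdot\frac{1}{\epsilon\lambda}=\alpha+1$ per vertex of $V_{\le k}$, not $\Theta(\alpha/\epsilon)$, because the $\lambda$ cancels against $x_v$. With an absolute constant $c_0$ this cannot absorb the $\frac{4\alpha}{\epsilon}|V_{\le k}|$ refill from resetting $S$. You need weights of order $\frac{(1+\epsilon)^i}{\epsilon^2\lambda}$, which is why the paper takes $\beta_i=\frac{18}{\epsilon^2}\cdot\frac{(1+\epsilon)^{i+1}}{\lambda}$.

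Second, your $\Phi_{\mathrm{lev}}$, which assigns each vertex $\Theta(\alpha/\epsilon)(L-\lev(v))$, \emph{increases} when $\water$ lowers levels, so it cannot pay for that lowering. The cleanup can make vertices of $D_{\le k}$ slack and push them below their old dominating levels. Vertices of $V_{\le k}$ can therefore end strictly below $\lev^{\old}$, giving an uncontrolled increase. No such term is needed: in the paper the cost of $\water$ is paid out of the cleanup drop itself.
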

This implies that the amortized runtime of our rebuild subroutine is $O(k)$. This $O(k)$ runtime cost can also be charged to the decrease in potential. This is done using the level potential defined in \cref{sec:analysis}.

First, we lower bound the decrease in potential due to cleaning up deviation up to level $k$. This decrease can be summarized by the following lemma, whose proof is deferred to \cref{sec:rebuild-analysis}. The actual statement there is different and includes some additional terms.
\begin{lemma}\label{thm:down-intro}
  The decrease in potential due to cleaning up dead weight and deviation up to level $k$ is at least $\frac{5(\alpha+1)}{\epsilon}|V_{\le k}| + \frac{5(\alpha+1)}{\epsilon}|T_{\le k}|$.
\end{lemma}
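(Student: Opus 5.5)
The starting point will be the fact that a rebuild at level $k$ is triggered only when \cref{inv:error-bound-overview} is violated at that level. Concretely, the level $k$ is chosen, as in \cite{bhattacharya2019new}, as a level at which the restricted error $(\alpha+1)\delta_{\le k}+\phi_{\le k}$ exceeds $\epsilon$ times the corresponding restricted right-hand side $\xi c(T_{\le k})+(\alpha+1)\sum_{v\in V_{\le k}}x_v$, up to a constant. I would define the potential so that it contains a term $\Theta(1/\epsilon^2)\cdot\big((\alpha+1)\delta+\phi\big)$, normalized by the appropriate power $(1+\epsilon)^{k}/\lambda$ so that these quantities are measured in units of "vertices". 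Under this potential, cleaning up all deviation and dead weight at levels $\le k$ releases potential proportional to the restricted error. It therefore suffices to lower bound that error, via the violated invariant, by $\Omega(|V_{\le k}|+|T_{\le k}|)$ in normalized units.

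The key steps, in order, are as follows.
\begin{enumerate}
\item \emph{Lower bound each vertex of $V_{\le k}$.} For every $v\in V_{\le k}$ we have $\lev(v)\le k$, hence $x_v=\lambda(1+\epsilon)^{-\lev(v)}\ge\lambda(1+\epsilon)^{-k}$. Summing, $(\alpha+1)\sum_{v\in V_{\le k}}x_v\ge(\alpha+1)\lambda(1+\epsilon)^{-k}|V_{\le k}|$.
\item \emph{Lower bound each vertex of $T_{\le k}$.} For every tight $v$ with $\dom(v)\le k$ we have $c_v\ge(1+\epsilon)^{-\bs(v)-1}\ge(1+\epsilon)^{-k-1}$, because $\dom(v)\ge\bs(v)$. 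Hence $\xi c(T_{\le k})\ge\xi(1+\epsilon)^{-k-1}|T_{\le k}|$. Since $\xi=\frac{\alpha+1}{(1+\epsilon)(3\alpha+1)}$ and $\lambda\approx\frac{1}{3\alpha+1}$, this term is comparable to $(\alpha+1)\lambda(1+\epsilon)^{-k}|T_{\le k}|$.
\item \emph{Combine with the violated invariant.} Multiplying the violated inequality by the potential coefficient, which I would choose as $\frac{5}{\epsilon^2}\cdot\frac{(1+\epsilon)^{k+O(1)}}{\lambda}$, shows that the potential released by the cleanup is at least $\frac{5(\alpha+1)}{\epsilon}\big(|V_{\le k}|+|T_{\le k}|\big)$.
\end{enumerate}

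I expect the main obstacle to be the normalization. The deviation and dead weight stored at different levels carry different scales, so the potential must weight each unit of error by a level-dependent factor. The difficulty is to make sure that this factor, evaluated at error stored at levels $\le k$, is at least $(1+\epsilon)^k/\lambda$. I would handle this by charging deviation $\delta(v)$ at the level $\lev(v)$ where it was created, and dead weight at the vertex's current level. I would then check that any error at a level $j\le k$ has value at most $\lambda(1+\epsilon)^{-j}$ per unit, so the potential per unit of value is correspondingly larger, and the inequality runs in the favorable direction.

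A second concern is that the cleanup must actually reset all of this error to zero (or move it into forgotten sets), so that the full restricted error term is released. I would verify this against the cleanup description and absorb the constant-factor slack between the two definitions of $\delta(v)$ into the constant $5$.
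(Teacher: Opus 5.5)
\textbf{There is a genuine gap} in step 3 and in how you resolve the normalization obstacle.

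The potential cannot depend on $k$, because it is one fixed function used across all updates. To keep insertions and deletions cheap, error at level $j$ can be weighted by at most about $\beta_j\propto(1+\epsilon)^{j}/\lambda$ per unit of value. An outdated copy at level $j$ has value $\lambda(1+\epsilon)^{-j}$. A uniform weight large enough for every $k$, such as $(1+\epsilon)^{L}/\lambda$, would charge an insertion that creates a copy at level $j$ about $(1+\epsilon)^{L-j}/\epsilon^2$, i.e.\ up to $\Theta(Cn/\epsilon^2)$.

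So the released potential is $\sum_{i\le k}\beta_i a_i$ with $a_i=(\alpha+1)\delta_i+\phi_i$. Error stored at a level $j<k$ is worth \emph{less} potential per unit of value than error at level $k$, not more. Your sentence that the inequality ``runs in the favorable direction'' has the direction reversed. Multiplying the violated inequality by $\beta_k$ lower-bounds $\beta_k\sum_{i\le k}a_i$, which is an \emph{upper} bound on the released potential, so nothing follows.

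A violation at level $k$ alone really is insufficient. Suppose the error is concentrated at level $0$ while $V_{\le k}=V_k$ and $T_{\le k}=T_k$. Then the inequality at $k$ can hold while $\beta_0 a_0$ is only about a $(1+\epsilon)^{-k}$ fraction of the required amount.

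\textbf{The missing idea} is that $k$ is the \emph{smallest} violating level. Write $b_i=\xi c(T_i)+(\alpha+1)\sum_{v\in V_i}x_v$. Minimality gives $\sum_{j\le i}a_j\le\epsilon\sum_{j\le i}b_j$ for every $i<k$, and this is exactly what excludes the example above, since level $0$ would itself be violating. Summation by parts gives
\[
\sum_{i\le k}\beta_i a_i=\beta_k\sum_{i\le k}a_i-\sum_{i<k}(\beta_{i+1}-\beta_i)\sum_{j\le i}a_j .
\]
Since $\beta_{i+1}\ge\beta_i$, the strict inequality at $k$ together with the prefix inequalities for $i<k$ yields $\sum_{i\le k}\beta_i a_i>\epsilon\sum_{i\le k}\beta_i b_i$.

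Your per-vertex bounds must then be applied level by level, not uniformly at level $k$. Take $\beta_i=\frac{18}{\epsilon^2}\cdot\frac{(1+\epsilon)^{i+1}}{\lambda}$.
\begin{itemize}
\item Each $v\in V_i$ contributes $\epsilon(\alpha+1)\beta_i x_v=\frac{18(1+\epsilon)(\alpha+1)}{\epsilon}$.
\item Each tight $v$ with $\dom(v)=i$ contributes $\epsilon\xi\beta_i c_v\ge\frac{18(\alpha+1)}{\epsilon}(1+\epsilon)^{i-\bs(v)+1}\ge\frac{18(\alpha+1)}{\epsilon}$. This uses $c_v\ge(1+\epsilon)^{-\bs(v)-1}$ and $\bs(v)\le i$.
\end{itemize}
Since the cleanup zeroes $\Phi^{\delta}_{\le k}+\Phi^{\phi}_{\le k}$, these per-level bounds give the claim.
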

The proof of this lemma is similar to \cite{bhattacharya2021dynamic}. However, as the runtime bound from the previous section, unlike \cite{bhattacharya2021dynamic}, this lemma alone is not enough to bound the amortized runtime cost.

We introduce the following potential function, which intuitively allows us to cover the costs associated with forgetting neighbors of $v$.
For each vertex $v$, we define the {\em forget potential} as $\Phi_{\forget}(v) = -|F(v)|$. The \emph{total forget potential} is defined as $\Phi_{\forget} = \sum_{v \in V} \Phi_{\forget}(v)$.

\begin{proof}[Proof of \Cref{lm:intro-potential-decrease}]
  Consider a vertex $v \notin S \cup T_{\le k}$. By the algorithm, we set $F(v) \gets F(v) \cup F'(v)$, so $\Phi_{\forget}(v)$ decreases by $f'_v$. Consider a vertex $v \in S \cup T_{\le k}$. For $v$, we set $F(v) \gets \emptyset$, and so $\Phi_{\forget}(v)$ increases by $f_v$. According to \cref{inv:forgotten-bound-intro}, $f_v \le (1+\epsilon)\alpha$, which is at most $2\alpha$. Finally, from \cref{lm:high-deg}, we have $|S| \le \frac{1+\epsilon}{\epsilon}|V_{\le k}| \le \frac{2}{\epsilon}|V_{\le k}|$. Thus the increase in the total forget potential is at most
  $
    \frac{4\alpha}{\epsilon}|V_{\le k}| + 2\alpha|T_{\le k}| - \sum_{v \notin S \cup T_{\le k}} f_v'$.
  Summing this with the bound from \cref{thm:down-intro} yields the claimed bound.
\end{proof}

\section{Our Dynamic Algorithm}\label{sec:algo}
Let $\epsilon \in (0, 1/3)$ be a constant. Define $L = \lceil\log_{1+\epsilon} Cn\rceil + 1$. Following the primal-dual framework from \cite{bhattacharya2019new,bhattacharya2021dynamic}, each vertex $v \in V$ is assigned a \emph{dominating level} $\dom(v) \in [L]$. The \emph{dominated level} of a vertex $v \in V$ is defined as $\Dom(v) = \max_{u \in N[v]} \dom(u)$.

Define $\lambda = \frac{1}{(1+\epsilon)^2(3\alpha+1)}$, $\xi = \frac{\alpha+1}{(1+\epsilon)(3\alpha+1)}$ and $\gamma = \frac{2\alpha+1}{3\alpha+1}$. Each vertex $v \in V$ is assigned a level $\lev(v) \in [L]$, and the \emph{packing value} of $v$ is defined as $x_v = \wat{\lev(v)}$. The \emph{true weight} of a vertex $v \in V$ is defined as $\wts(v) = \sum_{u \in N[v]} x_u$. In addition, each vertex $v$ is assigned \emph{dead weight} $\phi(v)$.

Let $\tau_v = \min_{u \in N[v]} c_u$ be the minimum cost in $N[v]$. For each vertex $v$ we define the \emph{base level} $\bs(v) = \left\lfloor\log_{1+\epsilon}\frac{1}{c_v}\right\rfloor$ and the \emph{bottom level} $\bt(v) = \max_{u \in N[v]} \bs(u)$. Throughout the algorithm, we will maintain $\lev(v) \ge \bt(v)$, which implies $x_v \le (1+\epsilon)\lambda \tau_v$ (see \cref{cor:x-ub}). Let $H$ be the set of vertices $v$ such that $\lev(v) = \bt(v)$, which we call the set of \emph{heavy} vertices.

\paragraph{Lazy levels}
Following the framework of \cite{bhattacharya2019new}, we categorize vertices into \emph{active} and \emph{passive}:
\begin{itemize}
\item \textbf{Active.} A vertex $v$ is called \emph{active} if $\lev(v) = \Dom(v)$. Throughout the algorithm, we will make sure that $\dom(v) \ge \bs(v)$ for each vertex $v$ (this is enforced by \cref{inv:tight} defined later). Thus, vertices from $H$ are always active. Let $A_i$ be the set of active vertices with $\lev(v) = i$.
\item \textbf{Passive.} A vertex $v$ is called \emph{passive} if $\Dom(v) < \lev(v)$. Due to the running time issues, we are not able to track $\Dom(v)$ exactly. For such vertices, we keep a value $\zlev(v) \le \Dom(v)$, similar to \cite{bukov2025nearly}, which we call the \emph{lazy level} of $v$. Let $P_i$ be the set of all passive vertices $v$ with $\zlev(v) = i$.
\end{itemize}

\paragraph{Forgotten neighbors}
Throughout the algorithm, we allow vertices to ``forget'' some of their neighbors and stop keeping track of their levels. Let $F(v) \subseteq N[v]$ be the set of \emph{forgotten neighbors} for vertex $v$ (note that a vertex can forget itself).
Additionally, let $I(v)$ be the set of \emph{ignoring neighbors} of $v$. That is, $I(v)$ consists of vertices $u \in N[v]$ such that $v \in F(u)$. Note that we allow a vertex to forget itself.

\paragraph{Viewed levels}
The algorithm does not maintain the true weight $\wts(v)$ of every vertex explicitly. Indeed, when the level $\lev(v)$ of a vertex $v$ increases, the true weight of each neighbor $u \in N[v]$ decreases, and updating these values exactly could require scanning up to $\Delta+1$ vertices.

To avoid this, for every vertex $v$ and every neighbor $u \in N[v] \setminus F(v)$, we maintain a \emph{viewed level} $\view{v}{u} \le \lev(u)$, representing the level at which $v$ currently views $u$. The corresponding \emph{viewed value} is $\viewx{v}{u}=\wat{\view{v}{u}}$, so $\viewx{v}{u}\ge x_u$. We then define the \emph{viewed weight} of $v$ by $\hat{\wts}(v)=\sum_{u\in N[v] \setminus F(v)} \viewx{v}{u}$. Thus, when $\lev(v)$ increases, its neighbors may keep their old viewed level of $v$; that is, they may store an outdated copy of $\lev(v)$.

\paragraph{Outdated copies and deviation}
For each vertex $v$ and level $0 \le i \le L$, we maintain the set $\hat{N}_i[v]$ of vertices $u \in N[v] \setminus \Ign{v}$ such that $\view{u}{v} = i$ as a linked list. Thus, $\hat{N}_i[v]$ is the set of non-ignoring neighbors that view $v$ at level $i$. For each $v$, pointers to the lists $\{\hat{N}_i[v]\}_{0 \le i \le L}$ are stored in an array of length $L+1$.

If $i < \lev(v)$ and $\hat{N}_i[v] \neq \emptyset$, we call $\hat{N}_i[v]$ the \emph{outdated copy of $v$ at level $i$}; the list $\hat{N}_{\lev(v)}[v]$ is called the \emph{primary copy} of $v$. Note that the primary copy may be empty. For each level $i$, let $Q_i$ be the set of vertices $v$ such that $\lev(v) > i$ and $\hat{N}_i[v] \neq \emptyset$, i.e., the set of all vertices that have an outdated copy at level $i$. We store each $Q_i$ as a linked list, with pointers to these lists in an array of length $L+1$. In addition, for every outdated copy of $v$ at level $i$, we store a pointer to the corresponding entry of $v$ in $Q_i$, so that $v$ can be removed from $Q_i$ in constant time.

We define \emph{deviation at level $i$} as $\delta_i = \wat{i}\cdot |Q_i|$ and the \emph{total deviation} as $\delta = \sum_{i=0}^L \delta_i$.

\begin{definition}
  A vertex $v$ is called \emph{tight} when $\hat{\wts}(v) + \phi(v) > \frac{\gamma c_v}{1+\epsilon}$; otherwise, it is called \emph{slack}.
\end{definition}

\begin{assumption}\label{as:dead-weight}
  Throughout the algorithm, we assume that slack vertices have $\phi(v) = 0$. This is easy to enforce, since whenever a vertex becomes slack, we can set $\phi(v)$ to 0. As we will show later, updating the relevant data structures takes constant time.
\end{assumption}

\paragraph{Basic data structures}
Throughout the algorithm, let $T \subseteq V$ be the set of tight vertices. For each level $i$, let $D_i$ be the set of vertices $v$ with $\dom(v)=i$, let $T_i = T \cap D_i$, let $\phi_i = \sum_{v \in D_i} \phi(v)$. We maintain each of the sets $V_i$, $A_i$, $P_i$, $D_i$, $T_i$, and $Q_i$ as a linked list, and we store the collections $\{V_i\}_{0\le i\le L}$, $\{A_i\}_{0\le i\le L}$, $\{P_i\}_{0\le i\le L}$, $\{D_i\}_{0\le i\le L}$, $\{T_i\}_{0\le i\le L}$, $\{Q_i\}_{0\le i\le L}$, together with the values $\{\phi_i\}_{0\le i\le L}$, $\{\delta_i\}_{0\le i\le L}$, $\{c(T_i)\}_{0 \le i \le L}$ and $\{\sum_{v \in V_i} x_v\}_{0 \le i \le L}$, in arrays of length $L+1$. For each vertex $v$, we store pointers to its locations in these lists; for the lists $Q_i$, we store these pointers in an array of length $L+1$. Hence, whenever $\lev(v)$, $\zlev(v)$, $\dom(v)$, $\hat{\wts}(v)$, $\phi(v)$, or some $\delta_i$ changes, all affected lists and values can be updated in constant time.

For each vertex $v$, we store the vertices of $N[v]$ in a balanced binary search tree ordered by cost. This allows us to maintain $\tau_v$, $\bt(v)$, and the sets $H$ in $O(\log n)$ time per edge insertion or deletion. In addition, we maintain a list $H'$ which consists of cheapest neighbors of vertices in $H$ (for each $v \in H$ we take only a single cheapest neighbor $u \in N[v]$). For each $v \in H'$ we count for how many vertices from $H$ it is selected as a cheapest neighbor. This counter allows us to update $H'$ whenever $H$ changes in constant time.

For each vertex $v$, we also maintain the sets $N[v]$, $N[v]\setminus \Ign{v}$, $\Ign{v}$, and $F(v)$ as linked lists. For each edge $\{u, v\}$, we keep pointers to the locations of $u$ in these lists for $v$, and symmetrically for $u$. Thus edge deletions, as well as forgetting or recalling a neighbor, can be handled in constant time.

\subsection{Invariants}\label{sec:invariants}
In this section we describe the invariants that our algorithm aims to satisfy.

\begin{invariant}[Bounded Weight invariant]\label{inv:weight-bound}
  For every vertex $v$, we have $\hat{\wts}(v) \le c_v - |F(v)| \cdot (1+\epsilon)\lambda c_v$.
\end{invariant}

\begin{invariant}[Tightness Invariant]\label{inv:tight}
  For every vertex $v$, we have $\dom(v) \ge \bs(v)$. Moreover, if $\dom(v) > \bs(v)$, then $v$ is tight.
\end{invariant}

\begin{invariant}[Forgotten Neighbors Invariant]\label{inv:forgotten-bound}
  For every vertex $v$, we have $\lvert\Forg{v}\rvert \le (1+\epsilon)\alpha$. Moreover, if $\dom(v) > \bs(v)$, then $\Forg{v} = \emptyset$.
\end{invariant}

\begin{invariant}[Bounded $\delta,\phi$ Invariant]\label{inv:error-bound}
  $(\alpha+1)\delta + \phi \le \epsilon\bigl(\xi c(T) +
  (\alpha+1)\cdot \sum_{v \in V}x_v\bigr).$
\end{invariant}

\begin{invariant}[Level Invariant]\label{inv:level}
  For every vertex $v$, we have $\Dom(v) \le \lev(v)$.
\end{invariant}

\begin{corollary}\label{cor:x-ub}
  For any vertex $v$ we have $\lev(v) \ge \bt(v)$ and $x_v \le (1+\epsilon)\lambda \tau_v$.
\end{corollary}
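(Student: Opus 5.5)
We prove \cref{cor:x-ub} directly from \cref{inv:tight} and \cref{inv:level}, combined with the definitions of $\bs$, $\bt$ and $\tau_v$.

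\textbf{Step 1: $\lev(v) \ge \bt(v)$.} Since $\bt(v) = \max_{u \in N[v]} \bs(u)$, pick $u \in N[v]$ attaining this maximum. By \cref{inv:tight}, $\dom(u) \ge \bs(u)$. By definition of the dominated level, $\Dom(v) = \max_{w \in N[v]} \dom(w) \ge \dom(u)$. By \cref{inv:level}, $\lev(v) \ge \Dom(v)$. Chaining these gives
\[
  \lev(v) \ge \Dom(v) \ge \dom(u) \ge \bs(u) = \bt(v).
\]

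\textbf{Step 2: the bound on $x_v$.} Let $u \in N[v]$ be a cheapest neighbor, so $c_u = \tau_v$. Because $\bs$ is monotone in cost, $\bs(u) = \floor{\log_{1+\epsilon}(1/\tau_v)}$, which is the largest base level in $N[v]$; hence $\bt(v) = \floor{\log_{1+\epsilon}(1/\tau_v)}$. The floor satisfies
\[
  \bt(v) > \log_{1+\epsilon}(1/\tau_v) - 1,
\]
which rearranges to $(1+\epsilon)^{-\bt(v)} < (1+\epsilon)\tau_v$. Since $x_v = \lambda(1+\epsilon)^{-\lev(v)}$ is decreasing in $\lev(v)$, Step 1 gives
\[
  x_v \le \lambda(1+\epsilon)^{-\bt(v)} \le (1+\epsilon)\lambda\tau_v.
\]

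No step is a genuine obstacle. The one point to handle correctly is the direction of the floor, which costs exactly the factor $1+\epsilon$. One must also check that the minimum-cost vertex of $N[v]$ realizes the maximum base level, which follows because $\log_{1+\epsilon}(1/c)$ is decreasing in $c$.
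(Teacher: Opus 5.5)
Your proposal is correct and follows the same route as the paper: \cref{inv:tight} and the definition of $\Dom(v)$ give $\Dom(v)\ge\bt(v)$, \cref{inv:level} then gives $\lev(v)\ge\bt(v)$, and the bound on $x_v$ follows because $x_v=\lambda(1+\epsilon)^{-\lev(v)}$ is decreasing in $\lev(v)$. The paper leaves the last step as ``recall $x_v=\lambda(1+\epsilon)^{-\lev(v)}$''; you spell it out by identifying $\bt(v)=\lfloor\log_{1+\epsilon}(1/\tau_v)\rfloor$ and using the floor to get $(1+\epsilon)^{-\bt(v)}<(1+\epsilon)\tau_v$.
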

\begin{proof}
  According to \cref{inv:tight} and the definition of $\Dom(v)$, we have $\Dom(v) \ge \bt(v)$, and so $\lev(v) \ge \bt(v)$ by \cref{inv:level}. To conclude the proof, recall that $x_v = \wat{\lev(v)}$.
\end{proof}

\begin{invariant}[View Invariant]\label{inv:view}
  For every vertex $v$ and every $u \in N[v] \setminus \Ign{v}$, we have $\dom(u) \le \view{u}{v} \le \lev(v)$.
\end{invariant}

\begin{corollary}\label{cor:view-neighbor}
  For any $v \in V$ and $u \in N[v] \setminus F(v)$, we have $\view{v}{u} \ge \dom(v)$.
\end{corollary}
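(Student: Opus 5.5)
The claim follows directly from \cref{inv:view} once the roles of $u$ and $v$ are exchanged.

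Fix $v \in V$ and $u \in N[v] \setminus F(v)$. Since $u \notin F(v)$, the viewed level $\view{v}{u}$ is maintained; this is the level at which $v$ views $u$. By the definition of ignoring neighbors, $I(u)$ consists of the vertices $w \in N[u]$ with $u \in F(w)$. Because $u \notin F(v)$, we get $v \notin I(u)$. Adjacency is symmetric, and in the case $u=v$ we have $v \in N[v]$ trivially, so $v \in N[u]$. Hence $v \in N[u] \setminus \Ign{u}$.

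We now apply \cref{inv:view} to the vertex $u$, with $v$ in the role of the neighbor. This gives
\[
\dom(v) \le \view{v}{u} \le \lev(u).
\]
The left inequality is exactly the claim.

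The only point needing care is the bookkeeping of the two arguments of $\view{\cdot}{\cdot}$. By the notation, $\view{v}{u}$ is the level at which $v$ views $u$. In \cref{inv:view}, $\view{u}{v}$ denotes how $u$ views $v$, where $v$ is the vertex whose level is bounded. After the renaming, the invariant constrains $\view{v}{u}$ for $v \in N[u]\setminus I(u)$. The condition $v \notin I(u)$ is equivalent to $u \notin F(v)$, which is precisely the hypothesis. The remaining steps are immediate.
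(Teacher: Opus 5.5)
Your proof is correct and follows the same argument as the paper. Since $u \notin F(v)$ gives $v \notin \Ign{u}$, applying \cref{inv:view} to $u$ with $v$ as the neighbor yields $\dom(v) \le \view{v}{u}$; you simply spell out the index bookkeeping more explicitly.
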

\begin{proof}
  Notice that if $u \notin \Forg{v}$, then $v \notin \Ign{u}$. Therefore, $\dom(v) \le \view{v}{u}$ by \cref{inv:view}.
\end{proof}

\begin{invariant}[Lazy Level Invariant]\label{inv:zlev}
  For every passive vertex $v$, we have $(N[v] \setminus I(v)) \cap T \ne \emptyset$ and $\zlev(v) \le \max_{u \in (N[v]\setminus I(v)) \cap T} \dom(u)$.
\end{invariant}
In other words, \cref{inv:zlev} means that for every passive vertex $v$, there is always a tight neighbor among $N[v] \setminus I(v)$ at dominating level $\zlev(v)$ or above. Thus, passive vertices are always dominated by $T$.

\begin{lemma}
  If \cref{inv:tight,inv:zlev} hold, then $T \cup H'$ forms a dominating set.
\end{lemma}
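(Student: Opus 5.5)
We fix an arbitrary vertex $v$ and exhibit a vertex of $T \cup H'$ in $N[v]$. The argument splits according to whether $v$ is passive or active. We then split further on how $\Dom(v)$ compares with $\bt(v)$.

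\emph{Passive case.} If $v$ is passive, \cref{inv:zlev} directly supplies a tight vertex $u \in (N[v]\setminus I(v)) \cap T \subseteq N[v]$. Hence $v$ is dominated by $T$.

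\emph{Active case.} If $v$ is active, then $\lev(v) = \Dom(v)$. Let $u \in N[v]$ be a neighbor with $\dom(u) = \Dom(v)$; such a $u$ exists by the definition of $\Dom(v)$ as a maximum over $N[v]$.
\begin{itemize}
\item If $\dom(u) > \bs(u)$, then $u$ is tight by \cref{inv:tight}, so $u \in T$ and $v$ is dominated.
\item Otherwise $\dom(u) = \bs(u)$, since \cref{inv:tight} gives $\dom(u) \ge \bs(u)$. In this subcase we need $v \in H$. We have $\lev(v) = \Dom(v) = \bs(u) \le \bt(v)$, because $\bt(v) = \max_{w \in N[v]} \bs(w)$. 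In the other direction, $\Dom(v) \ge \dom(w) \ge \bs(w)$ for every $w \in N[v]$, again by \cref{inv:tight}, so $\Dom(v) \ge \bt(v)$. Therefore $\lev(v) = \bt(v)$, and $v \in H$ by definition.
\end{itemize}
Since $v \in H$, its chosen cheapest neighbor lies in $H' \cap N[v]$ by the construction of $H'$, so $v$ is dominated by $H'$.

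The only slightly delicate point is this second subcase: we must check that the inequality $\bs(u) \le \bt(v)$ is matched by the reverse inequality $\Dom(v) \ge \bt(v)$. That reverse inequality uses the first clause of \cref{inv:tight} applied to every neighbor of $v$, not just to $u$. Everything else is immediate from the definitions.
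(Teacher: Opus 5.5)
Your proof is correct and follows essentially the paper's route: passive vertices are handled by \cref{inv:zlev}, and for an active $v$ you take $u \in N[v]$ with $\dom(u)=\Dom(v)$. Your split on whether $\dom(u)>\bs(u)$ is just the contrapositive of the paper's split on whether $v\in H$, and you spell out the inequality $\Dom(v)\ge\bt(v)$, which the paper leaves implicit. As in the paper, the claim that every vertex is either active or passive tacitly uses \cref{inv:level} ($\Dom(v)\le\lev(v)$), which is not among the lemma's stated hypotheses.
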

\begin{proof}
  Consider an active vertex $v$. If $v \in H$, then $v$ is dominated by $H'$. Otherwise, there is a vertex $u \in N[v]$ with $\dom(u) = \Dom(v) = \lev(v)$, and it must be tight by \cref{inv:tight}. Next, consider a passive vertex $v$. For \cref{inv:zlev} to hold, there must be at least one tight vertex in $N[v] \setminus I(v)$, and so $v$ is dominated by $T$.
\end{proof}

\subsection{Approximation guarantee}\label{sec:approximation}
In this section we assume that all invariants from \Cref{sec:invariants} hold. Let $\OPT$ be the cost of an optimal dominating set.

\begin{lemma}\label{lm:feasibility}
  $\{x_v\}_{v \in V}$ is a feasible packing.
\end{lemma}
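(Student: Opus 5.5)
We need to show $\wts(v) \le c_v$ for every vertex $v$. The plan is to split the true weight into the contribution of forgotten neighbors and the rest:
\[
\wts(v) = \sum_{u \in N[v] \setminus F(v)} x_u + \sum_{u \in F(v)} x_u .
\]
We then bound the first sum by the viewed weight and the second by \cref{cor:x-ub} together with the size bound on $F(v)$. Finally we invoke \cref{inv:weight-bound}, whose slack term $|F(v)|\cdot(1+\epsilon)\lambda c_v$ is designed exactly to absorb the forgotten part.

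\emph{First sum.} Fix $u \in N[v]\setminus F(v)$. Since $u \notin F(v)$, we have $v \notin \Ign{u}$. So \cref{inv:view}, applied to the vertex $u$ and its neighbor $v \in N[u]\setminus \Ign{u}$, gives $\view{v}{u} \le \lev(u)$. Because $i \mapsto \wat{i}$ is decreasing in $i$, this yields $\viewx{v}{u} = \wat{\view{v}{u}} \ge \wat{\lev(u)} = x_u$. Summing over $u$ gives
\[
\sum_{u \in N[v]\setminus F(v)} x_u \le \hat{\wts}(v).
\]
The only care needed here is to apply the view invariant with the roles of $u$ and $v$ correctly oriented; this is the same translation used in \cref{cor:view-neighbor}.

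\emph{Second sum.} For each $u \in F(v) \subseteq N[v]$, \cref{cor:x-ub} gives $x_u \le (1+\epsilon)\lambda\tau_u$. Since $v \in N[u]$, we have $\tau_u = \min_{w\in N[u]} c_w \le c_v$. Hence each forgotten neighbor contributes at most $(1+\epsilon)\lambda c_v$, and the second sum is at most $|F(v)|\cdot(1+\epsilon)\lambda c_v$. This is the step where the bound on packing values (from scaling by $\lambda$) does the real work, and it is the main point to verify. Note that we bound $x_u$ via $\tau_u$, not $\tau_v$, and that $v$ lies in the closed neighborhood of $u$, which also covers the case $u=v$ when $v$ forgets itself.

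\emph{Combining.} We get
\[
\wts(v) \le \hat{\wts}(v) + |F(v)|(1+\epsilon)\lambda c_v \le c_v ,
\]
where the last step is \cref{inv:weight-bound}. Thus the packing is feasible. Neither \cref{inv:forgotten-bound} nor the dead weight $\phi$ is needed for this statement.
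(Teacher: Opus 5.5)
Your proof is correct and uses the same decomposition as the paper. You bound the non-forgotten part by $\hat{\wts}(v)$ via the view invariant, bound each forgotten neighbor by $(1+\epsilon)\lambda c_v$ via \cref{cor:x-ub}, and close with \cref{inv:weight-bound}. Your details are slightly more careful than the paper's write-up in two places: it bounds the forgotten part by $|F(v)|(1+\epsilon)\lambda\tau_v$, where your $\tau_u \le c_v$ is the valid justification, and it cites \cref{cor:view-neighbor}, whereas the direction actually needed is $\view{v}{u} \le \lev(u)$ from \cref{inv:view}, which is exactly what you use.
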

\begin{proof}
  Consider any vertex $v$. Then,
  \[
    \wts(v) = \sum_{u \in N[v]} x_u = \sum_{u \in N[v] \setminus F(v)} x_u + \sum_{u \in F(v)} x_u.
  \]
  By \cref{cor:view-neighbor,inv:weight-bound} we have
  \[
    \sum_{u \in N[v] \setminus F(v)} x_u \le \sum_{u \in N[v] \setminus F(v)} \viewx{v}{u} \le c_v - |F(v)| \cdot (1+\epsilon)\lambda c_v.
  \]
  By \cref{cor:x-ub} we have
  \[
    \sum_{u \in F(v)} x_u \le |F(v)| \cdot (1+\epsilon)\lambda \tau_v \le |F(v)| \cdot (1+\epsilon)\lambda c_v.
  \]
  Therefore, $\wts(v) \le c_v$.
\end{proof}

\begin{lemma}\label{lm:duality}
  $\sum_{v \in V} x_v \le \OPT$.
\end{lemma}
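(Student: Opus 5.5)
The plan is to combine \cref{lm:feasibility} with the first part of \cref{lm:packing}. That argument only needs feasibility of the packing, not any property of the levels. Everything else needed is already established.

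\emph{Step 1: feasibility.} Let $D^{*}$ be an optimal dominating set, so $c(D^{*}) = \OPT$. By \cref{lm:feasibility}, which holds under the standing assumption that all invariants of \cref{sec:invariants} are satisfied, we have $\wts(u) = \sum_{v \in N[u]} x_v \le c_u$ for every vertex $u$.

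\emph{Step 2: charging.} Since $D^{*}$ is dominating, every vertex $v$ lies in $N[u]$ for at least one $u \in D^{*}$. All packing values $x_v = \wat{\lev(v)}$ are nonnegative. Charging each $x_v$ to every $u \in D^{*}$ with $v \in N[u]$ therefore gives
\[
  \sum_{v \in V} x_v \le \sum_{u \in D^{*}} \sum_{v \in N[u]} x_v = \sum_{u \in D^{*}} \wts(u) \le \sum_{u \in D^{*}} c_u = \OPT .
\]

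\emph{Where the difficulty lies.} There is no real obstacle at this point. The difficult part, which is handling forgotten neighbors and stale viewed levels, is already absorbed into \cref{lm:feasibility}. One check is still needed: the inequality must use the \emph{true} weights $\wts(u)$, not the viewed weights $\hat{\wts}(u)$. This holds because \cref{lm:feasibility} bounds the true weight.
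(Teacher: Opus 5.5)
Your proof is correct and matches the paper's: fix an optimal dominating set $D^{*}$, use nonnegativity of the $x_v$ and domination to bound $\sum_{v \in V} x_v$ by $\sum_{u \in D^{*}} \wts(u)$, and then apply \cref{lm:feasibility} to get $\wts(u) \le c_u$. Your observation that the bound must use the true weights $\wts(u)$ rather than $\hat{\wts}(u)$ is correct, and \cref{lm:feasibility} provides exactly that.
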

\begin{proof}
  Let $D^{*}$ be a dominating set of cost $\OPT$. Then, according to \cref{lm:feasibility},
  \[
    \sum_{v \in V} x_v
    \le \sum_{v \in D^{*}} \sum_{u \in N[v]} x_u
    \le \sum_{v \in D^{*}} c_v
    = \OPT.
  \]
\end{proof}

\begin{lemma}\label{lm:H-cost}
  $c(H') \le (1+5\epsilon)(3\alpha+1) \sum_{v \in H} x_v$.
\end{lemma}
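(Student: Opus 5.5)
The plan is to charge each selected cheapest neighbor in $H'$ to the packing value of a heavy vertex that selected it, exactly as in \cref{cl:bound-intro}. By construction, every $u \in H'$ is the single chosen cheapest neighbor of at least one $v \in H$, and $c_u = \tau_v$ for that $v$. Summing over $H'$ and assigning each $u$ to one such $v$, which is an injective assignment, gives
\[
c(H') \le \sum_{v \in H} \tau_v .
\]
So it suffices to show that $\tau_v \le (1+5\epsilon)(3\alpha+1)\,x_v$ for every $v \in H$.

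Fix $v \in H$, so $\lev(v) = \bt(v) = \floor{\log_{1+\epsilon}\frac{1}{\tau_v}}$. By definition, $x_v = \wat{\lev(v)} = \lambda(1+\epsilon)^{-\bt(v)}$. Since $\bt(v) \le \log_{1+\epsilon}\frac{1}{\tau_v}$, we have $(1+\epsilon)^{-\bt(v)} \ge \tau_v$, and hence $x_v \ge \lambda \tau_v$. Consequently,
\[
\tau_v \le \frac{x_v}{\lambda} = (1+\epsilon)^2(3\alpha+1)\,x_v .
\]
This is the lower-bound counterpart of \cref{cor:x-ub} and uses only the floor in the definition of $\bt(v)$.

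It remains to check $(1+\epsilon)^2 \le 1+5\epsilon$. This holds because $(1+\epsilon)^2 = 1+2\epsilon+\epsilon^2 \le 1+3\epsilon$ for $\epsilon < 1/3$. Summing over $v \in H$ gives the claim.

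There is no real obstacle here. The only points to handle carefully are that the map from $H'$ back to $H$ is well defined, since each heavy vertex contributes one chosen neighbor and the counters only track multiplicities, and that the identity $\bt(v) = \max_{u \in N[v]} \bs(u)$ matches $\floor{\log_{1+\epsilon}(1/\tau_v)}$. The latter holds by monotonicity of the floor-log.
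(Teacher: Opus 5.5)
Your proof is correct and matches the paper's argument: you bound $c(H')$ by $\sum_{v \in H} \tau_v$ using the single chosen cheapest neighbor per heavy vertex, you use $x_v = \lambda(1+\epsilon)^{-\bt(v)} \ge \lambda \tau_v$ from the floor in the definition of $\bt(v)$, and you absorb $(1+\epsilon)^2$ into $1+5\epsilon$. Your explicit injectivity remark and the sharper bound $(1+\epsilon)^2 \le 1+3\epsilon$ are harmless refinements of the same argument.
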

\begin{proof}
  By definition, for every $v \in H$ we have $\lev(v) = \bt(v)$. Thus,
  \[
    x_v = \wat{\bt(v)} \ge \lambda \tau_v = \frac{\tau_v}{(1+\epsilon)^2(3\alpha+1)},
  \]
  and hence
  \[
    c(H') = \sum_{v \in H'} c_v \le \sum_{v \in H} \tau_v \le (1+\epsilon)^2(3\alpha+1)\sum_{v \in H} x_v.
  \]
  To conclude the proof, note that $(1+\epsilon)^2 \le 1+5\epsilon$, since $\epsilon \le 1$.
\end{proof}

\begin{definition}[Deviation for a vertex]
  For a vertex $v$, let $Q(v)$ be the set of all levels $i$ such that $v \in Q_i$. Then the \emph{deviation for vertex $v$} is defined as $\delta(v) = \sum_{i \in Q(v)} \wat{i}$.
\end{definition}
Observe that this definition is consistent with the definition of $\delta$, namely $\delta = \sum_{i = 0}^L \delta_i = \sum_{v \in V} \delta(v)$.

\begin{lemma}\label{lm:view-delta}
  For each vertex $v$ and each neighbor $u \in N[v] \setminus F(v)$, we have $\viewx{v}{u} \le x_u + \delta(u)$.
\end{lemma}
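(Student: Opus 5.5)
Fix $v$ and $u \in N[v] \setminus F(v)$, and write $i = \view{v}{u}$, so that $\viewx{v}{u} = \wat{i}$. Since $u \notin F(v)$, we have $v \notin \Ign{u}$, so $v \in N[u] \setminus \Ign{u}$. By the View Invariant (\cref{inv:view}) applied to $u$, we get $i = \view{v}{u} \le \lev(u)$. Membership of $v$ in $N[u]\setminus \Ign{u}$ with $\view{v}{u}=i$ also means, by the definition of the lists $\hat{N}_i[u]$, that $v \in \hat{N}_i[u]$, so $\hat{N}_i[u] \ne \emptyset$.

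The argument then splits into two cases according to whether $v$ holds the primary copy of $u$ or an outdated one.

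\textbf{Case $i = \lev(u)$.} Then $\viewx{v}{u} = \wat{\lev(u)} = x_u \le x_u + \delta(u)$, since $\delta(u) \ge 0$ as a sum of nonnegative terms.

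\textbf{Case $i < \lev(u)$.} Here $\lev(u) > i$ and $\hat{N}_i[u] \neq \emptyset$, so by the definition of $Q_i$ we have $u \in Q_i$, that is, $i \in Q(u)$. Hence
\[
\viewx{v}{u} = \wat{i} \le \sum_{j \in Q(u)} \wat{j} = \delta(u) \le x_u + \delta(u),
\]
using $x_u \ge 0$.

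No step looks like a genuine obstacle. The only point needing care is justifying that $v \in \hat{N}_i[u]$, which requires translating $u \notin F(v)$ into $v \notin \Ign{u}$; this is the same observation used in \cref{cor:view-neighbor}. We rely on the data structures $\hat{N}_i[u]$ and $Q_i$ being consistent with their stated definitions at all times, which the paper's definitions take as given.
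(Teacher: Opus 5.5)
Your proof is correct and follows the paper's argument: the same split on whether $\view{v}{u} = \lev(u)$, and in the outdated case the same observation that $v \notin \Ign{u}$ gives $\hat{N}_{\view{v}{u}}[u] \neq \emptyset$, hence $u \in Q_{\view{v}{u}}$. Your one addition is to invoke the View Invariant explicitly to rule out $\view{v}{u} > \lev(u)$, a step the paper uses only implicitly.
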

\begin{proof}
  Consider a vertex $v$ and some neighbor $u \in N[v] \setminus F(v)$. If $\view{v}{u} = \lev(u)$, then $\viewx{v}{u} = x_u$, and hence the inequality holds. Otherwise, assume that $\view{v}{u} < \lev(u)$. In that case, $\hat{N}_{\view{v}{u}}[u] \ne \emptyset$, since $v \notin I(u)$. Hence, $u \in Q_{\view{v}{u}}$ and $\delta(u) \ge \wat{\view{v}{u}} = \viewx{v}{u}$.
\end{proof}

\begin{observation}\label{obs:orientation}
  The edges of any graph $G$ of arboricity at most $\alpha$ can be oriented so that every vertex has out-degree at most $\alpha$.
\end{observation}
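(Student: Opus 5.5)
The plan is to reuse the argument of \cref{obs:orientation-intro} verbatim, since the statement is identical. By the Nash-Williams Theorem~\cite{nash1964}, recalled in the introduction, a graph of arboricity at most $\alpha$ has an edge set that can be partitioned into at most $\alpha$ edge-disjoint forests $F_1,\dots,F_\alpha$. I will orient each forest separately and then take the union of the orientations.

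For a single forest $F_j$, I fix a root in each connected component (tree) and orient every edge toward that root, that is, from child to parent. In a rooted tree every non-root vertex has exactly one parent and the root has none. Hence every vertex has out-degree at most $1$ within $F_j$.

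Since the forests are edge-disjoint and together cover $E$, each edge receives exactly one orientation. The out-degree of a vertex $v$ in $G$ is then the sum over $j$ of its out-degrees in the $F_j$, which is at most $\alpha$.

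The only nontrivial ingredient is the forest decomposition, that is, the equivalence between the density-based definition of arboricity and the forest-partition characterization. I will cite Nash-Williams for this rather than reprove it. If a self-contained route is preferred, an alternative is a degeneracy argument: the density definition gives a vertex of degree at most $2\alpha-1$ in every subgraph, and peeling such vertices yields an acyclic orientation. However, that route only bounds the out-degree by $2\alpha-1$, so it is weaker than what is needed, and the forest argument is the one to use.
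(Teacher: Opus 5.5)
Your proof is correct and follows the paper's argument essentially verbatim: Nash-Williams gives a partition into at most $\alpha$ forests, orienting each tree toward a root gives out-degree at most one per forest, and summing over the forests gives out-degree at most $\alpha$. Your side remark is also accurate: the degeneracy route only yields out-degree $2\alpha-1$, so the forest decomposition is the right tool here.
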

\begin{proof}
  Since $G$ has arboricity at most $\alpha$, its edges can be partitioned into at most $\alpha$ edge-disjoint forests. The edges of a forest can be oriented in such a way that each vertex has at most one outgoing edge. This can be done by fixing a root vertex in each tree of the forest and orienting every edge toward this root. Thus, applying such an orientation to each forest in the partition yields the required orientation.
\end{proof}

\begin{lemma}\label{lm:approximation-guarantee}
  $c(T) + c(H') \le (1 + 5\epsilon)\cdot 2(3\alpha + 1)\cdot\OPT$.
\end{lemma}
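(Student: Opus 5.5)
The plan is to mirror the proof of \cref{lm:1} and to account separately for three extra error sources: deviation $\delta$, dead weight $\phi$, and forgotten neighbors. By \cref{lm:H-cost} and \cref{lm:duality}, $c(H') \le (1+5\epsilon)(3\alpha+1)\OPT$. It therefore suffices to prove
\[
c(T) \le (1+5\epsilon)(3\alpha+1)\sum_{v\in V}x_v,
\]
and then add the two bounds and apply \cref{lm:duality} once more.

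\emph{Step 1: per-vertex bound on $c_v$.} Fix $v\in T$. Tightness gives $c_v < \frac{1+\epsilon}{\gamma}\bigl(\hat{\wts}(v)+\phi(v)\bigr)$. Fix the orientation of \cref{obs:orientation}, and split $\hat{\wts}(v)=\sum_{u\in N[v]\setminus F(v)}\viewx{v}{u}$ into in-neighbors (including $v$ itself) and out-neighbors.

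For the at most $\alpha$ out-neighbors, use $\viewx{v}{u}=\wat{\view{v}{u}}$ together with $\view{v}{u}\ge \dom(v)\ge \bs(v)$ from \cref{cor:view-neighbor} and \cref{inv:tight}. This gives $\viewx{v}{u}\le(1+\epsilon)\lambda c_v$. For the in-neighbors, use \cref{lm:view-delta}: $\viewx{v}{u}\le x_u+\delta(u)$. Putting these together,
\[
\hat{\wts}(v) \le \sum_{u\in N^{in}[v]}\bigl(x_u+\delta(u)\bigr) + (1+\epsilon)\alpha\lambda c_v .
\]

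\emph{Step 2: summing over $T$.} Each $u$ lies in $N^{in}[v]$ for at most $\alpha+1$ vertices $v$. Hence
\[
\sum_{v\in T}\hat{\wts}(v) \le (\alpha+1)\Bigl(\sum_u x_u+\delta\Bigr) + \frac{\alpha}{(1+\epsilon)(3\alpha+1)}\,c(T).
\]
Adding $\phi$ and invoking \cref{inv:error-bound} gives $(\alpha+1)\delta+\phi\le \epsilon\xi c(T)+\epsilon(\alpha+1)\sum_u x_u$. Therefore
\[
\frac{\gamma}{1+\epsilon}\,c(T)\le (1+\epsilon)(\alpha+1)\sum_u x_u + \Bigl(\frac{\alpha}{(1+\epsilon)(3\alpha+1)}+\epsilon\xi\Bigr)c(T).
\]

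\emph{Step 3: rearranging.} This step is the main obstacle, namely checking that the constants close. Multiply through by $(1+\epsilon)(3\alpha+1)$. The coefficient of $c(T)$ on the left becomes
\[
(2\alpha+1)-\alpha-\epsilon(\alpha+1) = (1-\epsilon)(\alpha+1).
\]
On the right we get $(1+\epsilon)^2(3\alpha+1)(\alpha+1)\sum_u x_u$. Dividing by $(1-\epsilon)(\alpha+1)$ yields
\[
c(T)\le \frac{(1+\epsilon)^2}{1-\epsilon}(3\alpha+1)\sum_u x_u .
\]
The factor $\frac{(1+\epsilon)^2}{1-\epsilon}$ is not always at most $1+5\epsilon$. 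I expect the sharper bookkeeping in \cref{inv:error-bound}, which charges $\epsilon\xi c(T)$ against the slack, to be what matches $(1+5\epsilon)$ for $\epsilon<1/3$. If not, I would tighten the tightness inequality to avoid one of the $(1+\epsilon)$ factors; a small constant adjustment in $\epsilon$ is the fallback.

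Finally, combine the two bounds with \cref{lm:duality} to obtain the factor $(1+5\epsilon)\cdot 2(3\alpha+1)$.
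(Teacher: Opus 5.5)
Your derivation follows the paper's proof line for line: the same orientation split, out-neighbors bounded via $\view{v}{u}\ge\dom(v)\ge\bs(v)$, in-neighbors bounded via \cref{lm:view-delta}, the $(\alpha+1)$ in-neighborhood counting, and \cref{inv:error-bound}. Every inequality up to
\[
(1-\epsilon)(\alpha+1)\,c(T)\le(1+\epsilon)^2(3\alpha+1)(\alpha+1)\sum_{v\in V}x_v
\]
is correct. The gap is that you stop there and misdiagnose what remains, so as written the stated constant is not established.

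Nothing more needs to be extracted from \cref{inv:error-bound}. You have already used it in full: the $\epsilon\xi c(T)$ term is exactly what reduced the coefficient to $(1-\epsilon)(\alpha+1)$. No retuning of the tightness threshold is needed. Adjusting $\epsilon$ would prove a different statement than the lemma. What closes the argument is the elementary check
\[
(1+\epsilon)^2\le(1+5\epsilon)(1-\epsilon)\iff 1+2\epsilon+\epsilon^2\le 1+4\epsilon-5\epsilon^2\iff 6\epsilon^2\le 2\epsilon\iff\epsilon\le\tfrac13 .
\]
This holds under the paper's standing assumption $\epsilon\in(0,1/3)$; the inequality $\frac{(1+\epsilon)^2}{1-\epsilon}\le 1+5\epsilon$ fails only for $\epsilon>1/3$, which is outside the allowed range. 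It gives $c(T)\le(1+5\epsilon)(3\alpha+1)\sum_{v}x_v$. Combining this with \cref{lm:H-cost,lm:duality}, as you planned, finishes the proof. With that one line added, your proof is complete and coincides with the paper's.
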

\begin{proof}
  Consider the orientation from \Cref{obs:orientation}. For a vertex $v$, let $N^{in}(v)$ and $N^{out}(v)$ denote its in- and out-neighbors, respectively (and let $N^{in}[v]$ and $N^{out}[v]$ include $v$). Then
  \begin{equation}\label{eq:wts-v}
    \hat{\wts}(v)=\sum_{u\in N^{in}[v] \setminus F(v)}\viewx{v}{u}+\sum_{u\in N^{out}(v) \setminus F(v)}\viewx{v}{u}
    \le\sum_{u\in N^{in}[v] \setminus F(v)}(x_u+\delta(u))+(1+\epsilon)\alpha\lambda c_v,
  \end{equation}
  where the last inequality uses \cref{cor:view-neighbor,lm:view-delta}. Summing over $v \in T$ and swapping the order of summation gives
  \begin{equation}\label{eq:in-sum}
    \sum_{v\in T}\sum_{u\in N^{in}[v] \setminus F(v)}(x_u+\delta(u))
    =\sum_{u\in V}(x_u+\delta(u))\cdot|\{v\in T:v\in N^{out}[u]\setminus I(u)\}|
    \le(\alpha+1)\sum_{v\in V}(x_v+\delta(v)).
  \end{equation}
  Recall that for a tight vertex $v\in T$, we have $\hat{\wts}(v)+\phi(v)>\frac{\gamma c_v}{1+\epsilon}$.
  Summing over $T$ and using \eqref{eq:wts-v} and \eqref{eq:in-sum} yields
  \[
    \gamma c(T)\le (1+\epsilon)\sum_{v\in T}\hat{\wts}(v)+\phi(v)
    \le(1+\epsilon)(\alpha+1)\sum_{v\in V}x_v+(1+\epsilon)((\alpha+1)\delta+\phi) + \sum_{v\in T}\frac{\alpha}{3\alpha+1}c_v,
  \]
  hence
  \[
    c(T)\le(1+\epsilon)(3\alpha+1)\sum_{v\in V}x_v+\frac{(1+\epsilon)(3\alpha+1)}{\alpha+1}\big((\alpha+1)\delta+\phi\big).
  \]
  By \Cref{inv:error-bound}, the second term is at most $\epsilon c(T)+\epsilon(1+\epsilon)(3\alpha+1)\sum_{v\in V}x_v$, so
  \[
    (1-\epsilon)c(T)\le(1+\epsilon)^2(3\alpha+1)\sum_{v\in V}x_v.
  \]
  Since $\epsilon\le 1/3$, we have $(1+\epsilon)^2/(1-\epsilon)\le 1+5\epsilon$. Applying \cref{lm:H-cost,lm:duality} yields
  \[
    c(T)+c(H')\le(1+5\epsilon)\cdot 2(3\alpha+1)\sum_{v\in V}x_v\le(1+5\epsilon)\cdot 2(3\alpha+1)\cdot\OPT,
  \]
  as required.
\end{proof}

\subsection{Algorithm description}\label{sec:alg-description}
We describe the subroutines $\ins(e)$, $\del(e)$, and $\reset(k)$, which constitute the main algorithm. The pseudocode of the main algorithm is given in \Cref{alg:main}. At the beginning of the algorithm, we assume that the graph is empty. Each vertex $v$ is assigned $\view{v}{v} = \lev(v) = \dom(v) = \bs(v)$ and is active. Note that initially we have $H' = H = V$, and thus every vertex is dominated by itself. When an edge $e$ is inserted into the graph, we call the subroutine $\ins(e)$ to handle it; when $e$ is deleted from the graph, we call the subroutine $\del(e)$. After that, we check whether \Cref{inv:error-bound} is violated as a result. If it is, then we search for the smallest level $k$ such that $(\alpha+1)\delta_{\le k} + \phi_{\le k} > \epsilon(\xi c(T_{\le k}) + (\alpha+1)\sum_{v \in V_{\le k}} x_v)$ and call the subroutine $\reset(k)$. We repeat this in a while loop until the invariant is repaired. Note that we can find such a level $k$ in time $O(k)$, since we maintain the quantities $\delta_i$, $\phi_i$, $c(T_i)$, and $\sum_{v \in V_i} x_v$ for each level $i$. We will argue later in the amortized runtime analysis that this $O(k)$ cost can be charged to the decrease in potential due to $\reset(k)$.

\begin{algorithm}\caption{$\mathsf{DynamicDominatingSet}$}\label{alg:main}
  Initialize $\view{v}{v}, \lev(v), \dom(v) \gets \bs(v)$ for each $v \in V$\;
  \ForEach{edge update $e$}{
    \If{$e$ is inserted}{
      $\ins(e)$\;
    }\Else{
      $\del(e)$\;
    }
    \While{~\Cref{inv:error-bound} is violated}{
      Find the smallest level $k$ such that $(\alpha+1)\delta_{\le k} + \phi_{\le k} > \epsilon(\xi c(T_{\le k}) + (\alpha+1)\sum_{v \in V_{\le k}} x_v)$\;
      $\reset(k)$\;
    }
  }
\end{algorithm}

\subsubsection{Insert}\label{sec:insert}
Refer to \Cref{alg:insert} for the pseudocode of the subroutine $\ins(e)$ and to \cref{fig:insert} for an illustration.

Suppose an edge $e = \{u,v\}$ gets inserted. Consider, without loss of generality, the endpoint $v$.
The insertion could lead to an increase in $\Dom(v)$. Because of that, we might also need to increase $\lev(v)$ to maintain the level invariant (\Cref{inv:level}). Since we introduce a new vertex into the neighborhood, we also need to satisfy $\dom(u) \le \view{u}{v} \le \lev(v)$ (\Cref{inv:view}), and $\view{u}{v}$ should be large enough to satisfy $\hat{\wts}(u) \le c_u - |F(u)| \cdot (1+\epsilon) \lambda c_u$ (\Cref{inv:weight-bound}). A way to do that is to pick the lowest possible $h_v \ge \dom(u)$ such that $\hat{\wts}(u) + \wat{h_v} \le c_u - |F(u)| \cdot (1+\epsilon) \lambda c_u$ and $h_v \ge \lev(v)$, and set $\lev(v)$ and $\view{u}{v}$ to $h_v$; we can find such an $h_v$ in time $O(L) = O(\log_{1+\epsilon} (Cn))$. The first condition guarantees that \Cref{inv:weight-bound} is satisfied. The second condition makes $v$ satisfy \Cref{inv:level}. Since we do not keep track of $\Dom(v)$, we rely on the fact that $\lev(v) \ge \Dom(v)$ before the insertion (this is due to \cref{inv:level}), which together with $h_v \ge \dom(u)$ gives us $\lev(v) \ge \Dom(v)$ after the insertion. The view invariant (\cref{inv:view}) is satisfied, since $\lev(v)$ might have only increased, and for the freshly inserted edge we have $\view{u}{v} = \lev(v)$ and $\lev(v) \ge \dom(u)$.

By choosing the smallest possible value, we ensure that the lazy level invariant (\cref{inv:zlev}) is satisfied. Indeed, suppose $v$ is passive after the insertion. Observe that the algorithm does not change the weight of any vertex in $N[v] \setminus \{u\}$. Thus, if $v$ was passive before, then the invariant still holds. Otherwise, if $v$ was active and has become passive now, then $h_v > \max\{\dom(u),\lev(v)\}$. But in that case, $\hat{\wts}(u) + \wat{(h_v - 1)} > c_u - |F(u)| \cdot (1+\epsilon) \lambda c_u \ge \gamma c_u$, since $|F(u)| \le (1+\epsilon)\alpha$ by \cref{inv:forgotten-bound}, and so $\hat{\wts}(u) + \wat{h_v} > \gamma c_u/(1+\epsilon)$ and hence $u$ is tight.

Observe that the forgotten neighbors invariant (\cref{inv:forgotten-bound}) is maintained, since $\dom(v)$ and $F(v)$ remain the same for all $v \in V$.

\begin{remark}[Maintaining data structures]
  Since we inserted a new edge, we need to insert $v$ into $\hat{N}_{h_v}$. Note that this might require us to insert $u$ into $Q_{h_v}$ if $\hat{N}_{h_v}$ was empty. All of this can be done in constant time.
\end{remark}

\begin{algorithm}\caption{$\ins(e)$}\label{alg:insert}
  \tcp{Let $e=\{u,v\}$}
  Find the lowest level $h_v \ge \max\{\dom(u), \lev(v)\}$ such that $\hat{\wts}(u) + \wat{h_v} \le c_u - |F(u)| \cdot (1+\epsilon) \lambda c_u$\;
  $\lev(v), \view{u}{v} \gets h_v$\;
  $\hat{\wts}(u) \gets \hat{\wts}(u) + \wat{h_v}$\;
  $\zlev(v) \gets 0$\;
  Repeat the same for $u$ (with the roles of $v$ and $u$ flipped)\;
\end{algorithm}

\begin{figure}[t]
  \centering
  \begin{tikzpicture}[scale=1, every node/.style={font=\small}]
\begin{scope}[shift={(0,0)}]
    \foreach \y/\lab in {0/0,1/1,2/2,3/3,4/4} {
      \draw[dashed,gray] (0,\y) -- (4,\y);
      \node[left] at (0,\y) {\lab};
    }

    \node[circle,fill=black,inner sep=2.8pt,label=above:{$\hat{N}_4[v]$}] (v4) at (2.4,4) {};
    \node[circle,draw,inner sep=2.8pt,label=above:{$\hat{N}_2[v]$}] (v2) at (2.4,2) {};
    \node[circle,draw,inner sep=2.8pt,label=below:$v$,label=above:{$\hat{N}_0[v]$}] (v0) at (2.4,0) {};

    \node at (1,5.0) {$\lev(v)=4$};

    \node[circle,fill=black,inner sep=1.5pt,label=above:$u$] (u) at (5.8,5.1) {};
    \draw[red,thick,dashed] (u) -- (v4);

    \node[circle,fill=black,inner sep=1.5pt,label=right:$v$] (r1) at (5.8,4.2) {};
    \node[circle,fill=black,inner sep=1.5pt,label=right:$a$] (r2) at (5.8,3.1) {};
    \node[circle,fill=black,inner sep=1.5pt,label=right:$b$] (r3) at (5.8,1.8) {};
    \node[circle,fill=black,inner sep=1.5pt,label=right:$c$] (r4) at (5.8,0.5) {};

    \draw[thick]
      (5.75,2.4) ellipse [x radius=0.95, y radius=2.5];

    \node[below=4pt] at (5.74,-0.3) {$N[v]\setminus I(v)$};

    \draw[thick] (v4) -- (r1);
    \draw[thick] (v2) -- (r2);
    \draw[thick] (v2) -- (r3);
    \draw[thick] (v0) -- (r4);
  \end{scope}

  \begin{scope}[shift={(8.2,0)}]
    \foreach \y/\lab in {0/0,1/1,2/2,3/3,4/4,5/5} {
      \draw[dashed,gray] (0,\y) -- (4,\y);
      \node[left] at (0,\y) {\lab};
    }

    \node[circle,fill=black,inner sep=2.8pt,label=above:{$\hat{N}_5[v]$}] (v5b) at (2.4,5) {};
    \node[circle,draw,inner sep=2.8pt,label=above:{$\hat{N}_4[v]$}] (v4b) at (2.4,4) {};
    \node[circle,draw,inner sep=2.8pt,label=above:{$\hat{N}_2[v]$}] (v2b) at (2.4,2) {};
    \node[circle,draw,inner sep=2.8pt,label=below:$v$,label=above:{$\hat{N}_0[v]$}] (v0b) at (2.4,0) {};

    \node at (1,6.0) {$\lev(v)=5$};

    \node[circle,fill=black,inner sep=1.5pt,label=above:$u$] (ub) at (5.8,5.1) {};
    \draw[red,thick] (ub) -- (v5b);

    \node[circle,fill=black,inner sep=1.5pt,label=right:$v$] (r1b) at (5.8,4.2) {};
    \node[circle,fill=black,inner sep=1.5pt,label=right:$a$] (r2b) at (5.8,3.1) {};
    \node[circle,fill=black,inner sep=1.5pt,label=right:$b$] (r3b) at (5.8,1.8) {};
    \node[circle,fill=black,inner sep=1.5pt,label=right:$c$] (r4b) at (5.8,0.5) {};

    \draw[thick]
      (5.75,2.8) ellipse [x radius=0.95, y radius=3];

    \node[below=4pt] at (5.74,-0.3) {$N[v]\setminus I(v)$};

    \draw[thick] (v4b) -- (r1b);
    \draw[thick] (v2b) -- (r2b);
    \draw[thick] (v2b) -- (r3b);
    \draw[thick] (v0b) -- (r4b);
  \end{scope}
\end{tikzpicture}

\caption{An example of the insertion of an edge $e = \{u,v\}$. In this figure we consider the situation only for vertex $v$. The left part depicts the situation before the insertion of $e$. The vertex $v$ has $\lev(v) = 4$ and $N[v] \setminus I(v) = \{v, a, b, c\}$. There are two outdated copies of $v$ at levels 0 and 2, depicted by white circles, and there is a primary copy of $v$ at level 4. The vertex $v$ views itself at level $4$, and thus the primary copy is $\hat{N}_4[v] = \{v\}$. Vertices $a$ and $b$ view $v$ at level $2$, and the outdated copy at level 2 is $\hat{N}_2[v] = \{a,b\}$. Vertex $c$ views $v$ at level $0$, and so the outdated copy at level 0 is $\hat{N}_0[v] = \{c\}$. The vertex $u$ has $\dom(u) = 5$. The situation after the insertion of $e$ is depicted in the right part. After that, $\lev(v)$ becomes 5, and $u$ views $v$ at level 5; the primary copy is now $\hat{N}_5[v] = \{u\}$. Vertex $v$ keeps viewing itself at level $4$, but now $\hat{N}_4[v] = \{v\}$ is an outdated copy. The remaining outdated copies are the same as before the insertion.}
  \label{fig:insert}
\end{figure}

\subsubsection{Deletion}\label{sec:delete}
Refer to \cref{alg:delete} for the pseudocode of the subroutine $\del(e)$.

Suppose an edge $e = \{u,v\}$ gets deleted. Consider, without loss of generality, the endpoint $v$. As a result, $\Dom(v)$ might have decreased and $v$ might have become passive. For the lazy level invariant (\cref{inv:zlev}) to hold, there must be at least one tight neighbor among $N[v] \setminus I(v)$. We cannot afford to scan through the neighbors to verify this, as there might be too many of them. Instead, we make sure that $v$ is dominated by itself. If $\dom(v) > \bs(v)$, then, according to the tightness invariant (\cref{inv:tight}), the vertex $v$ is tight before the deletion. To make sure it is tight afterward, we compensate for the loss in $\hat{\wts}(v)$ by increasing its dead weight $\phi(v)$ by $\wat{\view{v}{u}}$; this maintains \cref{inv:tight}. Otherwise, if $\dom(v) = \bs(v)$, then we make $v$ tight by setting $\phi(v) \gets \gamma c_v$. We note that we cannot simply set $\phi(v) \gets \gamma c_v$ in the case $\dom(v) > \bs(v)$, as this can be too costly in terms of the potential increase (which will become clear after we define the potential functions in \cref{sec:analysis}).

As for the other invariants, observe that $\Dom(v)$, $\hat{\wts}(v)$ and $|F(v)|$ can only decrease, so \cref{inv:level,inv:weight-bound,inv:forgotten-bound} are maintained. As for the view invariant (\cref{inv:view}), it is satisfied trivially, as no relevant variable is changed.

\begin{remark}[Maintaining data structures]
  If $u \in \Ign{v}$, we also need to remove $u$ from $\Ign{v}$. Otherwise, if $u \notin I(v)$, then $u \in \hat{N}_{\view{u}{v}}[v]$, and we need to remove it from that list and possibly remove $u$ from $Q_{\view{u}{v}}$, all of which can be done in constant time.
\end{remark}

\begin{algorithm}\caption{$\del(e)$}\label{alg:delete}
  \tcp{Let $e=\{u,v\}$}
  \If{$\dom(v) > \bs(v)$}{
    $\phi(v) \gets \phi(v) + \viewx{v}{u}$\;
  }
  \Else{
    $\phi(v) \gets \gamma c_v$\;
  }
  \If{$u \notin F(v)$}{
    $\hat{\wts}(v) \gets \hat{\wts}(v) - \viewx{v}{u}$\;
  }\Else{
    $\Forg{v} \gets \Forg{v} \setminus \{u\}$\;
  }
  $\zlev(v) \gets 0$\;
  Repeat the same for $u$ (with the roles of $u$ and $v$ flipped)\;
\end{algorithm}

\subsection{Rebuilding}\label{sec:rebuild}
The $\reset$ subroutine is triggered when \Cref{inv:error-bound} is violated. We assume that $k$ is the smallest level such that
\[
(\alpha+1)\delta_{\le k} + \phi_{\le k} > \epsilon \left( \xi c(T_{\le k}) + (\alpha+1) \sum_{v \in V_{\le k}} x_v \right).
\]

\paragraph{High-level idea}
The goal of the $\reset(k)$ subroutine is to eliminate the dead weights of vertices up to dominating level $k$, eliminate all outdated copies up to level $k$, and then repair any violated invariants. Notice that, by eliminating dead weights and outdated copies, we might decrease $\hat{\wts}(v) + \phi(v)$ for vertices $v \in D_{\le k}$, and thus some of them might become slack and violate the tightness invariant (\cref{inv:tight}). The invariants are then restored using the $\water$ subroutine.

On a high level, before calling the $\water$ subroutine, we raise all vertices from $D_{\le k}$ to dominating level $k+1$ and, as a result, we raise vertices from $V_{\le k}$ to dominated level $k+1$. Note that doing so can only decrease the weights of vertices in $D_{\le k}$. We then process vertices $v \in P_{\le k} \setminus V_{\le k}$ some special way that either makes them active, or makes $\zlev(v) = k+1$. This processing of passive vertices can make some vertices from $D_{\le k}$ tight. We leave such vertices at dominating level $k+1$, and lower the remaining to dominating level $k$. After that, we call the $\water$ subroutine on them. Alas, this approach in not sufficient, since it incurs a $\Delta$ dependency in the runtime.

To improve the runtime of our rebuild subroutine, we try to reduce the number of vertices from $D_{\le k}$ that get raised to dominating level $k$ and participate in the subsequent call to $\water$. As we will show in \cref{sec:analysis}, we can afford to include all vertices in $T_{\le k}$. However, we cannot claim the same for slack vertices, and so we try to filter out some of them and leave only those that we can afford in terms of the amortized runtime (for the details we refer to \cref{sec:analysis}).

This filtering is done using the forgotten neighbors mechanism. Namely, for vertices in $D_{\le k}$, we try to forget vertices from $V_{\le k}$. We first compute, for each vertex $v$, a list $F'(v) \subseteq N[v] \setminus I(v)$ of ``need-to-be-forgotten'' neighbors. We go over vertices $v \in V_{\le k}$ and add $v$ to $F'(u)$ for each $u \in N[v] \setminus I(v)$. After that, we include in the subsequent call to $\water$ only vertices that have many neighbors in $F(v) \cup F'(v)$ (besides the vertices from $T_{\le k}$); specifically, those satisfying $|F(v)| + |F'(v)| > (1+\epsilon)\alpha$. We denote the set of such vertices by $S$. For vertices $v \in S$ (together with vertices $v \in T_{\le k}$), we update the corresponding viewed level $\view{u}{v}$ to the actual $\lev(v)$ for $u \in F(v) \cup F'(v)$ and reset the set $F(v)$ to $\emptyset$. For the remaining vertices (those not in $S \cup T_{\le k}$), we add the neighbors from $F'(v)$ to $F(v)$, thus forgetting them. After this, for every vertex $v \in V_{\le k}$, the set $N[v] \setminus I(v)$ contains only vertices from $S \cup T_{\le k}$. Thus, the subsequent call to $\water$ only needs to consider vertices from $S \cup T_{\le k}$.

We now explain how we clean up outdated copies and how we handle passive vertices. To clean up outdated copies up to level $k$, we employ the same forgetting mechanism. For every vertex $v$ that has an outdated copy at level $k$ or below, we add $v$ to $F'(u)$ for each $u \in N[v] \setminus I(v)$. Recall that we either add all of $F'(v)$ to $F(v)$, or reset $v$, updating all viewed levels of neighbors from $F(v) \cup F'(v)$ to the actual level. Therefore, all outdated copies up to level $k$ are removed. We will later show in \cref{sec:analysis} that we can afford to process all such vertices as well.

We also explain how we handle passive vertices. The issue with passive vertices is that for each such vertex we need to ensure that there is a tight neighbor at level $\zlev(v)$ or above among $N[v] \setminus I(v)$. Since we make some vertices slack as a result of cleaning up outdated copies and dead weight, the lazy level invariant (\cref{inv:zlev}) might be violated for them. Therefore, for each passive vertex $v$ with $\zlev(v) \le k$, we do the following. First, we try to find a tight neighbor $u \in N[v] \setminus I(v)$ with $\dom(u) \ge k + 1$, and if there is one, we set $\zlev(v) \gets k+1$. We do this by iterating over these neighbors, and on each failure, we add $v$ to $F'(u)$. This will later allow us to bound the amortized runtime cost. Intuitively, we gain one credit for adding $v$ to $F'(u)$, so these failed iterations are ``free'' in terms of amortized runtime. We note that it is necessary to stop at the first tight neighbor $u \in N[v] \setminus I(v)$ with $\dom(u) \ge k+1$, since there can be many such neighbors and we are not able to amortize the runtime cost for all of them.

Finally, we explain how we handle passive vertices for which there is no tight neighbor $u \in N[v] \setminus I(v)$ with $\dom(u) \ge k+1$. For such vertices, we try to make them active and include them in the subsequent call to $\water$. This is done by decreasing $\lev(v)$ one level at a time until we reach $k+1$ or $\bt(v)$, or until the following happens. It may be that, by decreasing $\lev(v)$, some neighbor $u \in N[v] \setminus I(v)$ becomes tight. In that case, we set $\zlev(v) \gets k+1$. We argue later that this tight neighbor will eventually satisfy $\dom(u) \ge k+1$, so this is safe to do.

\begin{figure}[t]
  \centering
  \begin{tikzpicture}[scale=1, every node/.style={font=\small}]
\tikzset{
    vtx/.style={circle, fill=black, inner sep=2pt},
    vtc/.style={circle, draw=black, fill=white, inner sep=2pt},
    rededge/.style={red, line width=1.0pt},
    blueedge/.style={blue!80!black, line width=1.0pt},
    blackedge/.style={black, line width=0.9pt},
    skipdots/.style={black, densely dotted, line width=0.9pt},
    connection/.style={gray, dashed}
  }

  \node[vtx] (u) at (0,0.5) {};
  \node[above=6pt of u] {$u$};

  \node[vtx] (u1) at (-3.0,2.5) {};
  \node[vtx] (ui) at (-1.6,2.5) {};
  \node[vtx] (v1) at (-3.0,-2.0) {};
  \node[vtx] (vj) at (-1.6,-2.0) {};
  \node at (-2.3,2.5) {$\cdots$};
  \node at (-2.3,-2.0) {$\cdots$};

  \node[above=6pt of u1] {$u_1$};
  \node[above=6pt of ui] {$u_i$};
  \node[below=6pt of v1] {$v_1$};
  \node[below=6pt of vj] {$v_j$};

  \draw[rounded corners=8pt, black, line width=0.9pt]
    (-3.55,3.5) rectangle (-1.05,-3);
  \node[below=8pt] at (-2.30,0.8) {$F(v)$};

  \draw[rounded corners=8pt, black, line width=0.9pt]
    (1.0,0.6) rectangle (6.5,-3);
  \node[below=8pt] at (3.2,0.5) {$F'(v)$};

  \node[vtx] (u1p) at (1.6,2.5) {};
  \node[vtx] (uip) at (3.0,2.5) {};
  \node[vtx] (v1p) at (1.6,-2.0) {};
  \node[vtx] (vjp) at (3.0,-2.0) {};
  \node[vtc] (w1) at (4.5,-2.0) {};
  \node[vtc] (wl) at (6.0,-2.0) {};
  \node[vtx] (pw1) at (4.5,0.0) {};
  \node[vtx] (pwl) at (6.0,0.0) {};
  \node at (2.3,2.5) {$\cdots$};
  \node at (2.3,-2.0) {$\cdots$};
  \node at (5.25,-2.0) {$\cdots$};
  \node at (5.25,0.0) {$\cdots$};

  \node[above=6pt of u1p] {$u'_1$};
  \node[above=6pt of uip] {$u'_{i'}$};
  \node[below=6pt of v1p] {$v'_1$};
  \node[below=6pt of vjp] {$v'_{j'}$};
  \node[below=6pt of w1] {$w_1$};
  \node[below=6pt of wl] {$w_{l}$};

  \draw[skipdots] (-4.0,-1) -- (6.8,-1);
  \node[right] at (6.85,-1) {$k$};

  \draw[rededge] (u) -- (u1);
  \draw[rededge] (u) -- (ui);
  \draw[rededge] (u) -- (v1);
  \draw[rededge] (u) -- (vj);

  \draw[blackedge] (u) -- (u1p);
  \draw[blackedge] (u) -- (uip);
  \draw[blueedge] (u) -- (v1p);
  \draw[blueedge] (u) -- (vjp);
  \draw[blueedge] (u) -- (w1);
  \draw[blueedge] (u) -- (wl);

  \draw[connection] (w1) -- (pw1);
  \draw[connection] (wl) -- (pwl);

\end{tikzpicture}

  \caption{An example of a vertex $u \in D_{\le k} \setminus V_{\le k}$ during $\reset(k)$. The set of forgotten neighbors $F(u)$ consists of the vertices $v_1,\ldots,v_j$ and $u_1,\ldots,u_i$. The vertices $v_1,\ldots,v_j$ are in $V_{\le k}$, while the vertices $u_1,\ldots,u_i$ are outside $V_{\le k}$. The remaining neighbors of $u$ from $V_{\le k}$ are $v'_1,\ldots,v'_{j'}$, and all of them are in $F'(u)$. The other neighbors of $u$ in $F'(u)$ are $w_1,\ldots,w_l$, and these neighbors either have an outdated copy at level $k$ or below, or their lazy level is at most $k$ (which is depicted by white circles). The remaining neighbors of $u$ are $u'_1,\ldots,u'_{i'}$, and they lie outside both $F(u)$ and $F'(u)$; all of them are outside $V_{\le k}$.}
  \label{fig:rebuild2}
\end{figure}

\subsubsection{Description of the $\reset(k)$ subroutine}\label{sec:rebuild-description}
Let us denote by the superscript ``old'' the values of the variables at the beginning of $\reset(k)$. Next, we state a couple of key properties of the $\reset(k)$ subroutine.

\begin{property}\label{prop:dom-above}
  Throughout the $\reset(k)$ subroutine, vertices in $D^{\old}_{\ge k+1}$ stay at the same dominating level. Moreover, vertices from $T^{\old}_{\ge k+1}$ remain tight.
\end{property}

\begin{property}\label{prop:zlev-above}
  Throughout the $\reset(k)$ subroutine, $\zlev(v)$ stays the same for vertices $v \in P^{\old}_{\ge k+1}$.
\end{property}

\begin{property}
  After the $\reset(k)$ subroutine, vertices $v \in P^{\old}_{\le k}$ either become active, or $\zlev(v) \ge k+1$.
\end{property}

\begin{property}
  After the $\reset(k)$ subroutine, vertices $v \in A^{\old}_{\le k}$ are active and $\lev(v) \le k+1$.
\end{property}

\begin{property}\label{prop:cleanup}
  The $\reset(k)$ subroutine eliminates all dead weights for vertices in $T^{\old}_{\le k}$ and all outdated copies at level $k$ or below. Moreover, it does not create new dead weights and outdated copies.
\end{property}

\begin{note}
  \Cref{prop:dom-above} implies that the conditions of \cref{inv:tight} hold for vertices in $D^{\old}_{\ge k+1}$ after the call to $\reset(k)$. Together with \cref{prop:zlev-above}, it implies that the conditions of \cref{inv:zlev} hold for vertices in $P^{\old}_{\ge k+1}$. Thus, intuitively, we only need to focus on everything below level $k+1$.
\end{note}

At the beginning of $\reset(k)$, we classify active vertices with $\lev(v) \le k$ and passive vertices with $\zlev(v) \le k$ as \emph{clean} or \emph{dirty}.
\begin{definition}\label{def:dirty/clean}
  We call a vertex $v$ \emph{clean} if it is active with $\lev(v) \le k$, or if it is passive with $\zlev(v) \le k$ and $\lev(v) \le k+1$; we call a vertex $v$ \emph{dirty} if it is passive with $\zlev(v) \le k$ and $\lev(v) > k+1$.
\end{definition}

Throughout the $\reset$ subroutine, we use a set $F'(v)$ (stored as a linked list) for each vertex $v$, which we assume is initially empty. Instead of initializing this list in every call to $\reset$, we initialize it once at the beginning of the algorithm and clean it up after each $\reset$.

\begin{remark}[Maintaining data structures]
  Throughout the $\reset(k)$ subroutine, we update some $\view{v}{u}$ and $F(v)$. During each such update, we update $\hat{\wts}(v)$, $\hat{N}_i[u]$ and $Q_i$ (for the relevant $i$) accordingly. Notice that this can be done in constant time.
\end{remark}

For convenience, the $\reset(k)$ subroutine is split into several steps. For the pseudocode of the $\reset(k)$ subroutine, refer to \cref{alg:rebuild}.

\begin{step}\label{step:init}
Compute the list of all clean vertices $\mathcal{C}$ and the list of all dirty vertices $\mathcal{D}$.
Initialize sets $\mathcal{D}' \gets \emptyset$ and $T' \gets T_{\le k}$.
\end{step}

\begin{algorithm}\caption{$\reset(k)$}\label{alg:rebuild}
  Compute the lists of clean and dirty vertices $\mathcal{C}$ and $\mathcal{D}$\;
  Initialize $\mathcal{D}' \gets \emptyset$ and $T' \gets T_{\le k}$\;
  Set $\phi(v) \gets 0$ for each $v \in T'$\;
  \ForEach{$0 \le i \le k$}{
    \ForEach{$v \in Q_i$}{
      \ForEach{$u \in \hat{N}_i[v]$}{
        $F'(u) \gets F'(u) \cup \{v\}$\;
      }
    }
  }
  \ForEach{$v \in \mathcal{C}$}{
    \ForEach{$u \in \hat{N}_{\lev(v)}$}{
      $F'(u) \gets F'(u) \cup \{v\}$\;
    }
    Set $\lev(v) \gets k+1$ and make $v$ active\;
  }
  \ForEach{$v \in \mathcal{D}$}{
    \ForEach{$u \in N[v] \setminus \Ign{v}$}{
      \If{$\dom(u) \ge k + 1$ and $u$ is tight}{
        $\zlev(v) \gets k+1$ and continue to the next dirty vertex\;
      }\Else{
        $F'(u) \gets F'(u) \cup \{v\}$\;
      }
    }
    Add $v$ to $\mathcal{D}'$\;
  }
  Let $S$ be the set of vertices with $|F(v)| + |F'(v)| > (1+\epsilon)\alpha$\;
  \ForEach{$v \in S \cup T'$}{
    \ForEach{$u \in F(v) \cup F'(v)$}{
      Update $\view{v}{u}$ to $\lev(u)$\;
    }
    $\Forg{v} \gets \emptyset$\;
    $\dom(v) \gets \max\{\bs(v), k+1\}$\;
  }
  \ForEach{$v \notin S \cup T'$ such that $F'(v) \ne \emptyset$}{
    $\Forg{v} \gets \Forg{v} \cup F'(v)$\;
  }
  \ForEach{$v \in \mathcal{D}'$}{
    Find the minimum $h \ge \max\{k+1, \bt(v)\}$ such that $\hat{\wts}(u) - \viewx{u}{v} + \wat{h} \le c_u - |F(u)| \cdot (1+\epsilon)\lambda c_u$ for all $u \in N[v] \setminus \Ign{v}$\;
    Set $\lev(v)$ to $h$ and update $\view{u}{v}$ to $h$ for each $u \in N[v] \setminus \Ign{v}$\;
    \If{$h = k+1$ or $h = \bt(v)$}{
      Make $v$ active\;
    }\Else{
      $\zlev(v) \gets k+1$\;
    }
  }
  Let $D'$ be the set of slack vertices $v \in S \cup T'$ such that $\dom(v) > \bs(v)$\;
  Let $V'$ be the set of vertices $v \in \mathcal{C} \cup \mathcal{D}'$ such that $\lev(v) > \bt(v)$ and there is no tight neighbor among $N[v] \setminus I(v)$\;
  Set $\dom(v) \gets k$ for every $v \in D'$\;
  \ForEach{$v \in V'$}{
    Set $\lev(v) \gets k$ and update $\view{u}{v}$ to $k$ for every $u \in N[v] \setminus I(v)$\;
  }
  Invoke $\water(k, D', V')$ considering only $N[v] \setminus I(v)$ for $v \in V'$\;
\end{algorithm}

\begin{step}\label{step:remove-dead-weight}
Set $\phi(v) \gets 0$ for each vertex $v \in T'$.
Note that after that we have $\phi_{\le k} = 0$, since recall that we maintain $\phi(v) = 0$ for slack vertices (\cref{as:dead-weight}).
\end{step}

\begin{step}\label{step:handle-outdated}
For each level $0 \le i \le k$ and each vertex $v \in Q_i$, go over the vertices $u \in \hat{N}_i[v]$ and add $v$ to $F'(u)$.
\end{step}

\begin{step}[Handling clean vertices]\label{step:handle-clean}
For each $v \in \mathcal{C}$, add $v$ to $F'(u)$ for each $u \in
\hat{N}_{\lev(v)}$ and set $\lev(v) \gets k+1$ and make $v$ active.
\end{step}

\begin{remark}
Observe that the level invariant and the view invariant (\Cref{inv:level,inv:view}) still hold, since $\lev(v)$ might only have increased.
After \cref{step:handle-clean}, clean vertices are temporarily in an incorrect state, since they are active, but for them $\lev(v) = k+1$ and $\Dom(v) \le k$. This will be fixed in \cref{step:water}, after we call the subroutine $\water$ on them.
\end{remark}

\begin{step}[Filtering dirty vertices]\label{step:filter-dirty}
For each vertex $v \in \mathcal{D}$ scan through $u \in N[v] \setminus I(v)$. Recall that we can efficiently enumerate vertices from $N[v] \setminus \Ign{v}$. If $u$ is tight and $\dom(u) \ge k+1$ then set $\zlev(v) \gets k+1$ and continue to the next vertex in $\mathcal{D}$. Otherwise, add $v$ to $F'(u)$. After scanning through all $u \in N[v] \setminus I(v)$, add $v$ to $\mathcal{D}'$. Thus, $\mathcal{D}'$ consists of dirty vertices for which there is no tight neighbor $u$ with $\dom(u) \ge k+1$ among $N[v] \setminus I(v)$.
\end{step}

\begin{remark}
Consider a vertex $v \in \mathcal{D}$. It is possible that there is already a tight neighbor $u \in N[v] \setminus I(v)$ with $\dom(u) \ge k+1$; in that case we can simply increase $\zlev(v)$ to $k+1$, as tight vertices above level $k$ will stay tight, according to \cref{prop:dom-above}, and so the lazy level invariant (\cref{inv:zlev}) will be satisfied for $v$.
To check if there is such a neighbor, we iterate over neighbors $u \in N[v] \setminus \Ign{v}$ until we find one that is tight and satisfies $\dom(u) \ge k+1$. Every time we iterate over a vertex $u$ that does not satisfy this condition, we add $v$ to $F'(u)$ to compensate for the cost of this iteration. We will argue later in \cref{sec:analysis} that we can cover a constant runtime cost for each vertex in $F'(u)$.
If we fail to find a tight neighbor $u \in N[v] \setminus I(v)$ with $\dom(u) \ge k+1$, then we add $v$ to $\mathcal{D}'$ to deal with it later.
\end{remark}

\begin{step}[Sparsification]\label{step:sparsification}
Let $S$ be the set of vertices with $|F(v)| + |F'(v)| >
(1+\epsilon)\alpha$. Note that to compute $S$ we only need to scan
vertices $v$ such that $F'(v) \ne \emptyset$. For each vertex $v \in S
\cup T'$ update $\view{v}{u}$ to $\lev(v)$ for all $u \in F(v) \cup F'(v)$, set $F(v)\gets \emptyset$ and set $\dom(v) \gets \max\{\bs(v), k+1\}$. For each vertex $v \notin S \cup T'$ such that $F'(v) \ne \emptyset$ add neighbors from $F'(v)$ to $F(v)$.
\end{step}

\begin{remark}
  Consider any vertex $v$ that is processed during \cref{step:sparsification}.
  \begin{enumerate}
  \item If $|F(v)|+ |F'(v)| > (1+\epsilon)\alpha$ or $v \in T'$, then we reset vertex $v$ by emptying its list of forgotten vertices $F(v)$. Note that this preserves the bounded weight invariant (\cref{inv:weight-bound}).
    Additionally, we update the viewed levels of vertices in $F'(v) \cup F(v)$ to their actual levels.
  \item Otherwise, if $|F(v)| + |F'(v)| \le (1+\epsilon)\alpha$ and $v \notin T'$, we forget the neighbors from $F'(v)$. Observe that $v$ is slack in that case, and hence by \cref{inv:tight} we have $\dom(v) = \bs(v)$, and \cref{inv:tight} continues to hold for $v$. Note that \cref{inv:forgotten-bound} is also preserved. The bounded weight invariant (\cref{inv:weight-bound}) is maintained, as for slack vertices we can safely forget up to $(1+\epsilon)\alpha$ neighbors.
  \end{enumerate}
\end{remark}

\begin{observation}\label{obs:neighbor-levels}
  After \cref{step:sparsification}, for any $v \in \mathcal{C} \cup \mathcal{D}'$ we have $\dom(u) = \bs(u)$ for every $u \in I(v)$; for every $u \in N[v] \setminus I(v)$ we have $\dom(u) = \max \{\bs(u), k+1\}$ and $u \in S \cup T'$.
\end{observation}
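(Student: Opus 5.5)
Fix $v \in \mathcal{C} \cup \mathcal{D}'$ and a neighbor $u$. I will show that after \cref{step:sparsification}, $v$ is not ignored by $u$ (that is, $u \in N[v]\setminus I(v)$) exactly when $u \in S \cup T'$. Then I will read off the dominating levels in each case.

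\textbf{$v$ was sent to $F'(u)$.} First I check that for every $u$ with $u \in N[v]\setminus I(v)$ at the start of \cref{step:sparsification}, the algorithm placed $v$ into $F'(u)$. For a clean vertex $v$, \cref{step:handle-clean} adds $v$ to $F'(u)$ for every $u \in \hat{N}_{\lev(v)}[v]$. Every $u \in \hat N_i[v]$ with $i<\lev(v)$ sits in an outdated copy at level $i \le k$: for an active clean $v$ we have $\lev(v)\le k$, and for a passive clean $v$ we have $\lev(v)\le k+1$. Those $u$ were handled in \cref{step:handle-outdated}. Since the sets $\hat N_i[v]$ partition $N[v]\setminus I(v)$, every non-ignoring neighbor gets $v$ added. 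For a dirty vertex $v\in\mathcal{D}'$, \cref{step:filter-dirty} scanned all of $N[v]\setminus I(v)$ without stopping early, so it added $v$ to every $F'(u)$. Note that $I(v)$ does not change between the start of $\reset(k)$ and \cref{step:sparsification}, since only $F'$ lists are modified.

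\textbf{After the sparsification step.} Take $u \in N[v]\setminus I(v)^{\old}$. If $u\in S\cup T'$, then $F(u)$ is emptied, $v$ stays non-ignoring, and $\dom(u)$ is set to $\max\{\bs(u),k+1\}$. If $u\notin S\cup T'$, then $F'(u)\ni v$ is merged into $F(u)$, so $u$ now ignores $v$. Next take $u$ already in $I(v)^{\old}$, so $v \in F(u)$. If $u\in S\cup T'$, then $F(u)$ is reset to $\emptyset$, and $u$ becomes non-ignoring with $\dom(u)=\max\{\bs(u),k+1\}$. If $u \notin S\cup T'$, it stays ignoring. Hence after the step, $N[v]\setminus I(v)\subseteq S\cup T'$ with the stated dominating levels, and $I(v)\cap(S\cup T')=\emptyset$.

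\textbf{Ignoring neighbors sit at base level.} For $u \in I(v)$ we have $F(u)\neq\emptyset$. By \cref{inv:forgotten-bound} at the start of $\reset(k)$ (or directly, if $F(u)$ only just became nonempty because $F'(u)\neq\emptyset$), this gives $\dom(u)=\bs(u)$. More precisely, in both cases $u\notin S\cup T'$, so $u$ is slack or has $\dom(u)>k$. I need to argue that $u$ is untouched and that $\dom(u)=\bs(u)$. If $F(u)^{\old}\ne\emptyset$, then \cref{inv:forgotten-bound} gives this directly. If $F(u)^{\old}=\emptyset$ but $F'(u)\neq\emptyset$, then $u \notin T'$. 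I will argue that $u$ is slack, or, if $u$ is tight, that $\dom(u)\ge k+1$. The main obstacle is this last case: a tight $u$ with $\dom(u)\ge k+1$ that received $F'$ entries but is not in $S\cup T'$. Dirty scans only add $v$ to $F'(u)$ when $u$ is not tight or $\dom(u)\le k$. For entries added in \cref{step:handle-outdated,step:handle-clean}, I will use \cref{cor:view-neighbor}: $\dom(u)\le\view{u}{v}\le k+1$. I would refine this using the clean and passive structure, together with \cref{inv:tight}, to conclude that such a $u$ is either in $T'$ or has $\dom(u)=\bs(u)$.
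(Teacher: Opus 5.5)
\textbf{Verdict: gap in the last part.} Your first two parts are correct and match the reasoning the paper uses in the proof of \cref{cl:relevant-vertices}. Every $u \in N[v]$ has $v \in F^{\old}(u) \cup F'(u)$ before \cref{step:sparsification}. So $u$ ends up either in $S \cup T'$ with $F(u)=\emptyset$ and $\dom(u)=\max\{\bs(u),k+1\}$, or outside $S\cup T'$ with $v\in F(u)$.

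The gap is in the last part, which you flag as the main obstacle but leave unresolved. The bound you propose from the view invariant, $\dom(u)\le\view{u}{v}\le k+1$, is not enough. A tight $u$ with $\dom(u)=k+1>\bs(u)$ is not in $T'=T^{\old}_{\le k}$. If such a $u$ could receive an entry in $F'(u)$ and stay outside $S$, it would forget a neighbor while $\dom(u)>\bs(u)$, which contradicts the statement. You therefore need to show that every $u\notin S\cup T'$ with $F'(u)\neq\emptyset$ was slack at the start of $\reset(k)$. Your phrase ``or directly'' does not establish this. The paper asserts exactly this in its remark after \cref{step:sparsification}.

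The missing idea is that a clean vertex $w$ always has $\Dom(w)\le k$. If $w$ is active, then $\Dom(w)=\lev(w)\le k$. If $w$ is passive, then $\Dom(w)<\lev(w)\le k+1$. Hence every $u\in N[w]$ has $\dom(u)\le k$, which handles the entries added in \cref{step:handle-clean}; the passive case with $\lev(w)=k+1$ is precisely where your $k+1$ bound was too weak. The remaining sources of entries are handled as follows:
\begin{itemize}
\item Entries added in \cref{step:handle-outdated} satisfy $\dom(u)\le\view{u}{w}=i\le k$ by \cref{inv:view}.
\item Entries added in \cref{step:filter-dirty} occur only when $u$ is slack or $\dom(u)\le k$ at scan time.
\end{itemize}
Before \cref{step:sparsification}, the values $\dom$, $\hat{\wts}$ and $\phi$ of vertices outside $T'$ are untouched, so their tightness is the same as at the start. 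If $\dom(u)\le k$ and $u\notin T'$, then $u$ was not tight at the start. Either way, $u$ was slack at the start. \Cref{inv:tight} then gives $\dom(u)=\bs(u)$, and this value is not changed in \cref{step:sparsification} since $u\notin S\cup T'$. Combined with your \cref{inv:forgotten-bound} argument for the case $v\in F^{\old}(u)$, this completes the first part of the statement.
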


\begin{lemma}\label{cl:relevant-vertices}
  After \cref{step:sparsification}, for every $v \in \mathcal{C} \cup \mathcal{D}'$ we have $N[v] \setminus \Ign{v} \subseteq S \cup T'$. Moreover,
  \[
    \sum_{v \in \mathcal{C} \cup \mathcal{D}'} |N[v] \setminus \Ign{v}|
    \le \sum_{v \in S \cup T'}\bigl(|F^{\old}(v)| + |F'(v)|\bigr).
  \]
\end{lemma}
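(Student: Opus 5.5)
The plan is to show that, for each $v \in \mathcal{C} \cup \mathcal{D}'$, every neighbor $u \in N[v]\setminus\Ign{v}$ has $v \in F^{\old}(u) \cup F'(u)$ at the start of \cref{step:sparsification}. This gives the inclusion $N[v]\setminus \Ign{v}\subseteq S\cup T'$, and then a double-counting argument gives the sum bound.

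For the inclusion, fix $v \in \mathcal{C}\cup\mathcal{D}'$ and a neighbor $u$ with $v \notin F(u)$ after \cref{step:sparsification}. If $u \notin S \cup T'$ and $F'(u)$ contains $v$, then \cref{step:sparsification} adds $v$ to $F(u)$, so $u \in \Ign{v}$. It therefore suffices to show that $v \in F'(u)$ for every $u$ that did not already ignore $v$ before the step, i.e.\ every $u\in N[v]\setminus \Ign{v}$ at that time. I will check this by case analysis on how $u$ views $v$.

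If $v \in \mathcal{D}'$, the scan in \cref{step:filter-dirty} went over all of $N[v]\setminus\Ign{v}$ without stopping early, so $v$ was added to $F'(u)$ for every such $u$. If $v \in \mathcal{C}$, split on $\view{u}{v}$. If $\view{u}{v} < \lev^{\old}(v)$, then $v \in Q_{\view{u}{v}}$, and this level is at most $k$, since $\lev(v)\le k+1$ for clean vertices; hence \cref{step:handle-outdated} adds $v$ to $F'(u)$. If $\view{u}{v}=\lev^{\old}(v)$, then $u\in\hat N_{\lev(v)}[v]$, and \cref{step:handle-clean} adds $v$ to $F'(u)$.

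The one point needing care is that $\Ign{v}$ can only grow between \cref{step:init} and \cref{step:sparsification}; this holds because no step before \cref{step:sparsification} modifies any $F(\cdot)$. Inside \cref{step:sparsification}, resetting $F(u)\gets\emptyset$ for $u\in S\cup T'$ only moves $u$ out of $\Ign{v}$ into $N[v]\setminus\Ign{v}$, which is consistent with the claim. I expect this case analysis, in particular checking that the clean-vertex case is covered exactly by the primary and outdated copies, to be the main obstacle.

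For the sum bound, swap the order of summation:
\[
\sum_{v\in\mathcal{C}\cup\mathcal{D}'}|N[v]\setminus\Ign{v}| = \sum_{u\in S\cup T'} |\{v\in\mathcal{C}\cup\mathcal{D}' : u\in N[v]\setminus \Ign{v}\}|.
\]
Every such $v$ satisfies $v\notin F(u)$ after the step. By the inclusion argument, before the reset it lay either in $F^{\old}(u)$ (if $u$ had previously forgotten it) or in $F'(u)$. So the count for each $u$ is at most $|F^{\old}(u)|+|F'(u)|$, and summing over $u \in S \cup T'$ gives the stated bound.
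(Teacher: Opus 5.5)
Your proposal is correct and follows essentially the same route as the paper: establish that every $u \in N[v]$ has $v \in F^{\old}(u)\cup F'(u)$ before \cref{step:sparsification}, note that $u \notin S\cup T'$ then forces $u \in \Ign{v}$, and charge each remaining incidence to $u \in S\cup T'$. Your case analysis (full scan for $\mathcal{D}'$, outdated copies at levels $\le k$ via \cref{step:handle-outdated}, the primary copy via \cref{step:handle-clean}) supplies the justification that the paper only asserts, and it gives the inclusion directly instead of citing \cref{obs:neighbor-levels}.
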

\begin{proof}
  The first part of the lemma is implied by \cref{obs:neighbor-levels}. To show the second part, fix $v \in \mathcal{C} \cup \mathcal{D}'$ and $u \in N[v]$. Observe that after \cref{step:filter-dirty}, we have $v \in F^{\old}(u) \cup F'(u)$.

  If $u \notin S \cup T'$, then $v \in F(u)$ after \cref{step:sparsification}, so $u \in I(v)$. Otherwise, \cref{step:sparsification} sets $F(u) \gets \emptyset$, and then $u \in S \cup T'$. Each incidence $u \in N[v] \setminus I(v)$ implies that $u \in S \cup T'$ and $v \in F^{\old}(u) \cup F'(u)$. Charging this incidence to $v$ gives
  \[
    \sum_{v \in \mathcal{C} \cup \mathcal{D}'} |N[v] \setminus I(v)|
    \le \sum_{u \in S \cup T'} \bigl(|F^{\old}(u)| + |F'(u)|\bigr). \qedhere
  \]
\end{proof}

\begin{observation}\label{obs:no-outdated-below}
  After \cref{step:sparsification}, there are no outdated copies at level $k$ or below.
\end{observation}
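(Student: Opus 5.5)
We argue by contradiction. Suppose that after \cref{step:sparsification} some vertex $v$ still has an outdated copy at some level $i \le k$. By definition this means $\lev(v) > i$ and there is a vertex $u \in \hat{N}_i[v]$, i.e., $u \in N[v] \setminus \Ign{v}$ with $\view{u}{v} = i$. We will show that every such pair $(v,u)$ is eliminated by the combination of \cref{step:handle-outdated,step:handle-clean,step:sparsification}, and that no step up to and including \cref{step:sparsification} creates a new one.

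First consider outdated copies present at the start of $\reset(k)$. If $v \in Q_i$ with $i \le k$ and $u \in \hat{N}_i[v]$, then \cref{step:handle-outdated} adds $v$ to $F'(u)$. Next we check that \cref{step:sparsification} handles every member of $F'(u)$ in one of two ways. If $u \in S \cup T'$, then $\view{u}{v}$ is reset to $\lev(v)$, so the copy becomes primary (or moves to the current level). Otherwise, $v$ is added to $F(u)$, so $u \in \Ign{v}$ and $u$ leaves $\hat{N}_i[v]$. In either case the entry disappears from level $i$.

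The main obstacle is copies that become outdated during the rebuild itself. This happens in \cref{step:handle-clean}, where $\lev(v)$ is raised to $k+1$ for clean vertices. Their former primary copy $\hat{N}_{\lev^{\old}(v)}[v]$ at level $\le k$ thereby becomes outdated. However, exactly these neighbors $u$ are added to $F'(u)$ in that same step, so the same two-case argument applies. Updating $\view{u}{v}$ to $\lev(v)=k+1$ is then not outdated, since it matches the current level.

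For dirty vertices, $\lev(v) > k+1$ is unchanged before \cref{step:sparsification}, so any outdated copy they have at level $\le k$ was already in $Q_i$ at the start and was covered above. We also observe that the view updates in \cref{step:sparsification} set viewed levels to the current $\lev(\cdot)$, which is at least $k+1$ for all touched clean vertices. Hence they create no copies below $k+1$. Finally, we note that \cref{step:filter-dirty} and \cref{step:remove-dead-weight} alter no levels or viewed levels, so they do not create outdated copies either.

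Together these cases show that no outdated copy at level $k$ or below survives \cref{step:sparsification}.
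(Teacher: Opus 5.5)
Your proof is correct and follows the paper's argument: every non-ignoring neighbor $u$ that views $v$ at a level $\le k$ gets $v$ placed in $F'(u)$, and \cref{step:sparsification} then either adds $v$ to $F(u)$ or resets $\view{u}{v}$ to $\lev(v)\ge k+1$. The paper's one-line proof attributes all these additions to \cref{step:handle-outdated}, whereas you separately treat the former primary copies of clean vertices (added in \cref{step:handle-clean}) and check that no other step creates outdated copies, which makes precise what the paper leaves implicit.
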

\begin{proof}
  Observe that after \cref{step:handle-outdated}, for any vertex $v$ and any vertex $u \in \hat{N}_{\le k}[v]$ we have added $v$ to $F'(u)$. Then in \cref{step:sparsification}, we either add $v$ to $F(u)$, or we update $\view{u}{v}$ to the actual $\lev(v)$, after which $\view{u}{v} \ge k+1$.
\end{proof}

\begin{step}[Handling dirty vertices]\label{step:handle-dirty}
  For each vertex $v \in \mathcal{D}'$ find the minimum level $h \ge \max\{k+1, \bt(v)\}$ such that $\hat{\wts}(u) - \viewx{u}{v} + \wat{h} \le c_u - |F(u)| \cdot (1+\epsilon)\lambda c_u$ for all $u \in N[v] \setminus \Ign{v}$, and then set $\lev(v)$ to $h$ and update $\view{u}{v}$ to $h$ for all $u \in N[v] \setminus I(v)$. If $h = k+1$ or $h = \bt(v)$ then make $v$ active; otherwise, set $\zlev(v)$ to $k+1$. Note that $h$ can be found in time $O(\lev(v) - h + 1)$ after finding $\min_{u \in N[v] \setminus I(v)} c_u - |F(u)| \cdot (1+\epsilon)\lambda c_u - \hat{\wts}(u) + \viewx{u}{v}$, which takes time $O(|N[v] \setminus I(v)|)$.
\end{step}

\begin{remark}
  Recall that $\mathcal{D}'$ consists of dirty vertices for which we failed to find a tight neighbor $u \in N[v] \setminus I(v)$ such that $\dom(u) \ge k+1$. Thus, the lazy level invariant (\cref{inv:zlev}) might be violated for $v \in \mathcal{D}'$. We try to fix it by decreasing $\lev(v)$ until some vertex $u \in N[v] \setminus \Ign{v}$ becomes tight. Then we set $\lev(v)$ to $h$ and update $\view{u}{v}$ to $h$ for each $u \in N[v] \setminus \Ign{v}$. Note that because of the way we select $h$, the bounded weight invariant (\cref{inv:weight-bound}) is maintained.

  Setting $\lev(v)$ and all $\view{u}{v}$ for $u \in N[v] \setminus I(v)$ to $h$ satisfies the level invariant and the view invariant (\cref{inv:level,inv:view}), since by \cref{obs:neighbor-levels}, we have $\dom(u) \le \max{\bt(v),k+1}$ for all $u \in N[v]$.

  If $h > \max\{k+1, \bt(v)\}$, then we argue that at least one of the vertices in $N[v] \setminus \Ign{v}$ is tight. Indeed, the argument here is the same as the one we used for the $\ins$ subroutine. We have $\hat{\wts}(u) - \view{u}{v} + \wat{h} > \gamma c_u / (1+\epsilon)$ for at least one $u \in N[v] \setminus \Ign{v}$. By \cref{obs:neighbor-levels}, we have $\dom(u) \ge k+1$ for such neighbor $u$, and so setting $\zlev(v)$ to $k+1$ restores the lazy level invariant (\cref{inv:zlev}). If $h = \bt(v)$, then $v$ becomes active. Otherwise, if $h \ne \bt(v)$ and $h = k+1$, we also make $v$ active, but with the following caveat. If $N[v] \setminus I(v) \ne \emptyset$, then $\Dom(v) = k+1$, according to \cref{obs:neighbor-levels}. However, it is possible that $N[v] \setminus I(v) = \emptyset$, and hence $\Dom(v) < k+1$. As we argue later, such vertices will be placed at level $\bt(v)$ after \cref{step:water}; thus $v$ will again be active and in a correct state. We also note that it is possible that $N[v] \setminus I(v) = \emptyset$. But in that case $v$ will reach $\bt(v)$ in the subsequent call to $\water$ and thus join $H$.

  Therefore, after \cref{step:handle-dirty}, all passive vertices have $\zlev(v) \ge k+1$, and the lazy level invariant (\cref{inv:zlev}) is satisfied for them.
\end{remark}

\begin{step}\label{step:pre-water}
  Let $D' \subseteq S \cup T'$ be the set of slack vertices $v$ such that $\dom(v) > \bs(v)$, and let $V' \subseteq \mathcal{C} \cup \mathcal{D}'$ be the set of vertices $v$ such that $\lev(v) > \bt(v)$ and there is no tight neighbor among $N[v] \setminus I(v)$. For every $v \in D'$ set $\dom(v) \gets k$. For every $v \in V'$ set $\lev(v) \gets k$ and update $\view{u}{v}$ to $k$ for all $u \in N[v] \setminus I(v)$.
\end{step}

\begin{remark}
  After this step, the tightness invariant (\cref{inv:tight}) might be violated only for vertices in $D'$ and they are at dominating level $k$. All vertices with $\Dom(v) \le k$ are in $V'$.
\end{remark}

\begin{step}\label{step:water}
Invoke $\water(k, D', V')$.
\end{step}

\begin{remark}
  In this step we fix \cref{inv:tight} for vertices $D'$ using the $\water$ subroutine from \cref{lm:water-filling-full}. The original subroutine from \cite{bhattacharya2019new} is intended for the MSC problem. The adaptation for the MDS problem is straightforward, and we describe the necessary modifications to make it work for our algorithm. For the description of the original subroutine, refer to \cite{bhattacharya2019new}.
\end{remark}

\begin{lemma}[From \cite{bhattacharya2019new}, adapted for MDS]\label{lm:water-filling-full}
  There exists a subroutine $\water$ with the following properties.\footnote{In \cite{bhattacharya2019new}, this subroutine is called $\textrm{FixLevel}$.} The subroutine $\water(k, D', V')$ takes as input two collections of vertices $D'$ and $V'$ such that $\dom(v)=k$ for every $v \in D'$ and $\Dom(u)=k$ for every $u \in V'$. Moreover, each vertex $v \in D'$ satisfies $\wts(v) \le c_v$. The subroutine updates the dominating levels of vertices in $D'$ and the dominated levels of vertices in $V'$ so that: (1) $\wts(v) \le c_v$ for every $v \in D'$, and (2) if $\dom(v) > \bs(v)$, then $\wts(v) > c_v/(1+\epsilon)$ for every $v \in D'$. The subroutine runs in time $O\!\left(k + |D'| + |V'| + \sum_{u \in V'} |N[u] \cap D'|\right)$.
\end{lemma}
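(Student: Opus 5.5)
The plan is to adapt the FixLevel subroutine of \cite{bhattacharya2019new} from set cover to dominating set. Dominating vertices in $D'$ play the role of sets, and dominated vertices in $V'$ play the role of elements. The edge $(v,u)$ with $v\in D'$, $u\in V'$ corresponds to element $u$ belonging to set $v$. The procedure lowers level by level, starting at $i=k$ and going down. It maintains the current set $D_i$ of vertices of $D'$ still at level $i$, the set $V_i$ of vertices of $V'$ whose dominated level is $i$, and, for each $v\in D_i$, a counter $n_v=|N[v]\cap V_i|$. The key point is that the weight $\wts(v)$ of $v\in D'$ splits into a part contributed by vertices outside $V'$, which the procedure never changes, and the part $n_v\cdot\lambda(1+\epsilon)^{-i}$ from $V_i$. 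So whether $v$ can afford to descend from level $i$ to $i-1$ depends only on $n_v$ and a stored base weight. This check costs $O(1)$ per vertex per level.

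The loop at level $i$ works as follows. Each $v\in D_i$ whose weight at the next level would exceed $c_v$ stays at level $i$; this happens when $\wts(v)$ would become more than $c_v$ if all $n_v$ neighbours moved to $i-1$. The same holds for each $v$ with $i=\bs(v)$. These vertices are fixed at level $i$. Every $u\in V_i$ adjacent to a fixed vertex is also fixed at dominated level $i$, and for each such $u$ we decrement $n_w$ for all $w\in N[u]\cap D_i$. All other vertices go down to level $i-1$. Feasibility (1) holds inductively: a vertex is lowered only if it can afford the next level, and each lowering multiplies the contributions by $1+\epsilon$ only. Tightness (2) holds because a vertex fixed above its base level was exactly one that could not afford a $(1+\epsilon)$ factor increase. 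Thus $\wts(v)>c_v/(1+\epsilon)$, which is the standard water-filling argument. A vertex of $V'$ that loses all its $D'$-neighbours is dropped. Its dominated level is set by its neighbours outside $D'$; in our use these are at level at most $k$ by \cref{obs:neighbor-levels}, or it reaches its $\bt$.

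For the runtime, the outer loop over levels costs $O(k)$. Each vertex of $D'\cup V'$ is fixed or dropped once. While still moving, it is carried down to the next level, and this carrying must not cost per-level time. To avoid that, I would store vertices in buckets keyed by the level at which they must stop, with the stopping level of $v\in D'$ computed from $n_v$. In this scheme $n_v$ changes only when a neighbour is fixed, and each change costs $O(1)$. Each edge between $V'$ and $D'$ is charged $O(1)$, when its $V'$-endpoint is fixed or dropped. This gives $O(k+|D'|+|V'|+\sum_{u\in V'}|N[u]\cap D'|)$.

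The main obstacle is this bucketing. Recomputing a target level after $n_v$ decreases must be $O(1)$ amortized, rather than a scan over levels. I would handle it lazily: compute the target level from a closed-form $\log_{1+\epsilon}$ expression, and only re-insert $v$ when its bucket is reached. This mirrors how \cite{bhattacharya2019new} achieves linear time.
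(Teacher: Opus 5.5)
You follow the paper's route. The paper proves this lemma by importing FixLevel from \cite{bhattacharya2019new}, with $D'$ as sets and $V'$ as elements. Your descent with counters $n_v$, the stopping rule ``cannot afford the next level'', and the lazy buckets are a sound account of how that subroutine runs in linear time. The stopping rule gives (2) because $(1+\epsilon)\wts(v)$ is at least the frozen part plus $(1+\epsilon)$ times the moving part. Two points should be made explicit. The stored base weight must also absorb $V'$-neighbours frozen at higher levels. A bucket visit fails to fix $v$ only if some neighbour of $v$ was frozen since $v$'s target was last computed, and this is what bounds the number of visits by $O(|D'|+\sum_{u\in V'}|N[u]\cap D'|)$.

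\textbf{The gap} is the floor at $\bt(u)$ for $u\in V'$, which is the main adaptation the paper makes to FixLevel.

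In the paper's use, $u\in V'$ also has neighbours outside $D'$: its ignoring neighbours $w\in I(u)$, which sit at $\dom(w)=\bs(w)\le\bt(u)$ (\cref{obs:neighbor-levels}). The time bound pays only for $N[u]\cap D'$, so the procedure cannot look at them.

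In your loop, $u$ is frozen only when one of its $D'$-neighbours stops, and a $D'$-neighbour $v$ may descend all the way to $\bs(v)$. Suppose $\bt(u)=\max_{w\in N[u]}\bs(w)$ is attained inside $I(u)$ and all $D'$-neighbours of $u$ have smaller base levels. Then $u$ is carried strictly below $\dom(w)$, and one of two things goes wrong:
\begin{itemize}
\item If you record the reached level as $\lev(u)$, then $\lev(u)<\Dom(u)$. This breaks \cref{inv:level} and the bound $x_u\le(1+\epsilon)\lambda\tau_u$, which \cref{lm:feasibility} needs for the vertices that forgot $u$.
\item If you read $u$'s dominated level as the true $\Dom(u)$, the value of $x_u$ you used when declaring some $v\in D'$ unable to descend is too large. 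Then (2) can fail for $v$.
\end{itemize}
Your ``dropped'' rule does not catch this. Here $u$ never loses its $D'$-neighbours; they keep descending together with it.

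The paper closes this gap by giving every $u\in V'$ a fake neighbour of cost $(1+\epsilon)\lambda\tau_u$. That neighbour becomes tight once $u$ reaches $\bt(u)$ and freezes $u$ there, after which the final clamp $\dom(v)\gets\max\{\dom(v),\bs(v)\}$ is safe. In your scheme this amounts to one extra stopping event per $u\in V'$. Put $u$ in bucket $\bt(u)$. If $u$ is still moving when that bucket is reached, freeze it and update the counters and base weights of its $D'$-neighbours. The cost is $O(|V'|+\sum_{u\in V'}|N[u]\cap D'|)$, which stays within the claimed bound.
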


\paragraph{Adapting the $\water$ subroutine to our algorithm}
The original $\water$ subroutine from \cite{bhattacharya2019new} works for the MSC problem, and takes as input a collection of sets and elements. In our case, vertices from $V'$ translate to elements and vertices from $D'$ translate to sets.
One can observe by the algorithm of \cite{bhattacharya2019new} that the runtime would be $O(k + |D'| + |V'| + \sum_{v \in V'} |N[v] \setminus I(v)|)$. In addition, during the $\water$ subroutine, we enforce $\view{u}{v} = \lev(v)$ for all $u \in N[v] \setminus I(v)$. Thus, whenever we update $\lev(v)$ (which is equal to $\Dom(v)$), we also update $\view{u}{v}$ for all $u \in N[v] \setminus I(v)$. Note that this does not affect the total runtime.

The $\water$ subroutine from \cite{bhattacharya2019new} does not enforce $\lev(v) \ge \bt(v)$, which could break \cref{inv:level}. We enforce this by adding a fake neighbor of cost $(1+\epsilon)\lambda\tau_v$ for every $v \in V'$. Thus, whenever $v$ reaches level $\bt(v)$, this fake neighbor becomes tight and so $v$ becomes ``frozen'' (for the description of ``frozen'' elements, refer to \cite{bhattacharya2019new}).

In the end, we set $\dom(v) \gets \max\{\dom(v), \bs(v)\}$ for every $v \in D'$ to restore the requirement $\dom(v) \ge \bs(v)$ from \cref{inv:tight}. Note that this is safe to do, since $\lev(v) \ge \bt(v)$ for all $v \in V'$.

\begin{remark}
At the beginning of \cref{step:water}, all vertices $v \in V'$ are active and have $\lev(v) = k$, but some of them might be in an incorrect state, since it might be $N[v] \setminus I(v) = \emptyset$. However, the $\water$ subroutine will place such vertices at level $\bt(v)$, and so $\lev(v) = \Dom(v) = \bt(v)$ for them.
\end{remark}

\begin{step}\label{step:return}
  RETURN.
\end{step}

\section{Amortized runtime analysis}\label{sec:analysis}
Our analysis uses a potential function composed of several terms. The total potential $\Phi$ is defined as the sum of the following potential functions.
\begin{description}
  \item[$\phi$-potential.] For each index $0 \le i \le L$, define $\beta_i = \frac{18}{\epsilon^2}\cdot\frac{(1+\epsilon)^{i+1}}{\lambda}$. The $\phi$-potential of a vertex $v$ is $\Phi^{\phi}(v) = \phi(v)\cdot \beta_{\dom(v)}$. The $\phi$-potential of level $i$ is $\Phi^{\phi}_i = \sum_{v \in D_i} \Phi^{\phi}(v)$.

  \item[$\delta$-potential.] The $\delta$-potential of level $i$ is $\Phi^{\delta}_i = (\alpha+1)\cdot \delta_i \cdot 2\beta_i$.

  \item[Passive potential.] For a passive vertex $v$, define $\Phi^{\passive}(v) = \frac{18\alpha}{\epsilon}\cdot(\lev(v)-\zlev(v))$. For an active vertex $v$, define $\Phi^{\passive}(v)=0$.

  \item[Forget potential.] The forget potential of a vertex $v$ is $\Phi^{\forget}(v) = -\frac{1}{2}|\Forg{v}|$.

  \item[Level potential.] For each level $i$, define $\Phi^{\level}_i = L$ if $\phi_i \neq 0$ or $\delta_i \neq 0$, and define $\Phi^{\level}_i = 0$ otherwise.

  \item[Incoming potential.]
    Consider some orientation of the edges of $G$, which yields an oriented graph $\overline{G}$. We say that this orientation is a $d$-orientation if the out-degree of each vertex is bounded by $d$. Since we add and delete edges from $G$, $\overline{G}$ also changes. Moreover, some edges might change their orientation. The number of edges that changed their orientation is called the number of reorientations. We use the following theorem from \cite{brodal1999dynamic}
    \begin{theorem}\label{thm:reorientation}
      Given an arboricity-$\alpha$-preserving sequence of edge insertions and deletions on an initially empty graph, for any $d > \alpha$ there exists a sequence of $d$-orientations such that
      \begin{enumerate}
      \item for each edge insertion there are no edge reorientations,
      \item for each edge deletion there are at most $\ceil{\log_{d/\alpha} |V|}$ reorientations.
      \end{enumerate}
    \end{theorem}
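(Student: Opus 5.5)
The plan is to build the orientation sequence \emph{offline, in reverse time}. In the reversed sequence, forward deletions become insertions and forward insertions become deletions. So it suffices to do two things. First, in reverse time, a deletion must never force a reorientation; this is trivial, since removing an edge only lowers out-degrees. Second, in reverse time, each insertion must be handled by reorienting at most $\ceil{\log_{d/\alpha}|V|}$ edges while keeping every out-degree at most $d$.

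\textbf{Step 1: setting up the reversal.} The reverse process starts from the final graph $G_{\text{end}}$ of the forward sequence. Its arboricity is at most $\alpha$, so \cref{obs:orientation} gives it an $\alpha$-orientation, which is also a $d$-orientation. Since the forward sequence starts empty, the reverse process ends at the empty graph. Let $O_t$ denote the orientation the reverse process holds when it is at the graph of forward time $t$. Then consecutive forward orientations $O_t, O_{t+1}$ differ exactly by the reorientations done in the corresponding reverse step. A forward insertion of $e$ is a reverse deletion of $e$, so the remaining edges keep their orientation and there are $0$ reorientations. A forward deletion of $e=\{u,v\}$ is a reverse insertion. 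After it, the graph $G+uv$ is a forward-time graph, so its arboricity is at most $\alpha$.

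\textbf{Step 2: repairing a reverse insertion.} Orient the new edge $u\to v$. If $u$ now has out-degree at most $d$, we are done. Otherwise $u$ has out-degree $d+1$, and we search for a directed path from $u$ to some vertex $w$ of out-degree $<d$, and flip every edge on it. After the flip, $u$ loses one out-edge, $w$ gains one and stays $\le d$, and internal vertices are unchanged. The number of reorientations is the path length, so the remaining task is to bound the length of a shortest such path.

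\textbf{Step 3: bounding the path length.} This counting step is the main obstacle. Let $R_i$ be the set of vertices reachable from $u$ by directed paths of length $\le i$, and suppose every vertex in $R_i$ has out-degree $\ge d$. All out-edges of $R_i$ land in $R_{i+1}$, so the subgraph induced by $R_{i+1}$ has at least $d|R_i|$ edges. By arboricity it has at most $\alpha(|R_{i+1}|-1) < \alpha|R_{i+1}|$ edges. Hence
\[
|R_{i+1}| > \frac{d}{\alpha}\,|R_i|, \qquad\text{and so}\qquad |R_i| \ge \Bigl(\frac{d}{\alpha}\Bigr)^{i},
\]
which is where $d>\alpha$ is used. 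Since $|R_i|\le |V|$, some vertex within distance $\ceil{\log_{d/\alpha}|V|}$ of $u$ must have out-degree $<d$. This gives the required bound on reorientations per forward deletion.

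The only care needed is in Step 1, to check that the correspondence between reverse and forward steps maps reorientations exactly as claimed. The deleted edge itself is not counted as a reorientation, because in forward time it simply disappears.
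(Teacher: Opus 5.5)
The paper does not prove this theorem; it quotes it from Brodal and Fagerberg~\cite{brodal1999dynamic}. Your argument is essentially theirs:
\begin{itemize}
\item build the orientations backward from an $\alpha$-orientation of the final graph, so that forward insertions become reverse deletions needing no reorientations;
\item repair each reverse insertion by flipping a shortest directed path from the overloaded endpoint to a vertex with spare out-degree;
\item bound that path's length by noting that the out-balls $R_i$ grow by a factor $d/\alpha$ as long as they contain no such vertex.
\end{itemize}
For integer $d$ every step is correct. This includes your remark that the reinserted edge, if it lies on the flipped path, is not a forward reorientation.

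You should make explicit that you use integrality of $d$, in ``$u$ has out-degree $d+1$'' and in ``$w$ gains one and stays $\le d$''. If $d$ is not an integer, a target $w$ with out-degree $<d$ may have out-degree $\lfloor d\rfloor$, and then it exceeds $d$ after the flip. The natural fix is to run your argument with $\lfloor d\rfloor$. That only yields the base $\lfloor d\rfloor/\alpha$, and nothing at all when $\lfloor d\rfloor=\alpha$.

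This cannot be repaired, because the statement is false in that regime. Take $\alpha=1$ and $1<d<2$, so a $d$-orientation gives every tree component a unique sink. Build two paths $p_1\cdots p_m$ and $p'_1\cdots p'_m$. Then repeat rounds in which each of the cross edges $p_1p'_1$, $p_mp'_m$, $p_1p'_m$, $p_mp'_1$ is inserted and immediately deleted.
\begin{itemize}
\item A reorientation-free insertion requires an endpoint of the cross edge to be the sink of its path.
\item Deleting the cross edge restores both sinks before any flips.
\item Moving a path's sink by distance $t$ costs exactly $t$ flips.
\end{itemize}
Suppose each deletion may flip at most $\ell$ edges, with $3\ell<m-1$. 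Then during one round neither sink can occupy both ends of its path. Hence some cross edge of the round has no sink endpoint at its insertion time. So $\Omega(n)$ reorientations per deletion are unavoidable, not $\lceil\log_d n\rceil$.

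Your proof therefore establishes exactly the correct, integer-$d$ version of the theorem. The distinction is not pedantic here: the paper instantiates the theorem with $d=(1+\epsilon/2)\alpha$, for which $\lfloor d\rfloor=\alpha$ whenever $\alpha<2/\epsilon$ (e.g.\ for forests).
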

    Let $\overline{G}$ be the oriented graph from such a sequence of $(1+\epsilon/2)\alpha$-orientations. Then the incoming potential of a vertex $v$ is defined as $\Phi^{in}(v) = \frac{3}{\epsilon} | \Forg{v} \cap \overline{N}^{in}[v] |$.
\end{description}

For each type of potential, we also define its total value by summing over all vertices or all levels, as appropriate. For example,
\[
  \Phi^{\passive} = \sum_{v \in V} \Phi^{\passive}(v)
  \qquad\text{and}\qquad
  \Phi^{\delta} = \sum_{i=0}^L \Phi^{\delta}_i.
\]
Similarly, we define $\Phi^{\forget}$, $\Phi^{\level}$ and $\Phi^{in}$. The total potential is then
\[
  \Phi = \Phi^{\phi} + \Phi^{\delta} + \Phi^{\passive} + \Phi^{\forget} + \Phi^{\level} + \Phi^{in}.
\]

Each unit decrease in potential will be used to pay for a constant amount of runtime.

\subsection{Insertion}\label{sec:insertion-analysis}
From the description of the $\ins$ subroutine, the actual runtime cost of insertion is $O(L) = O(\log_{1+\epsilon} Cn) = O(\frac{\log n}{\epsilon} + \frac{\log C}{\epsilon})$. Next, we analyze the potential increase. Without loss of generality, consider one endpoint, say $v$.

Let $\lev^{\old}(v)$ be the value of $\lev(v)$ before the insertion. Notice that we might have created at most one outdated copy, and this copy is at level $\lev^{\old}(v)$. This, as a result, increases $\delta_{\lev^{\old}(v)}$ by $\wat{\lev^{\old}(v)}$. Therefore, the $\delta$-potential increases by at most
\[
  (\alpha+1) \cdot \wat{\lev^{\old}(v)} \cdot 2\frac{18}{\epsilon^2}\cdot \frac{(1+\epsilon)^{\lev^{\old}(v)+1}}{\lambda} = \frac{36(1+\epsilon)(\alpha+1)}{\epsilon^2}.
\]
The passive potential might increase by at most $\frac{18\alpha L}{\epsilon} = O(\frac{\alpha \log n}{\epsilon^2} + \frac{\alpha \log C}{\epsilon^2})$, since the gap $\lev(v) - \zlev(v)$ can increase by at most $L$. The level potential might increase by at most $L = O(\frac{\log n}{\epsilon} + \frac{\log C}{\epsilon})$, since $\delta_{\lev^{\old}(v)}$ might have become nonzero. The $\phi$-potential does not increase, since dead weights do not change. Observe that $F(v)$ does not change, so the forget potential does not change. According to \cref{thm:reorientation}, there are no reorientations, so the incoming potential does not change. We conclude the analysis with the following claim.

\begin{claim}\label{thm:insert-runtime}
  The amortized runtime of an insertion is $O\!\left(\frac{\alpha \log (Cn)}{\epsilon^2}\right)$.
\end{claim}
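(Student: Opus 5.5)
The amortized cost is the actual cost of $\ins(e)$ plus the total increase in $\Phi$ that the insertion causes, with the paper's convention that one unit of potential pays for a constant amount of work. I will bound the two separately, handling each endpoint in turn (by symmetry, $v$ then $u$), and add the bounds.

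\emph{Actual runtime.} Finding $h_v$ scans at most $L+1$ candidate levels. Each candidate is tested in $O(1)$ time, because $\hat{\wts}(u)$, $|F(u)|$ and $c_u$ are stored explicitly. The remaining bookkeeping is $O(1)$: setting $\lev(v)$, $\view{u}{v}$ and $\zlev(v)$, inserting $u$ into $\hat N_{h_v}[v]$, and possibly moving $v$ between the lists $A_i$, $P_i$ and $Q_i$. Updating the balanced search tree for $\tau_v$, $\bt(v)$, $H$ and $H'$ costs $O(\log n)$. The total is $O(L+\log n)=O(\log(Cn)/\epsilon)$.

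\emph{Potential increase.} I go through the potentials one by one, as in the discussion preceding the claim.
\begin{itemize}
\item $\Phi^\delta$: raising $\lev(v)$ from $\lev^{\old}(v)$ creates at most one new outdated copy, at level $\lev^{\old}(v)$. The new viewer $u$ sits in the primary copy, and $\view{}{}$ values of other neighbors are untouched. Putting $v$ into $Q_{\lev^{\old}(v)}$ adds $\wat{\lev^{\old}(v)}$ to $\delta$. Multiplied by $2(\alpha+1)\beta_{\lev^{\old}(v)}$, the $\lambda$ and $(1+\epsilon)^{i}$ factors cancel, so the increase is $O((\alpha+1)/\epsilon^2)$.
\item $\Phi^{\passive}$: $\zlev(v)$ is reset to $0$ and $\lev(v)\le L$, so the increase is at most $\frac{18\alpha}{\epsilon}L=O(\alpha\log(Cn)/\epsilon^2)$.
\item $\Phi^{\level}$: at most one $\delta_i$ switches from zero to nonzero, so the increase is at most $L$.
\item $\Phi^\phi$: no dead weight changes, and no $\dom$ changes, so there is no increase.
\item $\Phi^{\forget}$: no $F(\cdot)$ changes, so there is no increase.
\item $\Phi^{in}$: \cref{thm:reorientation} guarantees no reorientation on an insertion. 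The new edge is oriented, but it is not in $F(u)$ or $F(v)$, so no $|F(w)\cap\overline N^{in}[w]|$ grows.
\end{itemize}

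\emph{Points to check carefully.} Two things could be overlooked. First, tightness is defined through $\hat{\wts}+\phi$, and the insertion only increases $\hat{\wts}(u)$. Statuses can therefore only move from slack to tight, which changes the lists $T_i$ but no potential term. \Cref{as:dead-weight} is not triggered, since no vertex becomes slack. Second, the $\beta$-weighted $\delta$ term must be evaluated at the \emph{old} level. This is what keeps it level-independent: $\wat{i}\cdot\beta_i=\frac{18(1+\epsilon)}{\epsilon^2}$.

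Summing over both endpoints gives $O(\log(Cn)/\epsilon)+O(\alpha/\epsilon^2)+O(\alpha\log(Cn)/\epsilon^2)=O(\alpha\log(Cn)/\epsilon^2)$, which is the bound in \cref{thm:insert-runtime}.
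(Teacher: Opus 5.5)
Your proposal is correct and follows the paper's argument essentially step for step. You bound the $O(L)$ actual cost, then bound the per-endpoint potential increases: $\Phi^\delta$ grows by $O(\alpha/\epsilon^2)$ from the single new outdated copy at level $\lev^{\old}(v)$, $\Phi^{\passive}$ by at most $\frac{18\alpha}{\epsilon}L$, $\Phi^{\level}$ by at most $L$, and $\Phi^\phi$, $\Phi^{\forget}$, $\Phi^{in}$ do not change. Your extra remarks are harmless refinements that the paper leaves implicit: the $O(\log n)$ search-tree update, the fact that the new edge's orientation touches no $F(\cdot)$, and the fact that tightness can only switch from slack to tight.
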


\subsection{Deletion}\label{sec:deletion-analysis}
From the description of the subroutine, the actual runtime of a deletion is $O(\log n)$ (recall that we update $\tau_v$ and $\tau_u$, which takes time $O(\log n)$). Next, we analyze the potential increase. As in the insertion analysis, it's enough to consider one endpoint, say $v$.

The increase in passive potential is at most $\frac{18\alpha}{\epsilon}L = O(\frac{\alpha \log n}{\epsilon^2} + \frac{\alpha \log C}{\epsilon^2})$, since we might decrease $\zlev(v)$. To analyze the increase in the $\phi$-potential, consider two cases. If $\dom(v) > \bs(v)$, then we increase $\phi(v)$ by $\wat{\view{v}{u}} \le \wat{\dom(v)}$ by \Cref{inv:view}, and so the increase in $\Phi^{\phi}(v)$ is at most
\[
  \wat{\dom(v)} \cdot \beta_{\dom(v)} = \wat{\dom(v)} \cdot \frac{18}{\epsilon^2}\cdot\frac{(1+\epsilon)^{\dom(v)+1}}{\lambda} = \frac{18(1+\epsilon)}{\epsilon^2}.
\]
Otherwise, if $\dom(v) = \bs(v)$, then the increase in $\Phi^{\phi}(v)$ is
\[
  \gamma c_v \cdot \beta_{\bs(v)} \le (1+\epsilon)^{-\bs(v)} \cdot \frac{18}{\epsilon^2}\cdot\frac{(1+\epsilon)^{\bs(v)+1}}{\lambda} = \frac{18(1+\epsilon)(3\alpha+1)}{\epsilon^2}.
\]
If $u \in F(v)$, then we remove $u$ from $F(v)$, so the increase in $\Phi^{\forget}(v)$ is at most $\frac{1}{2}$. The increase in the level potential is at most $L$, since $\phi_{\dom(v)}$ might have become nonzero. According to \cref{thm:reorientation}, there are at most $\left\lceil \log_{1+\epsilon/2} n \right\rceil = O(\frac{\log n}{\epsilon})$ reorientations. Therefore, the increase in the incoming potential is $O(\frac{\log n}{\epsilon^2})$. We conclude the analysis with the following claim.

\begin{claim}\label{thm:delete-runtime}
  The amortized runtime cost of deletion is $O\!\left(\frac{\alpha \log (Cn)}{\epsilon^2}\right)$.
\end{claim}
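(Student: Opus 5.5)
The claim (\cref{thm:delete-runtime}) follows by adding the actual running time of $\del(e)$ to the total increase $\Delta\Phi$ of the potential caused by the deletion. Every unit of potential pays for a constant amount of work, so the amortized cost is the actual cost plus $\Delta\Phi$. The deletion touches only its two endpoints, so it suffices to bound everything for one endpoint $v$ and multiply by two. The reorientations are the one exception: they are a global effect, and I account for them once.

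\textbf{Actual cost.} The only nonconstant work is updating the balanced search trees that store $N[v]$ and $N[u]$ by cost. This also refreshes $\tau_v$, $\bt(v)$, the set $H$ and the counters behind $H'$, and costs $O(\log n)$ in total. Everything else consists of $O(1)$ list operations: adjusting $\phi(v)$ and $\hat{\wts}(v)$, removing $u$ from $F(v)$ or from $\hat N_{\view{u}{v}}[v]$ and possibly from $Q_{\view{u}{v}}$, resetting $\zlev(v)$, and updating the per-level aggregates $\phi_i$, $\delta_i$, $c(T_i)$ and $\sum_{v\in V_i} x_v$.

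\textbf{Potential increase, term by term.}
\begin{enumerate}
\item Passive potential. Resetting $\zlev(v)\gets 0$ raises $\lev(v)-\zlev(v)$ by at most $L$, so $\Phi^{\passive}$ grows by at most $\frac{18\alpha}{\epsilon}L$.
\item $\phi$-potential, case $\dom(v)>\bs(v)$. By \cref{inv:view}, $\view{v}{u}\ge\dom(v)$, so the added dead weight $\viewx{v}{u}$ is at most $\wat{\dom(v)}$. Multiplying by $\beta_{\dom(v)}$ gives an increase of $O(1/\epsilon^2)$.
\item $\phi$-potential, case $\dom(v)=\bs(v)$. Here $\phi(v)$ becomes $\gamma c_v\le(1+\epsilon)^{-\bs(v)}$. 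Multiplying by $\beta_{\bs(v)}$ gives an increase of $O(\alpha/\epsilon^2)$.
\item $\delta$-potential. Removing an entry from some $\hat N_i[v]$ can only shrink $Q_i$, so $\Phi^{\delta}$ does not increase.
\item Forget potential. At most one element leaves $F(v)$, so $\Phi^{\forget}$ increases by at most $\tfrac12$.
\item Level potential. Only $\phi_{\dom(v)}$ can turn from zero to nonzero, so $\Phi^{\level}$ increases by at most $L$.
\item Incoming potential, before reorientations. Deleting the edge can only remove $u$ from $F(v)\cap\overline N^{in}[v]$, so this term does not increase.
\end{enumerate}

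\textbf{Reorientations.} I expect this to be the only step that needs care. I take the sequence of $(1+\epsilon/2)\alpha$-orientations from \cref{thm:reorientation}. Each deletion causes at most $\lceil\log_{1+\epsilon/2} n\rceil=O(\log n/\epsilon)$ reorientations. Each reoriented edge $\{a,b\}$ can newly become incoming at one endpoint. That changes at most one membership in the sets $F(\cdot)\cap\overline N^{in}[\cdot]$ and raises $\Phi^{in}$ by at most $3/\epsilon$. The total increase of $\Phi^{in}$ is therefore $O(\log n/\epsilon^2)$.

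\textbf{Summing up.} The amortized cost is
\[
O(\log n)+O\!\left(\frac{\alpha L}{\epsilon}\right)+O\!\left(\frac{\alpha}{\epsilon^2}\right)+O(L)+O\!\left(\frac{\log n}{\epsilon^2}\right).
\]
Using $L=O(\log(Cn)/\epsilon)$, this is $O\!\left(\frac{\alpha\log(Cn)}{\epsilon^2}\right)$, as claimed.
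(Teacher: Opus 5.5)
Your proposal is correct and follows the paper's argument essentially line by line. Both proofs use the same ingredients: $O(\log n)$ actual cost for the cost-ordered search trees; an $O(\alpha L/\epsilon)$ increase in the passive potential; the two-case bound on the $\phi$-potential ($O(1/\epsilon^2)$ via \cref{inv:view}, and $O(\alpha/\epsilon^2)$ via $\gamma c_v \le (1+\epsilon)^{-\bs(v)}$); $\tfrac12$ for the forget potential; $L$ for the level potential; and $O(\log n/\epsilon^2)$ for the incoming potential from the $O(\log n/\epsilon)$ reorientations of \cref{thm:reorientation}. Your extra checks that $\Phi^{\delta}$ and $\Phi^{in}$ do not grow from the edge removal itself are consistent with the paper, which leaves them unstated. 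One step is worth making explicit, as the paper also skips it: applying \cref{inv:view} to $\view{v}{u}$ requires $u\notin F(v)$, and this holds because \cref{inv:forgotten-bound} forces $F(v)=\emptyset$ whenever $\dom(v)>\bs(v)$.
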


\subsection{Rebuild}\label{sec:rebuild-analysis}
Let us denote by the superscript ``old'' the values of the variables at the beginning of the execution of $\reset(k)$, i.e., $\lev^{\old}(v)$ is the value of $\lev(v)$ for a vertex $v$ at the beginning of the call. We use the notation $\Delta\Phi = \Phi - \Phi^{\old}$; we extend this notation for all potential functions.

Let us first bound the decrease in the $\delta$ and the $\phi$
potential in terms of the sizes of $V_{\le k}$, $T_{\le k}$ and
$Q_{\le k}$. Our proof is similar to the one in \cite{bhattacharya2021dynamic}.

\begin{lemma}\label{lm:down}
  We have the following lower bound on the $\phi$-potential and the $\delta$-potential at the beginning of $\reset(k)$:
  \[
  \Phi^{\delta}_{\le k} + \Phi^{\phi}_{\le k}
  \ge 18(\alpha+1) |T_{\le k}|  + \frac{18(\alpha+1)}{\epsilon}|V_{\le k}| + \frac{18(\alpha+1)}{\epsilon^2} |Q_{\le k}|.
  \]
\end{lemma}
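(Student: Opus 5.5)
The plan is to use the fact that $\reset(k)$ is only called at the smallest level $k$ violating the prefix version of \cref{inv:error-bound}, that is,
\[
(\alpha+1)\delta_{\le k} + \phi_{\le k} > \epsilon\Bigl(\xi c(T_{\le k}) + (\alpha+1)\sum_{v \in V_{\le k}} x_v\Bigr),
\]
and to convert each side into potential via the weights $\beta_i$. Since $\beta_i$ is increasing in $i$, and a vertex contributing to $\phi_{\le k}$ or $\delta_{\le k}$ sits at level at most $k$, we get
\[
\Phi^{\phi}_{\le k} \ge \sum_{i\le k}\beta_i\phi_i \quad\text{and}\quad \Phi^{\delta}_{\le k} = 2(\alpha+1)\sum_{i\le k}\beta_i\delta_i .
\]
The trouble is that these sums carry level-dependent weights, while the right-hand side of the violated inequality is unweighted. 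I would handle this as in \cite{bhattacharya2021dynamic}: use minimality of $k$. For every $j<k$ the reverse inequality $(\alpha+1)\delta_{\le j}+\phi_{\le j}\le \epsilon(\xi c(T_{\le j})+(\alpha+1)\sum_{V_{\le j}}x_v)$ holds. Abel summation over $j$ then lets us bound $\sum_{i\le k}\beta_i(\ldots)_i$ from below by $\beta_k$ times the level-$k$ excess, minus lower-level contributions controlled by these reverse inequalities.

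The key quantities should be normalized per vertex. For $v\in V_i$ we have $x_v=\lambda(1+\epsilon)^{-i}$, so $\beta_i x_v = \frac{18(1+\epsilon)}{\epsilon^2}$. For $v\in T_i$ we have $c_v\ge(1+\epsilon)^{-\bs(v)-1}\ge(1+\epsilon)^{-i-1}$ when $\dom(v)=i\ge\bs(v)$. Hence
\[
\beta_i\,\epsilon\,\xi c_v \ge \frac{18}{\epsilon}\cdot\frac{\xi}{\lambda} = \frac{18(1+\epsilon)(\alpha+1)}{\epsilon} \ge 18(\alpha+1),
\]
which suggests the $|T_{\le k}|$ term. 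Also, for $v\in Q_i$ we get $\delta_i=\lambda(1+\epsilon)^{-i}|Q_i|$, so half of $\Phi^{\delta}_i$ directly equals
\[
(\alpha+1)\beta_i\lambda(1+\epsilon)^{-i}|Q_i| = \frac{18(1+\epsilon)(\alpha+1)}{\epsilon^2}|Q_i|.
\]
This gives the $|Q_{\le k}|$ term outright from half of $\Phi^\delta_{\le k}$, with no use of the violation. The other half of $\Phi^\delta$, together with $\Phi^\phi$, is reserved for the $T$ and $V$ terms.

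For the $T$ and $V$ terms I would argue directly level by level rather than by a global weighted sum. Define $a_i=(\alpha+1)\delta_i+\phi_i$ and $b_i=\epsilon(\xi c(T_i)+(\alpha+1)\sum_{V_i}x_v)$. Minimality gives $\sum_{i\le j}a_i\le\sum_{i\le j}b_i$ for $j<k$, and the violation gives $\sum_{i\le k}a_i>\sum_{i\le k}b_i$. Set $A_j=\sum_{i\le j}(a_i-b_i)$, so that $A_j\le 0$ for $j<k$ and $A_k>0$. Then
\[
\sum_{i\le k}\beta_i(a_i-b_i)=\beta_kA_k+\sum_{j<k}(\beta_j-\beta_{j+1})A_j \ge \beta_k A_k>0,
\]
since $\beta_j-\beta_{j+1}<0$ and $A_j\le0$. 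Thus $\sum_{i\le k}\beta_ia_i\ge\sum_{i\le k}\beta_ib_i$. The left side is at most $\Phi^\phi_{\le k}+\tfrac12\Phi^\delta_{\le k}$. The right side is at least
\[
18(\alpha+1)|T_{\le k}| + \frac{18(1+\epsilon)(\alpha+1)}{\epsilon}|V_{\le k}|
\]
by the per-vertex normalizations above. One point needs checking here: $\phi$ is attributed by $\dom$, whereas $c(T_i)$ also uses $\dom$, and $x_v$ for $V_i$ uses $\lev$. All three are consistent with the weights $\beta_i$.

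Adding the two halves yields the lemma. The main obstacle is the Abel-summation step together with consistent level attribution (for instance, checking that $\Phi^\phi_i$ is exactly $\beta_i\phi_i$, and that $\bs(v)\le\dom(v)$ gives the cost lower bound for tight vertices). I expect the constants to close with slack, given the factor $(1+\epsilon)$.
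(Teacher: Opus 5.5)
Your proposal is correct and follows the paper's proof: take half of $\Phi^{\delta}_{\le k}$ to get the $|Q_{\le k}|$ term directly. Then use minimality of $k$ plus Abel summation against the increasing weights $\beta_i$ to turn the violated prefix inequality into $\sum_{i\le k}\beta_i\bigl((\alpha+1)\delta_i+\phi_i\bigr)\ge\sum_{i\le k}\beta_i b_i$, and finish with per-vertex normalization. The one minor difference is that you use $c_v\ge(1+\epsilon)^{-i-1}$ directly rather than the paper's $(1+\epsilon)^x\ge\epsilon(1+x)$ step, which only gives you a slightly stronger constant for the $|T_{\le k}|$ term.
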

\begin{proof}
  By definition, $k$ is the minimum index such that:
  \begin{equation}\label{first}
    (\alpha+1)\delta_{\le k} + \phi_{\le k} ~>~ \epsilon (\xi c(T_{\le k}) + (\alpha+1)\sum_{v \in V_{\le k}} x_v),
  \end{equation}
  \begin{equation}\label{second}
    (\alpha+1)\delta_{\le i} + \phi_{\le i} ~\leq~ \epsilon (\xi c(T_{\le i}) + (\alpha+1)\sum_{v \in V_{\le i}} x_v), \forall 0 \le i < k.
  \end{equation}
  Therefore, we have:
  \begin{equation}\label{eq:1}
    \begin{aligned}
      &\frac{1}{2}\Phi^{\delta}_{\le k} + \Phi^{\phi}_{\le k} =
        \sum_{i = 0}^{k}\beta_i\cdot \left((\alpha+1)\delta_i + \phi_i\right) \\
      &= \beta_k\cdot \left((\alpha+1) \delta_{\le k} + \phi_{\le k}\right) - \sum_{i =0}^{k-1}(\beta_{i+1} - \beta_i)\cdot \left((\alpha+1)\delta_{\leq i} + \phi_{\le i}\right)\\
      &>\beta_k\cdot \epsilon\cdot \left(\xi c(T_{\le k}) + (\alpha+1)\sum_{v \in V_{\le k}}x_v\right) - \sum_{i=0}^{k-1}(\beta_{i+1} - \beta_i)\cdot \epsilon\cdot  \left(\xi c(T_{\le i}) + (\alpha+1)\sum_{v \in V_{\le i}}x_v\right)\\
      &=\epsilon \cdot \sum_{i = 0}^k \beta_i \cdot \xi c(T_i) + \epsilon (\alpha+1) \sum_{i =0}^k \beta_i \cdot \sum_{v \in V_i}x_v
        \mbox{~~(by Abel transformation)}
      \\
      &\geq \epsilon \sum_{i = 0}^k \sum_{v \in T_i} \xi c_v \cdot \frac{18}{\epsilon^2}\cdot\frac{(1+\epsilon)^{i+1}}{\lambda} + \epsilon (\alpha+1)\sum_{i =0}^k \sum_{v\in V_{i}} x_v \cdot \frac{18}{\epsilon^2} \cdot\frac{(1+\epsilon)^{i+1}}{\lambda}\\
      &\geq (\alpha+1) \sum_{i = 0}^k \sum_{v \in T_i} (1+\epsilon)^{-\bs(v)-1}\cdot \frac{18}{\epsilon} \cdot (1+\epsilon)^{i+1}  + (\alpha+1) \sum_{i = 0}^k\sum_{v\in V_i} \lambda(1+\epsilon)^{-i} \cdot \frac{18}{\epsilon} \cdot \frac{(1+\epsilon)^{i+1}}{\lambda} \\
      &\geq \sum_{v \in T_{\leq k}} 18(\alpha+1) \cdot (\dom(v) - \bs(v)+1) + \frac{18(\alpha+1)}{\epsilon}\sz{V_{\le k}}\\
      &\ge 18(\alpha+1)|T_{\le k}| + \frac{18(\alpha+1)}{\epsilon}|V_{\le k}|,
    \end{aligned}
  \end{equation}
where the first inequality follows from \Cref{first} and \Cref{second}; the last two inequalities hold since $c_v \ge (1 + \epsilon)^{-\bs(v) - 1}$ and $(1 + \epsilon)^x \ge 1 + \epsilon x \ge \epsilon (1 + x)$, and since $\dom(v) \ge \bs(v)$ by \cref{inv:tight}.
  Next, recall that $\delta_i = \wat{i} \cdot |Q_i|$. Thus,
  \begin{equation}\label{eq:2}
    \frac{1}{2}\Phi^{\delta}_i = (\alpha+1) \cdot \delta_i \cdot \beta_i= \wat{i} \cdot \frac{18}{\epsilon^2}\cdot\frac{(1+\epsilon)^{i+1}}{\lambda} |Q_i| \ge \frac{18(\alpha+1)}{\epsilon^2} |Q_i|.
  \end{equation}
  Summing \cref{eq:2} over $0 \le i \le k$ and combining with \cref{eq:1}, we get the claimed bound.
\end{proof}

Recall that the $\reset(k)$ subroutine eliminates all dead weights for all vertices with dominating level at most $k$ and all outdated copies at level at most $k$. Therefore, at the end of the call we have $\Phi^{\phi}_{\le k} + \Phi^{\delta}_{\le k} = 0$. Moreover, we do not increase $\Phi^{\phi}_{\ge k+1}+\Phi^{\delta}_{\ge k+1}$. Next, we bound the increase in $\Phi^{\passive}$.

\begin{observation}\label{obs:passive-level}
  Each dirty vertex $v \in \mathcal{D}$ either becomes active (i.e.\ $\lev(v) = \zlev(v)$) or satisfies $\lev^{\old}(v) \ge \lev(v)$ and $\zlev(v) = k+1$. Each passive vertex $v \in \mathcal{C}$ becomes active. Any other passive vertex $v$ is unaffected by $\reset(k)$, i.e.\ $\lev^{\old}(v) = \lev(v)$ and $\zlev^{\old}(v) = \zlev(v)$.
\end{observation}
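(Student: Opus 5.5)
The proof is a case analysis over the three classes of passive vertices named in \cref{obs:passive-level}, obtained by tracing which steps of $\reset(k)$ touch $\lev(v)$ and $\zlev(v)$. The only steps that modify $\lev$ or $\zlev$ are \cref{step:handle-clean}, \cref{step:filter-dirty}, \cref{step:handle-dirty}, \cref{step:pre-water} and \cref{step:water}. \cref{step:sparsification} changes only $\dom$, $F$ and viewed levels, so I would list, for each of these five steps, the set of vertices it touches and then read off the claim class by class.

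\emph{Clean passive vertices.} A vertex $v\in\mathcal{C}$ that is passive has $\zlev(v)\le k$ and $\lev(v)\le k+1$. \cref{step:handle-clean} sets $\lev(v)\gets k+1$ and makes $v$ active explicitly. Afterwards $v$ is touched only in \cref{step:pre-water} and \cref{step:water}, and only if $v\in V'$. Both steps keep $v$ active, since $\water$ maintains $\lev=\Dom$ for vertices of $V'$.

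\emph{Dirty vertices.} A vertex $v\in\mathcal{D}$ has $\lev^{\old}(v)>k+1$, and there are two subcases. If the scan in \cref{step:filter-dirty} finds a tight neighbor of dominating level $\ge k+1$, then $\zlev(v)\gets k+1$. No later step touches $\lev(v)$, because $v\notin\mathcal{D}'$, $v\notin\mathcal{C}$, and $V'\subseteq\mathcal{C}\cup\mathcal{D}'$. Hence $\lev(v)=\lev^{\old}(v)$. Otherwise $v\in\mathcal{D}'$ and \cref{step:handle-dirty} chooses the minimal $h\ge\max\{k+1,\bt(v)\}$ satisfying the weight condition. Either $v$ becomes active, or $\zlev(v)=k+1$ with $\lev(v)=h$.

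The key point is $h\le\lev^{\old}(v)$. I would show that $h'=\lev^{\old}(v)$ is itself feasible. Since $\lev^{\old}(v)>k+1$ and $\lev^{\old}(v)\ge\bt(v)$ by \cref{cor:x-ub}, this $h'$ satisfies the lower bound. By \cref{inv:view}, $\view{u}{v}\le\lev^{\old}(v)$, so replacing $\viewx{u}{v}$ by $\wat{\lev^{\old}(v)}$ does not increase $\hat{\wts}(u)$. \cref{inv:weight-bound} holds before this step, so the weight condition holds at $h'$. Minimality of $h$ then gives $h\le\lev^{\old}(v)$.

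One technicality here is that \cref{step:sparsification} may have changed $F(u)$ and viewed levels before \cref{step:handle-dirty}. I therefore need \cref{inv:weight-bound} to hold at that moment, which the remark after \cref{step:sparsification} provides. In the passive subcase, $v$ is not in $V'$, because it has a tight neighbor, so \cref{step:pre-water} and \cref{step:water} leave it alone. In the active subcase, those two steps keep it active.

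\emph{All other passive vertices.} These are passive vertices with $\zlev(v)\ge k+1$, which lie in neither $\mathcal{C}$ nor $\mathcal{D}$. They do not belong to $V'$ or $\mathcal{D}'$, so none of the five steps modifies their $\lev$ or $\zlev$, in line with \cref{prop:zlev-above}.

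I expect the main obstacle to be the bound $h\le\lev^{\old}(v)$. It requires confirming that the intervening sparsification and the updates to viewed levels preserve \cref{inv:weight-bound} for all $u\in N[v]\setminus I(v)$, so that the old level remains a valid candidate.
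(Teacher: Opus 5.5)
Your proposal is correct and is essentially the argument the paper leaves implicit: the observation is stated without proof, as a direct reading of Steps~\ref{step:handle-clean}--\ref{step:water} and the remarks after them. Your explicit check that $\lev^{\old}(v)$ is itself an admissible choice of $h$ in Step~\ref{step:handle-dirty} is the one nontrivial point the paper takes for granted when it describes that step as ``decreasing $\lev(v)$''; that check uses the view invariant together with the bounded weight invariant, which survives Step~\ref{step:sparsification} and, by the choice of $h$, the processing of earlier vertices of $\mathcal{D}'$.
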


\begin{lemma}\label{lm:passive-decrease}
  The change in passive potential satisfies $\Delta\Phi^{\passive} \le - \frac{18\alpha}{\epsilon}|P_{\le k}|$. Moreover, the decrease in $\Phi^{\passive}(v)$ for $v \in \mathcal{D}'$ is enough to cover any $O(\lev^{\old}(v) - \lev(v) + 1)$ runtime cost.
\end{lemma}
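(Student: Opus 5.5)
We prove the lemma by a per-vertex comparison of $\Phi^{\passive}(v)$ before and after $\reset(k)$, using \cref{obs:passive-level} to split into cases. First, vertices that are passive with $\zlev(v)\ge k+1$ at the start are unaffected, so they contribute zero change. Vertices that are active at the start and active at the end contribute zero. It then remains to treat the vertices in $P_{\le k}$, which are exactly the passive clean vertices together with the dirty vertices $\mathcal{D}$. For each such vertex we show the potential drops by at least $\frac{18\alpha}{\epsilon}$. Summing over $P_{\le k}$ then gives $\Delta\Phi^{\passive}\le-\frac{18\alpha}{\epsilon}|P_{\le k}|$.

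\emph{Passive clean vertices.} Such a $v$ has $\zlev^{\old}(v)\le k<\lev^{\old}(v)$, since it is passive and has $\lev^{\old}(v)\le k+1$. Hence $\Phi^{\passive,\old}(v)\ge\frac{18\alpha}{\epsilon}$. By \cref{obs:passive-level} it becomes active, so its potential drops to $0$.

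\emph{Dirty vertices.} Such a $v$ has $\lev^{\old}(v)\ge k+2$ and $\zlev^{\old}(v)\le k$, so the old gap is at least $\lev^{\old}(v)-k\ge 2$. There are two subcases.
\begin{enumerate}
\item If $v$ becomes active, the new potential is $0$ and the drop is $\frac{18\alpha}{\epsilon}(\lev^{\old}(v)-\zlev^{\old}(v))$. This is at least $\frac{18\alpha}{\epsilon}(\lev^{\old}(v)-\lev(v)+1)$ because $\lev(v)\ge k+1>\zlev^{\old}(v)$.
\item Otherwise $\zlev(v)=k+1$ and $\lev(v)\le\lev^{\old}(v)$. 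The drop is $\frac{18\alpha}{\epsilon}\bigl[(\lev^{\old}(v)-\zlev^{\old}(v))-(\lev(v)-k-1)\bigr]$. Since $\zlev^{\old}(v)\le k$, this is at least $\frac{18\alpha}{\epsilon}(\lev^{\old}(v)-\lev(v)+1)$.
\end{enumerate}
In both subcases the drop is at least $\frac{18\alpha}{\epsilon}$. It also dominates $\frac{18\alpha}{\epsilon}(\lev^{\old}(v)-\lev(v)+1)$, which pays for any $O(\lev^{\old}(v)-\lev(v)+1)$ cost, such as the search for $h$ in \cref{step:handle-dirty}. This gives the second claim for $v\in\mathcal{D}'\subseteq\mathcal{D}$.

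\emph{Main obstacle.} The delicate point is to make sure no vertex ends with a larger gap than before. In particular, the later steps \cref{step:pre-water,step:water} must not recreate passivity: vertices in $V'$ are set active with $\lev(v)=k$, and $\water$ keeps them active. Also, clean vertices temporarily set to $\lev(v)=k+1$ are declared active, so they carry no passive potential. I would check each step of \cref{alg:rebuild} to confirm that only \cref{step:filter-dirty,step:handle-dirty} ever assign $\zlev$ or make a vertex passive. For dirty vertices handled in \cref{step:filter-dirty} (not added to $\mathcal{D}'$), $\lev$ is unchanged and $\zlev$ rises to $k+1$, which is covered by the second subcase above.
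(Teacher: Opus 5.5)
Your proof is correct and is essentially the paper's argument: via \cref{obs:passive-level}, every vertex of $P_{\le k}$ (passive clean or dirty) has its gap $\lev(v)-\zlev(v)$ shrink by at least one, and for $v\in\mathcal{D}'$ the drop is at least $\frac{18\alpha}{\epsilon}(\lev^{\old}(v)-\lev(v)+1)$, where $\lev(v)$ is the value $h\ge k+1$ chosen in \cref{step:handle-dirty}. One small correction: a passive clean vertex need not satisfy $\lev^{\old}(v)>k$, since it may have $\lev^{\old}(v)\le k$, but its old gap is still at least $1$ because $\zlev(v)\le\Dom(v)<\lev(v)$ holds for every passive vertex.
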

\begin{proof}
  By \Cref{obs:passive-level}, each passive vertex $v \in \mathcal{C}$ becomes clean. Each vertex $v \in \mathcal{D}$ either becomes clean, or $\zlev(v)$ becomes $k+1$. Additionally, we have $\lev^{\old}(v) \ge \lev(v)$ for $v \in \mathcal{D}$. Therefore, the gap $\lev(v) - \zlev(v)$ decreases by at least one for every $v \in P_{\le k}$ (observe that all such vertices are either clean or dirty), and hence $\Phi^{\passive}(v)$ decreases by at least $\frac{18\alpha}{\epsilon}$. More specifically, for vertices $v \in \mathcal{D}'$, we can lower-bound this decrease by $\frac{18\alpha}{\epsilon} (\lev^{\old}(v) - \lev(v) + 1)$. For other passive vertices, their $\lev(v)$ and $\zlev(v)$ remain unchanged, and hence the passive potential remains the same.
\end{proof}

Let $f_v = |F^{\old}(v)|$ and $f'_v = |F'(v)|$. Next, we bound the potential decrease in terms of $f_v$ and $f'_v$.

\begin{lemma}\label{lm:F-bound} $\sum_{v\in V} \big|F'(v)\cap \overline{N}^{in}[v]\big|
  \le 3\alpha\,\big|Q_{\le k}\cup P_{\le k} \cup V_{\le k}\big|$.
\end{lemma}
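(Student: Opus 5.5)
The plan is a double-counting argument over the incidences $(v,w)$ with $w \in F'(v)$ and $w \in \overline{N}^{in}[v]$. Equivalently, $v \in \overline{N}^{out}[w]$: either $v = w$, or the edge between them is oriented from $w$ to $v$. So I will swap the order of summation and write the left-hand side as
\[
  \sum_{w \in V} \bigl|\{v : w \in F'(v)\} \cap \overline{N}^{out}[w]\bigr|.
\]
Each vertex $w$ then contributes at most $|\overline{N}^{out}[w]| \le (1+\epsilon/2)\alpha + 1$. This uses the fact that $\overline{G}$ is a $(1+\epsilon/2)\alpha$-orientation from \cref{thm:reorientation}. The remaining work is to bound the number of vertices $w$ that appear in some $F'(v)$.

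The first step is to inspect every place in $\reset(k)$ where a vertex $w$ is inserted into some $F'(\cdot)$, and check that $w$ always lies in $Q_{\le k}\cup P_{\le k}\cup V_{\le k}$ (with sets taken at the start of the rebuild).
\begin{itemize}
  \item In \cref{step:handle-outdated}, the inserted vertex is $w=v\in Q_i$ for some $i\le k$, so $w \in Q_{\le k}$.
  \item In \cref{step:handle-clean}, the inserted vertex is a clean vertex. It is either active with $\lev(w)\le k$, so $\Dom(w)=\lev(w)\le k$ and $w \in V_{\le k}$, or passive with $\zlev(w)\le k$, so $w\in P_{\le k}$.
  \item In \cref{step:filter-dirty}, the inserted vertex is a dirty vertex, which is passive with $\zlev(w)\le k$, so again $w\in P_{\le k}$.
\end{itemize}
I need to be careful that $V_{\le k}$ is interpreted via $\lev$ or $\Dom$ as in the paper's convention. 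For active vertices the two coincide, so the containment holds either way.

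The second step is to avoid overcounting a single $w$ that is inserted by several steps, say both as an outdated copy holder in $Q_{\le k}$ and as a clean vertex. Because the bound is per incidence, I will simply charge each incidence to $w$. Then the count of distinct $v$ with $w \in F'(v)$ and $v \in \overline{N}^{out}[w]$ is at most $|\overline{N}^{out}[w]|$ regardless of how many steps added $w$, since $F'(v)$ is a set. This gives
\[
  \sum_v |F'(v)\cap \overline{N}^{in}[v]| \le \bigl((1+\epsilon/2)\alpha+1\bigr)\,|Q_{\le k}\cup P_{\le k}\cup V_{\le k}|.
\]
Since $\epsilon<1/3$ and $\alpha\ge 1$, we have $(1+\epsilon/2)\alpha+1 \le 3\alpha$, which finishes the proof.

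The main obstacle is the bookkeeping in the first step. One has to confirm that no other step of \cref{alg:rebuild} writes to $F'$, and that the membership claims refer to the state at the start of $\reset(k)$. This matters because \cref{step:handle-clean} modifies $\lev$ before \cref{step:filter-dirty} runs. Apart from that, the argument is only the orientation out-degree bound.
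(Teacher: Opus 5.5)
Your proposal is correct and matches the paper's proof: swap the order of summation, then bound each vertex $w$ by $|\overline{N}^{out}[w]| \le (1+\epsilon/2)\alpha+1 \le 3\alpha$ using the $(1+\epsilon/2)\alpha$-orientation. The only difference is that you check explicitly, step by step through $\reset(k)$ and against the state at the start of the rebuild, that every vertex inserted into some $F'(\cdot)$ lies in $Q_{\le k}\cup P_{\le k}\cup V_{\le k}$; the paper uses this containment implicitly in its first inequality.
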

\begin{proof}
  \[
    \sum_{v\in V} \big|F'(v)\cap \overline{N}^{in}[v]\big|
    =\sum_{v\in V}\sum_{u\in F'(v)} \mathbf{1}[u\in\overline{N}^{in}[v]]
    \le \sum_{v\in V}\sum_{u\in Q_{\le k}\cup P_{\le k}\cup V_{\le k}} \mathbf{1}[u\in\overline{N}^{in}[v]].
  \]
  Swapping the sums yields
  \[
    =\sum_{u\in Q_{\le k}\cup P_{\le k}\cup V_{\le k}} \big|\overline{N}^{out}[u]\big|.
  \]
  We have $|\overline{N}^{out}(u)|\le(1+\epsilon/2)\alpha$ for all $u$, so
  \[
    \big|\overline{N}^{out}[u]\big|\le(1+\epsilon/2)\alpha+1\le 3\alpha,
  \]
  using $\epsilon\le1$ and $\alpha\ge1$. Summing over $u$ gives the claim.
\end{proof}

\begin{lemma}\label{lm:large-bound}
  For any vertex $v$ such that $f_v+f'_v\ge(1+\epsilon)\alpha$ we have
  \[
    f_v+f'_v\le\frac{3}{\epsilon}\Bigl(\bigl|\Forg{v}\cap\overline{N}^{in}[v]\bigr|+\bigl|F'(v)\cap\overline{N}^{in}[v]\bigr|\Bigr).
  \]
\end{lemma}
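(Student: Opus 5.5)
Here $F(v)$ means $F^{\old}(v)$, the set of size $f_v$, and $F^{\old}(v)$ and $F'(v)$ are disjoint subsets of $N[v]$. Disjointness holds because $F'(v)\subseteq N[v]\setminus I(v)$ collects vertices not yet forgotten by $v$. The plan is a counting argument that splits $F^{\old}(v)\cup F'(v)$ into in-neighbors and out-neighbors of $v$ in the orientation $\overline{G}$.

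Every $u\in N[v]$ other than $v$ itself lies either in $\overline{N}^{in}(v)$ or in $\overline{N}^{out}(v)$. The vertex $v$ is counted in $\overline{N}^{in}[v]$, so it is harmless. Hence
\[
f_v+f'_v \le \bigl|F^{\old}(v)\cap\overline{N}^{in}[v]\bigr|+\bigl|F'(v)\cap\overline{N}^{in}[v]\bigr| + \bigl|\overline{N}^{out}(v)\bigr|.
\]
Since $\overline{G}$ is a $(1+\epsilon/2)\alpha$-orientation from \cref{thm:reorientation}, the last term is at most $(1+\epsilon/2)\alpha$. Write $X$ for the sum of the two in-neighbor terms. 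We get $f_v+f'_v\le X+(1+\epsilon/2)\alpha$.

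Next I use the hypothesis $f_v+f'_v\ge(1+\epsilon)\alpha$. It gives $(1+\epsilon/2)\alpha \le \frac{1+\epsilon/2}{1+\epsilon}(f_v+f'_v)$. Substituting yields
\[
\Bigl(1-\tfrac{1+\epsilon/2}{1+\epsilon}\Bigr)(f_v+f'_v)=\frac{\epsilon/2}{1+\epsilon}(f_v+f'_v)\le X,
\]
so $f_v+f'_v\le \frac{2(1+\epsilon)}{\epsilon}X$. Because $\epsilon<1/3$, we have $2(1+\epsilon)\le 8/3\le 3$. This gives the claimed factor $\frac{3}{\epsilon}$.

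The only delicate point is making sure $F^{\old}(v)$ and $F'(v)$ are disjoint, so that no vertex is double counted in the out-neighbor bound. If they could overlap, the out-neighbor term would become $2(1+\epsilon/2)\alpha$ and the hypothesis would not absorb it. If overlap occurs, I would fall back on the union $F^{\old}(v)\cup F'(v)$, whose size is at least $\max(f_v,f'_v)$. Disjointness follows from how $F'$ is built in $\reset(k)$. In \cref{step:handle-outdated} and \cref{step:handle-clean}, $v$ is added to $F'(u)$ only for $u\in\hat N[v]$, and these vertices are not in $I(v)$. In \cref{step:filter-dirty}, $u$ is drawn from $N[v]\setminus I(v)$. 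In every case $v\notin F(u)$, so $F'(u)\cap F(u)=\emptyset$ for every $u$, which is the needed disjointness. The remaining work is bookkeeping.
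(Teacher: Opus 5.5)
Your proof is correct and is essentially the paper's argument: you split $F^{\old}(v)\cup F'(v)$ into in- and out-neighbors, bound the out-part by the out-degree $(1+\epsilon/2)\alpha$, absorb it using $f_v+f'_v\ge(1+\epsilon)\alpha$, and conclude with $\frac{2(1+\epsilon)}{\epsilon}\le\frac{3}{\epsilon}$. Your explicit check that $F^{\old}(v)\cap F'(v)=\emptyset$ is sound and makes precise a step the paper uses only implicitly; it holds because a vertex $w$ is added to $F'(v)$ only when $v\in N[w]\setminus I(w)$, i.e.\ $w\notin F^{\old}(v)$, and no $F$-set changes before the sparsification step. Since disjointness holds, your fallback about overlaps is unnecessary.
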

\begin{proof}
  Let $v$ satisfy $f_v+f'_v\ge(1+\epsilon)\alpha$. The out-degree bound $(1+\epsilon/2)\alpha$ gives
  \[
    \bigl|\Forg{v}\cap\overline{N}^{out}(v)\bigr|+\bigl|F'(v)\cap\overline{N}^{out}(v)\bigr|
    \le(1+\tfrac{\epsilon}{2})\alpha
    \le\frac{1+\epsilon/2}{1+\epsilon}\bigl(f_v+f'_v\bigr).
  \]
  Hence
  \[
    \bigl|\Forg{v}\cap\overline{N}^{in}[v]\bigr|+\bigl|F'(v)\cap\overline{N}^{in}[v]\bigr|
    \ge\Bigl(1-\frac{1+\epsilon/2}{1+\epsilon}\Bigr)\bigl(f_v+f'_v\bigr)
    =\frac{\epsilon}{2(1+\epsilon)}\bigl(f_v+f'_v\bigr).
  \]
  Therefore $f_v+f'_v\le\frac{2(1+\epsilon)}{\epsilon}\bigl(|\Forg{v}\cap\overline{N}^{in}[v]|+|F'(v)\cap\overline{N}^{in}[v]|\bigr)$, and since $\epsilon\le\frac12$ we get the claimed bound.
\end{proof}

To bound the increase in $\Phi^{\forget}$ and $\Phi^{in}$, we make the following observations.

\begin{observation}\label{obs:F}
  For any $v \in S \cup T'$ we have $F(v) = \emptyset$. For any $v \notin S \cup T'$, we have $F(v) = F^{\old}(v) \sqcup F'(v)$.
\end{observation}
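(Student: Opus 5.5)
The plan is to trace the forgotten set of each vertex $v$ through $\reset(k)$. Only \cref{step:sparsification} assigns to $F(v)$, and no later step does, so it suffices to look at that step. Two earlier possible sources of change must also be ruled out. The first is edge deletions, which remove elements from $F(v)$; none occur inside $\reset(k)$. The second is \cref{step:handle-dirty,step:pre-water,step:water}, which only update viewed levels $\view{u}{v}$ for $u \in N[v]\setminus I(v)$ and dominating/dominated levels. They never touch any forgotten set, since in the adapted $\water$ we only enforce $\view{u}{v}=\lev(v)$ for non-ignoring neighbors.

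The first case is $v \in S \cup T'$. Then \cref{step:sparsification} explicitly sets $F(v) \gets \emptyset$ after updating viewed levels, and nothing afterwards adds to $F(v)$, so $F(v)=\emptyset$ at the end.

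The second case is $v \notin S \cup T'$. If $F'(v)=\emptyset$, then $F(v)$ is untouched, so $F(v)=F^{\old}(v)=F^{\old}(v)\sqcup F'(v)$. Otherwise \cref{step:sparsification} sets $F(v)\gets F(v)\cup F'(v)$. Before that step, $F(v)=F^{\old}(v)$, because \cref{step:init,step:remove-dead-weight,step:handle-outdated,step:handle-clean,step:filter-dirty} modify only $\phi$, the lists $F'$, levels and lazy levels.

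The remaining point, and the only real one, is disjointness: $F'(v)\cap F^{\old}(v)=\emptyset$. For this I would check every place where some $w$ is added to $F'(v)$. In \cref{step:handle-outdated}, $v\in\hat N_i[w]$, which by definition requires $v \in N[w]\setminus I(w)$. In \cref{step:handle-clean}, $v\in\hat N_{\lev(w)}[w]$, with the same conclusion. In \cref{step:filter-dirty}, $v$ ranges over $N[w]\setminus I(w)$. In each case $v\notin I(w)$, i.e.\ $w\notin F(v)=F^{\old}(v)$. Since $F'(v)$ is a set, repeated insertions cause no multiplicity issue, so the union is disjoint.
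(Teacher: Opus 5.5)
Your proof is correct and follows the reasoning the paper relies on: the paper states this observation without proof, as an immediate reading of \cref{step:sparsification}, which is the only step of the rebuild that modifies any forgotten set. Your explicit disjointness check supplies the one detail the paper leaves implicit: each insertion of $w$ into $F'(v)$ in \cref{step:handle-outdated,step:handle-clean,step:filter-dirty} happens with $v \in N[w]\setminus I(w)$, i.e.\ $w \notin F^{\old}(v)$.
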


\begin{observation}\label{obs:S}
  The set $S$ consists of all vertices $v$ such that $f_v + f'_v > (1+\epsilon)\alpha$.
\end{observation}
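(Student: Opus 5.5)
The plan is to read \cref{obs:S} directly off \cref{step:sparsification} of $\reset(k)$, checking that the quantities $|F(v)|$ and $|F'(v)|$ used there to define $S$ coincide with $f_v = |F^{\old}(v)|$ and $f'_v = |F'(v)|$ (the value of $F'(v)$ at the end of the rebuild). By definition, $S$ is the set of vertices with $|F(v)| + |F'(v)| > (1+\epsilon)\alpha$, evaluated at the moment \cref{step:sparsification} begins. So the statement reduces to two facts. First, $F(v)$ is unchanged from the start of $\reset(k)$ until the start of \cref{step:sparsification}. Second, $F'(v)$ is not modified after $S$ is computed.

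For the first fact, I will go through \cref{step:init,step:remove-dead-weight,step:handle-outdated,step:handle-clean,step:filter-dirty} and note what each one touches.
\begin{itemize}
\item \cref{step:init} only builds the lists $\mathcal{C}$, $\mathcal{D}$, $\mathcal{D}'$ and $T'$.
\item \cref{step:remove-dead-weight} only resets dead weights.
\item \cref{step:handle-outdated} and \cref{step:filter-dirty} only append vertices to the auxiliary lists $F'(u)$ (and, in \cref{step:filter-dirty}, may raise $\zlev$).
\item \cref{step:handle-clean} appends to $F'(u)$ and changes $\lev(v)$ and activity status.
\end{itemize}
None of these writes to any $F(v)$. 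One side effect needs care: tightness of a vertex may change during these steps, for example through the removal of dead weight. If some vertex became slack and \cref{as:dead-weight} were triggered, the only consequence is that $\phi(v)$ is set to $0$; this does not affect $F(v)$. Hence $F(v) = F^{\old}(v)$ when \cref{step:sparsification} starts, giving $|F(v)| = f_v$.

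For the second fact, after $S$ is computed, \cref{step:sparsification} only reads $F'(v)$, either to update viewed levels or to merge it into $F(v)$. The later steps (\cref{step:handle-dirty,step:pre-water,step:water}) never add to $F'$. The lists $F'$ are cleared only after the rebuild ends, and that clearing is bookkeeping, so the value of $F'(v)$ at the end of the rebuild is well-defined. Thus $|F'(v)| = f'_v$ at the moment $S$ is defined.

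Combining the two facts gives $S = \{v : f_v + f'_v > (1+\epsilon)\alpha\}$. Note that $S$ is formally computed only over vertices with $F'(v) \neq \emptyset$. A vertex with $F'(v) = \emptyset$ has $f_v \le (1+\epsilon)\alpha$ by \cref{inv:forgotten-bound}, so it could not belong to $S$ anyway, and nothing is lost. The main point needing care is this last remark together with the check that no intermediate step alters $F$. I expect both to be routine.
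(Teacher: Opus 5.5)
Your proposal is correct and takes the same approach as the paper. The paper states this observation without proof, as an immediate consequence of the definition of $S$ in \cref{step:sparsification} together with $f_v=|F^{\old}(v)|$ and $f'_v=|F'(v)|$; your checks are exactly the details it leaves implicit: that no step of $\reset(k)$ before \cref{step:sparsification} writes to any $F(v)$, that $F'(v)$ is frozen once $S$ is computed, and that computing $S$ only over vertices with $F'(v)\neq\emptyset$ loses nothing by \cref{inv:forgotten-bound}.
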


\begin{lemma}\label{lm:forget-increase}
  The increase in the forget potential $\Phi^{\forget}$ is at most
  \[
    \alpha |T^{\old}_{\le k}| + \frac{1}{2}\sum_{v \in S} f_v
    - \frac{1}{2}\sum_{v \notin S \cup T'} f'_v.
  \]
\end{lemma}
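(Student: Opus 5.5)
The plan is a per-vertex computation, using \cref{obs:F} to split the vertices into three classes. Since $\Phi^{\forget}(v) = -\frac{1}{2}|\Forg{v}|$, the change for a single vertex $v$ is $\frac{1}{2}(f_v - |\Forg{v}|)$, where $\Forg{v}$ denotes the set at the end of $\reset(k)$. I will also use that $\Forg{v}$ is only modified in \cref{step:sparsification} and is otherwise untouched by the rebuild. Deletions do not occur inside the rebuild, so this covers all changes.

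\emph{Vertices outside $S \cup T'$.} By \cref{obs:F}, $\Forg{v} = F^{\old}(v) \sqcup F'(v)$, so $|\Forg{v}| = f_v + f'_v$. The change is therefore exactly $-\frac{1}{2} f'_v$. Vertices with $F'(v) = \emptyset$ are unchanged, which is consistent with this formula. Summing over these vertices gives the term $-\frac{1}{2}\sum_{v \notin S \cup T'} f'_v$.

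\emph{Vertices in $S \cup T'$.} By \cref{obs:F}, $\Forg{v} = \emptyset$, so the change is $+\frac{1}{2} f_v$. For $v \in S$ this contributes $\frac{1}{2}\sum_{v \in S} f_v$. For $v \in T' \setminus S$, recall $T' = T^{\old}_{\le k}$. Here I would bound $f_v \le (1+\epsilon)\alpha$ by \cref{inv:forgotten-bound}, which gives $\frac{1}{2} f_v \le \frac{1+\epsilon}{2}\alpha \le \alpha$ since $\epsilon < 1$. In fact \cref{inv:forgotten-bound} is even stronger: a tight vertex with $\dom(v) > \bs(v)$ has $F(v) = \emptyset$. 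The crude bound suffices, however. Summing over $T' \setminus S \subseteq T^{\old}_{\le k}$ gives at most $\alpha |T^{\old}_{\le k}|$. Vertices lying in both $S$ and $T'$ are counted only in the $S$-sum, so nothing is double counted.

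Adding the three contributions yields the claimed bound. The only point requiring care is confirming that no later step modifies $F$. \Cref{step:handle-dirty} and \cref{step:water} only change levels and viewed levels, and those only for $u \in N[v] \setminus \Ign{v}$. So $\Forg{v}$ is final after \cref{step:sparsification}, and I expect no real obstacle.
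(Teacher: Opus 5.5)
Your proposal is correct and is essentially the paper's proof. Both split the vertices via \cref{obs:F} into those outside $S \cup T'$, where the change is exactly $-\frac{1}{2}f'_v$, and those in $S \cup T'$, where the change is $+\frac{1}{2}f_v$, and both bound $f_v \le (1+\epsilon)\alpha \le 2\alpha$ on $T' \setminus S$ using \cref{inv:forgotten-bound}; your check that $F$ changes only in \cref{step:sparsification} is a harmless detail the paper leaves implicit.
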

\begin{proof}
  By \cref{obs:F}, for every vertex $v \notin S \cup T'$, the forget potential $\Phi^{\forget}(v)$ decreases by exactly $\frac{1}{2}f'_v$. On the other hand, for every vertex $v \in S \cup T'$, the forget potential $\Phi^{\forget}(v)$ increases by $\frac{1}{2}f_v$. By \cref{inv:forgotten-bound}, we have $f_v \le (1+\epsilon)\alpha \le 2\alpha$. Hence,
  \[
    \sum_{v \in S \cup T'} f_v \le \sum_{v \in S} f_v + 2\alpha |T'|.
  \]
  To complete the proof, recall that $T' = T^{\old}_{\le k}$.
\end{proof}

\begin{lemma}\label{lm:total-decrease}
  The decrease in $\Phi^{\delta} + \Phi^{\phi} + \Phi^{\passive} + \Phi^{in}+ \Phi^{\forget}$ is
  \[\Omega\Big(\alpha |T^{\old}_{\le k}| + \frac{\alpha}{\epsilon}|V^{\old}_{\le k}| + \frac{\alpha}{\epsilon}|Q^{\old}_{\le k}| + \frac{\alpha}{\epsilon}|P^{\old}_{\le k}| + \sum_{v \in S} (f_v + f'_v) + \sum_{v \notin S \cup T'} f'_v\Big). \]
\end{lemma}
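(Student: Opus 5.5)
The plan is to sum the separate bounds already established and check that the positive contributions (the increases in $\Phi^{\forget}$ and $\Phi^{in}$) are absorbed by the guaranteed decreases in $\Phi^{\delta}+\Phi^{\phi}$ and $\Phi^{\passive}$, leaving a constant fraction of each term in the target expression.

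\textbf{Step 1: the $\delta$-, $\phi$- and passive potentials.} By \cref{prop:cleanup} and \cref{obs:no-outdated-below}, at the end of $\reset(k)$ we have $\Phi^{\delta}_{\le k}+\Phi^{\phi}_{\le k}=0$. Nothing is added above level $k$. Vertices of $T'$ that move to dominating level $k+1$ have had $\phi$ zeroed, and slack vertices carry no dead weight. So the decrease in these two potentials is at least the lower bound of \cref{lm:down}:
\[
18(\alpha+1)|T_{\le k}|+\tfrac{18(\alpha+1)}{\epsilon}|V_{\le k}|+\tfrac{18(\alpha+1)}{\epsilon^2}|Q_{\le k}|.
\]
\Cref{lm:passive-decrease} adds a further decrease of $\tfrac{18\alpha}{\epsilon}|P_{\le k}|$.

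\textbf{Step 2: the forget potential.} \Cref{lm:forget-increase} bounds its increase by $\alpha|T'|+\tfrac12\sum_{v\in S}f_v-\tfrac12\sum_{v\notin S\cup T'}f'_v$. The last term already gives the desired decrease in $\sum_{v\notin S\cup T'}f'_v$. The term $\alpha|T'|$ is absorbed by $18(\alpha+1)|T_{\le k}|$, leaving $\Omega(\alpha|T_{\le k}|)$.

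\textbf{Step 3: the incoming potential, the main obstacle.} We must both pay for $\tfrac12\sum_{v\in S}f_v$ and extract $\Omega(\sum_{v\in S}(f_v+f'_v))$. By \cref{obs:F}:
\begin{itemize}
\item For $v\in S\cup T'$, $F(v)$ is emptied, so $\Phi^{in}(v)$ drops by $\tfrac{3}{\epsilon}|F^{\old}(v)\cap\overline{N}^{in}[v]|$.
\item For $v\notin S\cup T'$, $\Phi^{in}(v)$ rises by $\tfrac{3}{\epsilon}|F'(v)\cap\overline{N}^{in}[v]|$.
\end{itemize}
So the net change in $\Phi^{in}$ is at most
\[
\tfrac{3}{\epsilon}\sum_{v\in V}|F'(v)\cap\overline{N}^{in}[v]|-\tfrac{3}{\epsilon}\sum_{v\in S}\Bigl(|F^{\old}(v)\cap\overline{N}^{in}[v]|+|F'(v)\cap\overline{N}^{in}[v]|\Bigr).
\]
Here I deliberately add and subtract the $F'$ terms for $v\in S$. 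By \cref{obs:S}, every $v\in S$ satisfies the hypothesis of \cref{lm:large-bound}, so the subtracted sum is at least $\sum_{v\in S}(f_v+f'_v)$. The added sum is at most $\tfrac{9\alpha}{\epsilon}|Q_{\le k}\cup P_{\le k}\cup V_{\le k}|$ by \cref{lm:F-bound}. This requires checking that every element placed in some $F'(u)$ lies in $Q_{\le k}\cup P_{\le k}\cup V_{\le k}$. From \cref{step:handle-outdated,step:handle-clean,step:filter-dirty}, it is either a vertex with an outdated copy at level $\le k$, a clean vertex, or a dirty vertex, and clean and dirty vertices lie in $V_{\le k}\cup P_{\le k}$.

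\textbf{Step 4: combining.} The decrease of $\Phi^{in}$ is at least $\sum_{v\in S}(f_v+f'_v)-\tfrac{9\alpha}{\epsilon}(|Q_{\le k}|+|P_{\le k}|+|V_{\le k}|)$.
\begin{itemize}
\item Half of $\sum_{v\in S}(f_v+f'_v)$ pays for the $\tfrac12\sum_{v\in S}f_v$ forget increase, and the other half survives.
\item The $\tfrac{9\alpha}{\epsilon}$ terms are dominated with room to spare: $\tfrac{18(\alpha+1)}{\epsilon}|V_{\le k}|$ covers the $V$ term, $\tfrac{18(\alpha+1)}{\epsilon^2}|Q_{\le k}|$ covers the $Q$ term (since $\epsilon<1$), and $\tfrac{18\alpha}{\epsilon}|P_{\le k}|$ covers the $P$ term.
\end{itemize}
Each coefficient is at least twice the $\tfrac{9\alpha}{\epsilon}$ it must cover. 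Taking half of each, the total decrease is $\Omega$ of every term in the statement, using $(\alpha+1)/\epsilon^2\ge\alpha/\epsilon$ for $Q$. The delicate point is the bookkeeping in Step 3, in particular that $|F(v)|$ for $v\notin S\cup T'$ grows by exactly $f'_v$ (disjoint union) and that no element of any $F'$ falls outside $Q_{\le k}\cup P_{\le k}\cup V_{\le k}$.
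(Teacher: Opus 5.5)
Your proposal is correct and follows essentially the same argument as the paper: half of the decrease in $\Phi^{\delta}+\Phi^{\phi}+\Phi^{\passive}$ pays, via \cref{lm:F-bound}, for the increase in $\Phi^{in}$ from vertices outside $S\cup T'$, and \cref{lm:large-bound} turns the drop of $\Phi^{in}$ on $S$ into $\sum_{v\in S}(f_v+f'_v)$, which also covers the $\frac12\sum_{v\in S}f_v$ increase in $\Phi^{\forget}$, while \cref{lm:down,lm:passive-decrease,lm:forget-increase} supply the remaining terms. The only difference is cosmetic: you discard the $T'\setminus S$ terms of $\Delta\Phi^{in}$ before invoking \cref{lm:F-bound}, whereas the paper first obtains a bound summed over $S\cup T'$ and only then restricts it to $S$.
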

\begin{proof}
  By \Cref{obs:F},
  \[
    \Delta\Phi^{in}(v)=
    \begin{cases}
      -\dfrac{3}{\epsilon}\big|F^{\old}(v)\cap\overline{N}^{in}[v]\big| & v\in S \cup T',\\[6pt]
      \dfrac{3}{\epsilon}\big|F'(v)\cap\overline{N}^{in}[v]\big| & v\notin S \cup T',
    \end{cases}
  \]

  hence
  \begin{equation}\label{eq:3}
    \Delta\Phi^{in}
    = \sum_{v\notin S \cup T'}\frac{3}{\epsilon}\big|F'(v)\cap\overline{N}^{in}[v]\big|
    -\sum_{v\in S \cup T'}\frac{3}{\epsilon}\big|F^{\old}(v)\cap\overline{N}^{in}[v]\big|.
  \end{equation}
  From \Cref{lm:down,lm:passive-decrease},
  \[
    \Delta\Phi^{\phi}+\Delta\Phi^{\delta}\le -\frac{18(\alpha+1)}{\epsilon}|V_{\le k}| - \frac{18(\alpha+1)}{\epsilon^2}|Q_{\le k}|,
    \qquad
    \Delta\Phi^{\passive}\le -\frac{18\alpha}{\epsilon}|P_{\le k}|,
  \]
  so
  \[
    \frac{1}{2}\Delta\Phi^{\phi} + \frac{1}{2}\Delta\Phi^{\delta} + \frac{1}{2}\Delta\Phi^{\passive}
    \le -\frac{9\alpha}{\epsilon}|V_{\le k}\cup P_{\le k}\cup Q_{\le k}|.
  \]
  Applying \Cref{lm:F-bound} yields
  \begin{equation}\label{eq:4}
    \frac{1}{2}\Delta\Phi^{\phi} + \frac{1}{2}\Delta\Phi^{\delta} + \frac{1}{2}\Delta\Phi^{\passive}
    \le -\sum_{v\in V}\frac{3}{\epsilon}\big|F'(v)\cap\overline{N}^{in}[v]\big|.
  \end{equation}
  Summing \cref{eq:3,eq:4} we obtain
  \[
    \frac{1}{2}\Delta\Phi^{\phi} + \frac{1}{2}\Delta\Phi^{\delta} + \frac{1}{2}\Delta\Phi^{\passive} + \Delta\Phi^{in}
    \le -\sum_{v\in S \cup T'}\frac{3}{\epsilon}\Big(\big|F^{\old}(v)\cap\overline{N}^{in}[v]\big|+\big|F'(v)\cap\overline{N}^{in}[v]\big|\Big).
  \]
  By \cref{obs:S}, for every $v \in S$ we have $f_v + f'_v > (1+\epsilon)\alpha$. Thus, applying \cref{lm:large-bound} yields
  \[
    \frac{1}{2}\Delta\Phi^{\phi} + \frac{1}{2}\Delta\Phi^{\delta} + \frac{1}{2}\Delta\Phi^{\passive} + \Delta\Phi^{in}
    \le -\sum_{v\in S}(f_v + f'_v).
  \]
  Combining this bound with \cref{lm:down,lm:passive-decrease,lm:forget-increase}, we obtain the claimed bound.
\end{proof}

Next, we analyze the total runtime of the $\reset(k)$ subroutine, excluding the $O(\lev^{\old}(v)-\lev(v)+1)$ cost incurred when processing vertices $v \in \mathcal{D}'$ during \cref{step:handle-dirty}. By \cref{lm:passive-decrease}, this part of the runtime can be charged to the decrease in the passive potential.

\begin{lemma}\label{lm:rebuild-runtime-cost}
  The runtime of $\reset(k)$, excluding the $O(\lev^{\old}(v)-\lev(v)+1)$ cost of searching $h$ during \cref{step:handle-dirty}, is
  \[
    O\!\left(
      k + \alpha |T^{\old}_{\le k}| + \frac{\alpha}{\epsilon}|V^{\old}_{\le k}| + \frac{\alpha}{\epsilon}|Q^{\old}_{\le k}| + \frac{\alpha}{\epsilon}|P^{\old}_{\le k}| + \sum_{v \in S} (f_v+f'_v) + \sum_{v \notin S \cup T'} f'_v
    \right).
  \]
\end{lemma}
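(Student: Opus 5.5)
The plan is to go through \cref{alg:rebuild} step by step, bounding each step's cost by one of the terms in the statement. The only outside facts needed are the bound $|F^{\old}(v)|\le(1+\epsilon)\alpha$ from \cref{inv:forgotten-bound}, the edge-count bound from \cref{cl:relevant-vertices}, and the runtime of $\water$ from \cref{lm:water-filling-full}.

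\Cref{step:init} builds $\mathcal{C}$ and $\mathcal{D}$ by scanning the lists $A_i,P_i$ for $i\le k$, with a constant-time check of $\lev(v)$ for each passive vertex. Together with scanning the arrays, this costs $O(k+|V_{\le k}|+|P_{\le k}|)$, since clean active vertices lie in $V_{\le k}$. \Cref{step:remove-dead-weight} costs $O(|T^{\old}_{\le k}|)$.

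\Cref{step:handle-outdated,step:handle-clean,step:filter-dirty} are all bounded by the total number of insertions into the lists $F'$, plus $O(1)$ per vertex in $Q_{\le k}$, $\mathcal{C}$, $\mathcal{D}$, plus $O(k)$ for scanning levels. In \cref{step:filter-dirty}, a dirty vertex may stop at a successful neighbor without inserting into any $F'$, but that adds only $O(1)$ per dirty vertex. Hence these steps cost $O(k+|Q_{\le k}|+|V_{\le k}|+|P_{\le k}|+\sum_v f'_v)$. I split $\sum_v f'_v$ as
\[
\sum_{v\in S}f'_v+\sum_{v\notin S\cup T'}f'_v+\sum_{v\in T'\setminus S}f'_v .
\]
The last sum is at most $(1+\epsilon)\alpha|T'|\le 2\alpha|T^{\old}_{\le k}|$, because $v\notin S$ means $f_v+f'_v\le(1+\epsilon)\alpha$.

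\Cref{step:sparsification} computes $S$ by scanning only vertices with nonempty $F'(v)$, which again costs $O(\sum_v f'_v)$. It then spends $O(1+f_v+f'_v)$ on each $v\in S\cup T'$ and $O(f'_v)$ on each other vertex with $F'(v)\neq\emptyset$. Each $v\in S$ has $f_v+f'_v>(1+\epsilon)\alpha\ge 1$, so the additive $1$ is absorbed into $\sum_{v\in S}(f_v+f'_v)$. For $v\in T'\setminus S$, we get $f_v+f'_v\le 2\alpha$, which is absorbed into $\alpha|T^{\old}_{\le k}|$. Cleaning up the lists $F'$ at the end costs the same amount.

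\Cref{step:handle-dirty} costs $O(|N[v]\setminus I(v)|)$ per $v\in\mathcal{D}'$, plus the excluded search for $h$. \Cref{step:pre-water} costs $O(|N[v]\setminus I(v)|)$ per vertex of $\mathcal{C}\cup\mathcal{D}'$ and $O(|D'|)=O(|S\cup T'|)$. By \cref{lm:water-filling-full}, run on neighborhoods $N[v]\setminus I(v)$, \cref{step:water} costs
\[
O\Bigl(k+|D'|+|V'|+\sum_{v\in V'}|N[v]\setminus I(v)|\Bigr).
\]
Every neighborhood sum here is over $v\in\mathcal{C}\cup\mathcal{D}'$, so by \cref{cl:relevant-vertices} it is at most $\sum_{u\in S\cup T'}(f_u+f'_u)$, which is bounded as in the previous paragraph. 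Finally, $|\mathcal{C}\cup\mathcal{D}'|\le|V_{\le k}|+|P_{\le k}|$.

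The main obstacle is \cref{step:handle-dirty,step:pre-water,step:water}. There the algorithm traverses neighborhoods of dirty vertices that may have high degree, so it is essential to route every such traversal through \cref{cl:relevant-vertices}. I must also check that every vertex of $\mathcal{D}'$ inserted itself into $F'(u)$ for every $u\in N[v]\setminus I(v)$ during \cref{step:filter-dirty}, which holds because $v$ reached $\mathcal{D}'$ only after a complete scan. Summing all the contributions gives the claimed bound, where the $\frac{1}{\epsilon}$ factors only weaken the bound.
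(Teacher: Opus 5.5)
Your proposal is correct and follows essentially the same route as the paper: a step-by-step cost accounting in which the neighborhood traversals of \cref{step:handle-dirty,step:pre-water,step:water} are charged via \cref{cl:relevant-vertices} to $\sum_{v\in S\cup T'}(f_v+f'_v)$, and the contribution of $T'\setminus S$ is absorbed into $O(\alpha|T^{\old}_{\le k}|)$ because $f_v+f'_v\le(1+\epsilon)\alpha$ outside $S$. You are slightly more explicit than the paper, since you account for the $|Q_{\le k}|$ term, split the $\sum_v f'_v$ from the earlier steps into its three parts, and absorb the per-vertex additive constants for $S$, but the argument is the same.
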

\begin{proof}
  We bound the runtime of $\reset(k)$ step by step.

  \Cref{step:init} takes time $O(|V^{\old}_{\le k}| + |P^{\old}_{\le k}| + |T^{\old}_{\le k}| + k)$, and \cref{step:remove-dead-weight} takes time $O(k + |T^{\old}_{\le k}|)$. \Cref{step:handle-outdated,step:handle-clean,step:filter-dirty} together take time $O(k + |V^{\old}_{\le k}| + |P^{\old}_{\le k}| + \sum_{v \in V} f'_v)$.

  We next consider \cref{step:sparsification}. For each vertex $v \in S \cup T'$, the runtime is $O(f_v + f'_v)$, whereas for each vertex $v \notin S \cup T_{\le k}$, it is $O(f'_v)$.

  For \cref{step:handle-dirty}, let $v \in \mathcal{D}'$. Ignoring the cost of finding $h$, processing $v$ takes time $O\left(|N[v] \setminus \Ign{v}|\right)$. Hence, by \cref{cl:relevant-vertices},
  \[
    \sum_{v \in \mathcal{D}'} O(|N[v] \setminus \Ign{v}|)
    = O\!\left(\sum_{v \in \mathcal{D}'} |N[v] \setminus \Ign{v}|\right)
    = O\!\left(\sum_{v \in S \cup T'} (f_v + f'_v)\right).
  \]

  By \cref{lm:water-filling-full}, \cref{step:pre-water,step:water} run in time
  $O(k + |V^{\old}_{\le k}| + |P^{\old}_{\le k}| + |T^{\old}_{\le k}| + \sum_{v \in S \cup T'} (f_v + f'_v))$.

  Finally, by \cref{obs:S} we have $f_v + f'_v \le (1+\epsilon)\alpha$ for every $v \in T' \setminus S$. Therefore,
  \[
    \sum_{v \in S \cup T'} (f_v + f'_v) = \sum_{v \in S} (f_v+f'_v) + O(\alpha |T^{\old}_{\le k}|).
  \]

  Combining all of the above bounds yields the claimed running time.
\end{proof}

Therefore, according to \cref{lm:total-decrease,lm:rebuild-runtime-cost}, the runtime can be charged to the decrease in potential, except the $O(k)$ part. The $O(k)$ part can be charged to the decrease in the level potential, since for the $\reset(k)$ to be triggered, there must be at least one level $i \le k$ such that $\delta_i + \phi_i \ne 0$. We conclude our analysis with the following claim.
\begin{claim}\label{thm:rebuild-runtime}
  The runtime cost of $\reset(k)$ can be charged to the decrease in potential.
\end{claim}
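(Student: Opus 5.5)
The plan is to combine \cref{lm:total-decrease}, \cref{lm:rebuild-runtime-cost} and \cref{lm:passive-decrease}, and then handle the leftover $O(k)$ term with the level potential. We split the runtime of $\reset(k)$ into three parts: (i) the $O(\lev^{\old}(v)-\lev(v)+1)$ search cost for each $v\in\mathcal{D}'$ in \cref{step:handle-dirty}; (ii) the additive $O(k)$ term; (iii) everything else, as bounded in \cref{lm:rebuild-runtime-cost}.

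For (iii), \cref{lm:rebuild-runtime-cost} gives exactly the same list of terms ($\alpha|T^{\old}_{\le k}|$, $\frac{\alpha}{\epsilon}|V^{\old}_{\le k}|$, $\frac{\alpha}{\epsilon}|Q^{\old}_{\le k}|$, $\frac{\alpha}{\epsilon}|P^{\old}_{\le k}|$, $\sum_{v\in S}(f_v+f'_v)$, $\sum_{v\notin S\cup T'}f'_v$) that \cref{lm:total-decrease} guarantees as an $\Omega(\cdot)$ decrease of $\Phi^{\delta}+\Phi^{\phi}+\Phi^{\passive}+\Phi^{in}+\Phi^{\forget}$. After scaling the potential by a suitable constant, or equivalently fixing the unit of runtime paid per unit of potential, this part is paid for.

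For (i), I would first check that the decrease of $\Phi^{\passive}$ is not already fully used in \cref{lm:total-decrease}. That lemma uses only $\frac{18\alpha}{\epsilon}$ per passive vertex, through halving. The extra decrease $\frac{18\alpha}{\epsilon}(\lev^{\old}(v)-\lev(v))$ for $v\in\mathcal{D}'$ provided by \cref{lm:passive-decrease} then covers the search. Splitting $\Delta\Phi^{\passive}$ into two halves makes this bookkeeping clean.

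For (ii), I would argue that $\Phi^{\level}$ drops by at least $L\ge k$. Since $\reset(k)$ is triggered, the violated inequality at level $k$ forces $(\alpha+1)\delta_{\le k}+\phi_{\le k}>0$, so some $i\le k$ has $\delta_i\neq0$ or $\phi_i\neq0$, and hence $\Phi^{\level}_i=L$ beforehand. By \cref{prop:cleanup}, afterwards $\delta_{\le k}=\phi_{\le k}=0$, so $\Phi^{\level}_i$ becomes $0$. The same property says no new dead weight or outdated copies are created, so no $\Phi^{\level}_j$ with $j\ge k+1$ increases.

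The main obstacle is verifying that no potential term increases elsewhere during $\reset(k)$ in a way that the lemmas ignore. Three sources need checking. First, dead weight and outdated copies above level $k$: \cref{prop:dom-above} and \cref{prop:cleanup} give that $\Phi^{\phi}_{\ge k+1}$ and $\Phi^{\delta}_{\ge k+1}$ do not grow, including when vertices move to $\dom(v)=k+1$ with $\phi(v)=0$ after \cref{step:remove-dead-weight}. Second, $\Phi^{in}$ and $\Phi^{\forget}$ changes are exactly those accounted for via \cref{obs:F}. Third, \cref{step:pre-water,step:water} change only levels of active vertices and $\dom$ values at or below $k$, with $\phi=0$ and no outdated copies, because $\water$ keeps $\view{u}{v}=\lev(v)$. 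With these checks in place, the sum of (i)–(iii) is covered by the total potential decrease, which proves the claim.
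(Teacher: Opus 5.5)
Your proposal is correct and follows the paper's own argument. The bulk of the runtime bounded in \cref{lm:rebuild-runtime-cost} is matched term by term against the decrease in \cref{lm:total-decrease}. The search for $h$ in \cref{step:handle-dirty} is paid by the passive potential via \cref{lm:passive-decrease}. The $O(k)$ term is paid by the level potential, since some level $i\le k$ has $\delta_i+\phi_i\neq 0$ before the call and none does afterwards.

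Your extra bookkeeping only makes explicit what the paper leaves implicit. This covers splitting the passive decrease so it is not counted twice, and checking that no $\Phi^{\level}_j$, $\Phi^{\phi}_j$ or $\Phi^{\delta}_j$ with $j\ge k+1$ increases.
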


\subsection{Total runtime}\label{sec:total-runtime}
During the preprocessing we initialize all the data structures, which takes time $O(nL) = O(\frac{n \log (Cn)}{\epsilon})$. To bound the amortized update time, consider any update sequence of length $\Gamma$. Let $\Phi^{\text{init}}$ be the value of the total potential $\Phi$ at the beginning of the sequence, and $\Phi^{\text{end}}$ be the value at the end of this sequence. Observe that $\Phi^{\text{init}} = 0$. We claim that the value of $\Phi^{\text{end}}$ is at least $-(\Gamma+n)$. Indeed, observe that all potential functions are non-negative, except the forget potential. We have
\[
  \Phi^{\forget} = -\sum_{v \in V} \frac{1}{2}|F(v)| \ge - \sum_{v \in V}\frac{1}{2}(\deg(v)+1) \ge -(m+n) \ge -(\Gamma+n),
\]
where the last inequality is since the initial graph has no edges, and so $m \le \Gamma$. According to \cref{thm:insert-runtime,thm:delete-runtime,thm:rebuild-runtime}, the total update time is bounded asymptotically by
\[
  \Gamma \cdot \frac{\alpha \log (Cn)}{\epsilon^2} + \Phi^{\text{init}} - \Phi^{\text{end}} \le \Gamma \cdot \left(\frac{\alpha \log (Cn)}{\epsilon^2}+1\right) + n.
\]
Observe that the $n$ term is subsumed by the runtime of the preprocessing. Thus we conclude the analysis with the following theorem.

\begin{theorem}\label{thm:update-time}
  There exists a deterministic algorithm for the minimum dominating set problem with approximation ratio $(1+5\epsilon) \cdot 2(3\alpha+1)$ and amortized update time $O\!\left(\frac{\alpha \log (Cn)}{\epsilon^2}\right)$ and preprocessing time $O\! \left( \frac{n \log (Cn)}{\epsilon} \right)$.
\end{theorem}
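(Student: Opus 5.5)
The plan is to assemble \cref{thm:update-time} from pieces already established: correctness and approximation come from the invariants, and update time comes from the potential argument.

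\textbf{Approximation.} First I would argue by induction over the update sequence that after each update, including the while loop of \cref{alg:main}, all invariants of \cref{sec:invariants} hold.
\begin{itemize}
\item Initially the graph is empty and every vertex has $\lev(v)=\dom(v)=\bs(v)$ and $F(v)=\emptyset$, so the invariants hold trivially.
\item The preservation arguments for $\ins$ and $\del$ are those given in \cref{sec:insert,sec:delete}.
\item For $\reset(k)$, the remarks after each step, together with \cref{lm:water-filling-full} and the properties listed at the start of \cref{sec:rebuild-description}, restore \cref{inv:tight,inv:zlev} and the remaining invariants.
\item By \cref{prop:cleanup}, $\reset(k)$ zeroes $\phi_{\le k}$ and $\delta_{\le k}$, so each iteration of the while loop strictly reduces the violation at the minimal violating prefix. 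Since $\Phi^{\level}$ drops each time, the loop terminates with \cref{inv:error-bound} restored.
\end{itemize}
With the invariants in hand, \cref{lm:approximation-guarantee} gives $c(T)+c(H')\le(1+5\epsilon)\cdot2(3\alpha+1)\OPT$. The lemma preceding \cref{sec:approximation} shows that $T\cup H'$ is a dominating set.

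\textbf{Update time.} I would use a standard amortization with potential $\Phi$, letting one unit of potential pay for $O(1)$ work.
\begin{itemize}
\item By \cref{thm:insert-runtime,thm:delete-runtime}, each insertion and deletion has amortized cost $O(\alpha\log(Cn)/\epsilon^2)$.
\item By \cref{thm:rebuild-runtime}, every $\reset(k)$, including the $O(k)$ search for $k$, is paid for by the potential decrease. This combines \cref{lm:total-decrease,lm:rebuild-runtime-cost} with \cref{lm:passive-decrease} for the $h$-search, and uses $\Phi^{\level}$ for the $O(k)$ term.
\end{itemize}
Summing over a sequence of $\Gamma$ updates, the total time is at most $\Gamma\cdot O(\alpha\log(Cn)/\epsilon^2)+\Phi^{\rm init}-\Phi^{\rm end}$. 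Here $\Phi^{\rm init}=0$. Every potential term is nonnegative except $\Phi^{\forget}$, which satisfies $\Phi^{\forget}\ge-\frac12\sum_v(\deg v+1)\ge-(\Gamma+n)$ because the graph starts empty. The additive $n$ is absorbed into the preprocessing cost $O(nL)=O(n\log(Cn)/\epsilon)$ of initializing the level-indexed arrays and lists. Determinism is immediate, since no step uses randomness; the orientation of \cref{thm:reorientation} appears only in the analysis.

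\textbf{Main obstacle.} The delicate step is the inductive claim that $\reset(k)$ restores every invariant simultaneously. Two points need care:
\begin{itemize}
\item Vertices in $\mathcal{C}\cup\mathcal{D}'$ with $N[v]\setminus I(v)=\emptyset$ must end at level $\bt(v)$ via the fake-neighbor device in $\water$.
\item Tight vertices found in \cref{step:handle-dirty} must really lie at $\dom\ge k+1$ and remain tight. This follows from \cref{obs:neighbor-levels} and \cref{prop:dom-above}.
\end{itemize}
I would verify these carefully. Everything else is a direct citation of the preceding lemmas and claims.
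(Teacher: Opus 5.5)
Your proposal is correct and follows the paper's own argument: the approximation ratio comes from \cref{lm:approximation-guarantee} once the invariants hold, and the update time comes from the potential accounting with $\Phi^{\text{init}}=0$, the lower bound $\Phi^{\forget}\ge-(\Gamma+n)$, and the additive $n$ absorbed into the $O(nL)$ preprocessing. Your explicit induction over updates and your $\Phi^{\level}$-based termination argument for the while loop are left implicit in the paper, but they are consistent with it.
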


\section{Conclusions and Future Work}\label{sec:conclusion}
In this work, we presented a dynamic algorithm for the minimum dominating set problem in bounded-arboricity graphs, with amortized update time $O(\alpha \log (Cn))$ and approximation ratio $O(\alpha)$. Our work leaves several natural open questions.

Can the approximation factor be improved from $O(\alpha)$ to the essentially optimal bound of $\alpha+1$ (or even $\alpha$)? This was achieved in the static setting \cite{sun2021}, and it would be interesting to obtain the same guarantee dynamically. Such a result would be essentially optimal, since achieving an $(\alpha-\epsilon)$-approximation in polynomial time is impossible under the Unique Games Conjecture \cite{BU17}.

Our update time guarantee is amortized.
The work of Solomon et al.\ \cite{solomon2024lossless} (improving over \cite{bhattacharya2021dynamic})
gives an $(1+\eps)\ln \Delta$-approximation algorithm
with a worst-case update time of $O\left(\frac{\Delta \log n}{\epsilon^2}\right)$; note that there is no dependency on $\log C$ in their time bound, as discussed below.
The following question for bounded arboricity graphs naturally arises: is it possible to obtain an approximation ratio of $O(\alpha)$ with a \emph{worst-case} update time of $O\!\left(\alpha \log(Cn)\right)$?

Finally, our approach incurs a factor of $\log(Cn)$ in the update time. This raises the following two questions. First, is it possible to eliminate the dependency on the aspect ratio $C$ in our update-time bound? As was shown in \cite{solomon2024lossless}, one can obtain update time $O\left(\frac{\Delta \log n}{\epsilon^2}\right)$ with approximation ratio $(1+\eps)\ln \Delta$. Second, is it possible to remove or reduce the $\log n$ factor from the update time?

Combining all these questions together, is it possible to obtain a dynamic algorithm with a worst-case update time of $O(\alpha)$ for approximation ratio that approaches $\alpha$?

\section*{Acknowledgements}
Shay Solomon is funded by the European Union (ERC, DynOpt, 101043159). Views and opinions expressed are however those of the author(s) only and do not necessarily reflect those of the European Union or the European Research Council. Neither the European Union nor the granting authority can be held responsible for them. Shay Solomon is also funded by a grant from the United States-Israel Binational Science Foundation (BSF), Jerusalem, Israel, and the United States National Science Foundation (NSF).
Anton Bukov is funded by the European Union (ERC, DynOpt, 101043159).

\appendix
\section{Glossary}
\label{sec:glossary}

\setlength{\tabcolsep}{5pt}
\begin{longtable}{| l | p{13.5cm} |}
  \caption{Glossary of the main notation used in this paper.}\label{table:glossary}\\
  \hline
  \textbf{Notation} & \textbf{Definition} \\ \hline
  \endfirsthead

  \hline
  \textbf{Notation} & \textbf{Definition} \\ \hline
  \endhead

  \hline
  \endfoot

  $n$ & Number of vertices, $n = |V|$. \\ \hline
  $c_v$ & Cost of vertex $v$, with $c_v \in [\frac{1}{C}, 1]$. \\ \hline
  $C$ & Cost aspect-ratio bound. \\ \hline
  $\epsilon$ & Constant in $(0, 1/3)$. \\ \hline
  $\alpha$ & The bound on the arboricity of the dynamic graph $G$, i.e.\ at any moment the arboricity of $G$ is at most $\alpha$. \\ \hline
  $\Delta$ & The bound on the maximum degree of the dynamic graph $G$, i.e.\ at any moment the maximum degree of $G$ is at most $\Delta$. \\ \hline
  $N[v]$ & The closed neighborhood of $v$, i.e. $N[v] = \{v\} \cup N(v)$. \\ \hline
  $N[S]$ & The set of vertices dominated by $S$, i.e. $N[S] = \bigcup_{v \in S} N[v]$. \\ \hline
  $c(S)$ & The total cost of $S$, i.e. $c(S) = \sum_{v \in S} c_v$. \\ \hline
  $\OPT$ & The cost of an optimal dominating set. \\ \hline
  $L$ & The maximum level, $L = \lceil\log_{1+\epsilon}(Cn)\rceil + 1$. \\ \hline
  $\dom(v)$ & The dominating level of $v$. \\ \hline
  $\Dom(v)$ & The dominated level of $v$, $\Dom(v) = \max_{u \in N[v]} \dom(u)$. \\ \hline
  $\lev(v)$ & The level of $v$, with $\lev(v) \in [L]$. \\ \hline
  $\bs(v)$ & The base level of $v$, $\bs(v) = \lfloor\log_{1+\epsilon}(1/c_v)\rfloor$. \\ \hline
  $\bt(v)$ & The bottom level of $v$, $\bt(v) = \max_{u \in N[v]} \bs(u)$. \\ \hline
  $\tau_v$ & The minimum cost in $N[v]$, $\tau_v = \min_{u \in N[v]} c_u$. \\ \hline
  $x_v$ & The packing value of $v$, $x_v = \lambda(1+\epsilon)^{-\lev(v)}$. \\ \hline
  $\lambda$, $\xi$, $\gamma$ & Constants $\lambda = \frac{1}{(1+\epsilon)^2(3\alpha+1)}$, $\xi = \frac{\alpha+1}{(1+\epsilon)(3\alpha+1)}$, $\gamma = \frac{2\alpha+1}{3\alpha+1}$. \\ \hline
  $\wts(v)$ & The true weight of $v$, $\wts(v) = \sum_{u \in N[v]} x_u$. \\ \hline
  $\hat{\wts}(v)$ & The viewed weight of $v$, $\hat{\wts}(v) = \sum_{u \in N[v] \setminus F(v)} \viewx{v}{u}$. \\ \hline
  $\phi(v)$ & The dead weight of $v$. \\ \hline
  $\phi$ & The total dead weight, $\phi = \sum_{v \in V} \phi(v)$. \\ \hline
  $\delta_i$ & The deviation at level $i$, $\delta_i = \wat{i} \cdot |Q_i|$. \\ \hline
  $\delta$ & The total deviation, $\delta = \sum_{i=0}^{L} \delta_i$. \\ \hline
  $T$ & The set of tight vertices. \\ \hline
  $H$ & The set of heavy vertices, i.e. vertices with $\lev(v) = \bt(v)$. \\ \hline
  $H'$ & The set of cheapest neighbors chosen for the vertices in $H$. \\ \hline
  $A_i$ & The set of active vertices at level $i$. \\ \hline
  $P_i$ & The set of passive vertices with lazy level $i$. \\ \hline
  $V_i$ & The set of vertices with level $i$. \\ \hline
  $D_i$ & The set of vertices with dominating level $i$. \\ \hline
  $Q_i$ & The set of vertices that have an outdated copy at level $i$. \\ \hline
  $F(v)$ & The set of forgotten neighbors of $v$. \\ \hline
  $I(v)$ & The set of ignoring neighbors of $v$. \\ \hline
  $\zlev(v)$ & The lazy level of $v$, with $\zlev(v) \le \Dom(v)$. \\ \hline
  $\view{v}{u}$ & The viewed level of $u$ as stored by $v$. \\ \hline
  $\viewx{v}{u}$ & The viewed value of $u$ as stored by $v$, $\viewx{v}{u} = (1+\epsilon)^{-\view{v}{u}}$. \\ \hline
  $\hat{N}_i[v]$ & The list of vertices $u \in N[v] \setminus \Ign{v}$ that view $v$ at level $i$ (i.e.\ $\view{u}{v} = i$). \\ \hline
  Tight vertex & A vertex with $\hat{\wts}(v) + \phi(v) > \frac{\gamma c_v}{1+\epsilon}$. \\ \hline
  Slack vertex & A vertex that is not tight. \\ \hline
  Active vertex & A vertex with $\lev(v) = \Dom(v)$. \\ \hline
  Passive vertex & A vertex with $\Dom(v) < \lev(v)$. \\ \hline
\end{longtable}

\begin{longtable}{| l | p{13.5cm} |}
  \caption{Glossary of the notation relevant during the $\reset(k)$ subroutine.}\label{table:glossary-rebuild}\\
  \hline
  \textbf{Notation} & \textbf{Definition} \\ \hline
  \endfirsthead

  \hline
  \textbf{Notation} & \textbf{Definition} \\ \hline
  \endhead

  \hline
  \endfoot

  $\old$ & A superscript denoting values at the beginning of $\reset(k)$. \\ \hline
  $\mathcal{C}$ & The set of clean vertices, i.e.\ active vertices $v$ such that $\lev(v) \le k$ and passive vertices $v$ such that $\zlev(v) \le k$ and $\lev(v) \le k+1$. \\ \hline
  $\mathcal{D}$ & The set of dirty vertices, i.e.\ passive vertices $v$ such that $\zlev(v) \le k$ and $\lev(v) > k+1$. \\ \hline
  $\mathcal{D}'$ & The set of dirty vertices $v \in \mathcal{D}$ that were not filtered out during \cref{step:filter-dirty}. \\ \hline
  $F'(v)$ & The list of ``need-to-be-forgotten'' neighbors of $v$. \\ \hline
  $S$ & The set of vertices with $|F(v)| + |F'(v)| > (1+\epsilon)\alpha$. \\ \hline
  $T'$ & The set of vertices that were tight and at level at most $k$ before the rebuild, i.e.\ $T' = T^{\old}_{\le k}$. \\ \hline
  $f_v$ & The size of $F^{\old}(v)$, i.e. $f_v = |F^{\old}(v)|$. \\ \hline
  $f'_v$ & The size of $F'(v)$, i.e. $f'_v = |F'(v)|$. \\ \hline
\end{longtable}

\end{document}